\newtheorem{theorem}{Theorem}[section]
\newtheorem{lemma}{Lemma}[section]
\newtheorem{assumption}{Assumption}[section]
\newtheorem{definition}{Definition}[section]
\DeclareMathOperator{\plim}{plim}
\DeclareMathOperator*{\argmax}{arg\,max} 
\begin{document}

\title{Using spatial modeling to address covariate measurement error}
\author{Susanne M. Schennach\thanks{%
smschenn@brown.edu, Department Box B, Dept. of Economics, Brown University,
Providence RI 02912.}\ \thanks{%
Support from NSF grant SES-1950969 is gratefully acknowledged. The authors
would like to thank Florian Gunsilius and Stelios Michalopoulos for useful
comments.}\ \ and Vincent Starck\thanks{%
V.Starck@lmu.de, Institut für Statistik LMU München, Ludwigstr. 33, 80539 München, Germany.}\ \thanks{Financial support from the European Research Council (Starting Grant No. 852332) is gratefully acknowledged.}}
\maketitle

\begin{abstract}
We propose a new estimation methodology to address the presence of covariate
measurement error by exploiting the availability of spatial data. The
approach uses neighboring observations as repeated measurements, after
suitably controlling for the random distance between the observations in a
way that allows the use of operator diagonalization methods to establish
identification. The method is applicable to general nonlinear models with
potentially nonclassical errors and does not rely on a priori distributional
assumptions regarding any of the variables. The method's implementation
combines a sieve semiparametric maximum likelihood with a first-step kernel
estimator and simulation methods. The method's
effectiveness is illustrated through both controlled simulations and an
application to the assessment of the effect of pre-colonial political
structure on current economic development in Africa.

\noindent \textbf{Keywords}: Errors-in-variables, Economic development, Operator
methods, Spatial statistics.

\noindent\textbf{JEL codes}: C18, C31, C36
\end{abstract}

\thispagestyle{fancy}

\newpage

\section{Introduction}

With the increasing availability of Graphical Information System (GIS) data %
\citep{zhou2017spatial} and network data \citep{paula:econet}, spatial
econometrics \citep{pinkse2010future, redding2017quantitative} is becoming
an increasingly influential field. Further, spatial setups readily
generalize to more abstract spaces, with the spatial dimensions representing
individual or product characteristics, and the increasing availability of
rich datasets with suitable covariates enables this avenue of research.

This paper identifies another advantage provided by the use of spatial
datasets. The inherent redundancy provided by numerous nearby observations
in spatial frameworks generates information that can be used to correct for
covariate measurement error and achieve consistency without requiring
additional information such as validation data or the knowledge of the
measurement error distribution. The method is very generally applicable, as
it allows for nonlinear models as well as non-classical measurement error %
\citep{schennach:annrev}. This is made possible in part by leveraging
identification results from \citet{hu2008instrumental} and %
\citet{hu2008identification} and in part by devising a scheme to generate
``virtual'' observations that can act as repeated measurements, from the
information provided by the observed sample.

Our approach is to be contrasted to others developed within the Kriging
literature (\citet{krige:thesis}, \citet{chiles:kriging}). Kriging is a
common method to carry out inference regarding spatial quantities in between
available measurements. While this approach has been extended to allow for
measurement error (e.g., \citet{cressie:kriking}), most of this line of
research does not consider the implications of using the mismeasured data as
a covariate. Methods that do consider covariates tend to rely on
distributional assumptions and linearity (e.g., \citet{szipiro:misalligned})
or achieve bias reduction but not consistency (e.g., %
\citet{carroll:spatialsimex}).

While the approach we take is reminiscent of using lags or leads as repeated
measurements in the context of time series or panel data econometrics %
\citep{hu2008nonparametric, cunha2010estimating, griliches:panel}, a
corresponding approach in a spatial framework is not currently available,
due to significant conceptual and algorithmic challenges. Unless the spatial
data happens to lie on a fixed grid (a rare occurrence), there is no spatial
analog of a fixed time-shift, since the spacing between data points is a
random quantity.\footnote{%
Although there is long tradition of using neighboring observations as
instruments in the spatial or network literature (e.g., %
\citet{prusha:spatialiv}, \citet{bramoulle:peernet}), it is well-known that
instruments cannot be used to correct for measurement error in general
nonlinear models \citep{Amem:IV}. Furthermore, such instruments cannot
simply be converted into suitable repeated measurements, because the
variable distance between the observations causes an unknown bias in the
measurement error that is difficult to account for.} This randomness
generally invalidates the use of neighboring observations as proper repeated
measurements.

We propose to overcome this challenge by expressing the joint density of the
dependent variable, the mismeasured variable, and its value at a neighboring
point, as a function of the distance to the neighboring point. This approach
enables us to condition on a fixed distance to generate a virtual repeated
measurement with statistical properties suitable to play the role of the
counterpart of a fixed lag repeated measurement. We show that any given
fixed distance permits the identification of the model, but efficiency
considerations suggest the use of a weighted average of estimators coming
from different distances. The effectiveness and feasibility of this approach
is demonstrated through a controlled simulations study.

The estimator is applied to provide further corroboration to an important
study \citep{michalopoulos2013pre} seeking to quantify whether pre-existing
political structures of ethnic groups in the pre-colonial Africa still have
a significant impact on contemporary economic development. The main
descriptor of the political structure is a measure of centralization of
political power (i.e., whether decisions are made at a very local level in a
decentralized fashion or at a broader level in a centralized fashion).

The conclusions of this study, however, rest on the accuracy of such
estimated centralization measures. Our approach specifically enables us to
quantify the relevant error distributions and obtain measurement
error-robust estimates by exploiting the spatial nature of the data to
construct repeated measurements of centralization using data points in the
geographical vicinity of each observation. Remarkably, our results reinforce
those of the authors by uncovering an even stronger relationship between
pre-colonial centralization and contemporary development. This points to a
significant potential for our method to circumvent measurement error issues
in a broader range of similar applications.

The paper is organized as follows. Section 2 describes the setup, its
motivation and establishes identification, section 3 discusses the estimator
and its implementation, section 4 provides simulations to assess the
performance of the estimator, section 5 applies our estimator to the study
of political complexity on current economic development, and section 6
concludes.

\section{Setup and Identification}

Throughout the text, we denote random variables (or random functions) by
upper case letters, while the corresponding lower case letter denotes
specific values. We also denote (conditional) densities by $f$ with suitable
random variable subscripts and assume their existence, relative to a
suitable dominating measure.

We consider a spatial setup, denoting (potentially abstract\footnote{%
Abstract location examples could include product or individual
characteristics. In \textquotedblleft big data\textquotedblright\ settings,
low-dimensional abstract location variables could be extracted from
high-dimensional covariates through linear \citep{jolliffe:pcamarket} or
nonlinear \citep{schennach:nlfact, gunsilius2023independent} component analysis.}) locations
by $S$, taking values in some set $\mathcal{S}\subset \mathbb{R}^{d_{s}}$.
The model of interest is 
\begin{equation}
Y(S)=g(X^{\ast }(S))+U(S) \label{eqYvsXs}
\end{equation}%
where $Y(S)$ is the dependent variable, $X^{\ast }(S)$ is an unobserved
(potentially multivariate) regressor, $U(S)$ is the model error. We observe a sample $(S_{i},X_{i}%
\mathrel{\overset{\makebox[0pt]{\mbox{\normalfont\tiny\sffamily def}}}{=}}%
X(S_{i}),Y_{i}\mathrel{\overset{\makebox[0pt]{\mbox{\normalfont\tiny%
\sffamily def}}}{=}}Y(S_{i}),i=1,...,n)$ where $X(S)$ is an
error-contaminated version of $X^{\ast }(S)$: 
\begin{equation}
X(S)=X^{\ast }(S)+V(S).  \label{generateX}
\end{equation}%
Although, for simplicity, we do not make this explicit in this section,
covariates could be included in our identification analysis
by making all assumptions and densities
conditional on the covariates. We shall re-introduce an explicit dependence on the covariates when considering asymptotic properties.

We are interested in the conditional distribution $f_{Y(s)|X^{\ast
}(s)}(y|x^{\ast })$, which will allow us to recover the function $g$. Since $%
X^{\ast }(s)$ is unobserved due to measurement error, this density is not
directly revealed by the data and its identification will be secured through
availability of repeated measurements. Here we observe that spatial
processes provide natural candidates for repeated measurements for $X(s)$
through neighboring observations $X(s+\Delta s)$, where $\Delta s$ is some
fixed vector-valued shift. Our identification argument relies on one
specific value of $\Delta s$, but, in fact, there are potentially an
infinite number of repeated measurements (for different $\Delta s$), which
can be used to improve efficiency.

In our approach, the disturbances satisfy the following:

\begin{assumption}[Exclusion restrictions]
\label{asscondindep} The random variables $Y(s), X(s), X(s+\Delta s)$ are
mutually independent conditional on $X^{*}(s)$ for any $s$ and any $\Delta s$
such that $\Vert \Delta s \Vert > \Delta s_{0}$ for some given known $\Delta s_{0} \geq 0$.
\end{assumption}

The assumption is reasonable if, for instance, the measurement error is due
to devices at different locations, each imperfectly measuring the covariate
(with $\Delta s_{0}=0$), or is due to some local noise whose spatial correlation decays
with distance quicker than that of the underlying process (in which case $\Delta s_{0}> 0$
represents the 'locality' of the error process). In practice, checking
the validity of the assumption hinges on an assessment of the nature of the
measurement error process. 
The fact that the assumption involves a spatial shift $\Delta s$ will allow
us to consider a neighboring observation as a repeated measurement. Note
that, while Assumption \ref{asscondindep} implicitly places restrictions on
the spatial dependence of the measurement error process $V(s)$, we place no
such restrictions on the generating processes of $U(s)$.

To precisely state our identification results, we first require some basic
regularity conditions about the distributions.

\begin{assumption}[Existence of bounded densities]
\label{densities} For a given $\Delta s$, the joint distribution of $Y(s)$
and $X(s)$, $X(s+\Delta s)$ and $X^{*}(s)$, admits a bounded density $%
f_{Y(s),X(s),X(s+\Delta s),X^{*}(s)}$ with respect to a dominating measure
of the form $\mu_Y \times \mu_X \times \mu_X \times \mu_X$ where $\mu_Y$ is
unrestricted while $\mu_X$ could be either the Lebesgue measure or a
discrete measure supported on a finite set of points. All marginal and
conditional densities are also bounded.
\end{assumption}

These conditions on the density allow us to cover both continuous and
discrete $X(s)$ (and $X^{\ast }(s)$), thus covering either measurement error
or misclassification. Although our presentation in the main text covers
these two cases within a common overarching notation, they demand
significantly different treatments both on a theoretical and implementation
level (see \citet{hu2008instrumental} and \citet{hu2008identification}, for
the continuous and discrete cases, respectively), which are reflected in our
formal proofs in Appendix \ref{appproof} and in our implementation procedure.
Few restrictions are placed on the nature of the distribution of $Y(s)$.

We also impose

\begin{assumption}[Centering]
\label{centering} For a known functional $M_x$, we have $M_x[f_{X(s)|X^{*}(s)}( \cdot | x^{*})] = x^{*}$ for any $x^{*}$.
\end{assumption}

This type of assumption is commonly made in the context of nonclassical
measurement error models \citep{hu2008instrumental} and extends standard
conditional mean assumptions to more general centering concepts (e.g. mode,
median or general quantiles). For conciseness, we state here a condition
that is sufficient to transparently cover both the discrete and continuous
cases, although it could be relaxed in the discrete case (see %
\citet{hu2008identification}).

We also require nonparametric analogues of rank conditions, which have a
long history in the nonparametric instrumental variable literature %
\citep{newey:npiv, hallhorowitz:npiv, hu2008instrumental}

\begin{assumption}[Injectivity of operators]
\label{injectivity} The operators $L_{X(s)|X^{*}(s)}$ and $L_{X(s+\Delta
s)|X^{*}(s)}$ are injective, where $L_{B|A}$ is defined through its action
on a function $h$ by $[L_{B|A}h] (b) \mathrel{\overset{\makebox[0pt]{\mbox{%
\normalfont\tiny\sffamily def}}}{=}} \int f_{B|A}(b|a) h(a) d\mu_X(a)$.
\end{assumption}

In the discrete case, this condition reduces to a familiar full rank
condition on the matrices of conditional probabilities $%
f_{X(s)|X^{*}(s)}(x|x^{*})$ and $f_{X(s+\Delta s)|X^{*}(s)}(x|x^{*})$
(indexed by $x$ and $x^{*}$).

For the outcome variable $Y(s)$, a weaker rank-like condition is sufficient:

\begin{assumption}[Outcome variation]
\label{variation} For all $x_{1}^{*} \neq x_{2}^{*}$, the set $\{y :
f_{Y(s)|X^{*}(s)}(y|x_{1}^{*}) \neq f_{Y(s)|X^{*}(s)}(y|x_{2}^{*})\}$ has
positive probability.
\end{assumption}

\citet{hu:finmix} observe that, in the discrete case, these conditions
provide easily verifiable conditions that reach Kruskal's minimum rank bounds for the identification of discrete
probability models defined in terms of three-way arrays \citep{kruskal:3way}%
. As noted in \citet{schennach:annrev}, in the continuous case, these two
conditions also reach a continuous analog of Kruskal's minimum rank bounds.

We are now ready to state our main identification result (proven in
Appendix \ref{appproof}):

\begin{theorem}[Identification]
\label{thmmain} Under assumptions \ref{asscondindep} to \ref{variation}, the
(conditional) densities $f_{Y(s)|X^{\ast }(s)}$, $f_{X(s)|X^{\ast }(s)}$, $%
f_{X(s+\Delta s)|X^{\ast }(s)}$, and $f_{X^{\ast }(s)}$ are identified
(almost everywhere) from the observed joint density $f_{Y(s),X(s),X(s+\Delta
s)}$.
\end{theorem}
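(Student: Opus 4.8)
The plan is to reduce this to the existing identification theorems of Hu and Schennach for measurement error models with three observed variables, so the core task is to verify that the three observables $Y(s)$, $X(s)$, $X(s+\Delta s)$ play the roles required by those results. The key structural observation is that, by Assumption \ref{asscondindep}, these three variables are mutually independent conditional on the latent $X^*(s)$. This is exactly the conditional-independence architecture underlying the operator-diagonalization approach: one of the three (here $X(s+\Delta s)$) serves as an instrument-like variable, $X(s)$ serves as the mismeasured regressor, and $Y(s)$ is the outcome.

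First I would factor the observed joint density using conditional independence. Integrating out the latent variable and using Assumption \ref{asscondindep}, I would write
\begin{equation*}
f_{Y(s),X(s),X(s+\Delta s)}(y,x,\tilde x) = \int f_{Y(s)|X^*(s)}(y|x^*)\, f_{X(s)|X^*(s)}(x|x^*)\, f_{X(s+\Delta s)|X^*(s)}(\tilde x|x^*)\, f_{X^*(s)}(x^*)\, d\mu_X(x^*).
\end{equation*}
Next I would translate this into operator form. Fixing a value $y$ and defining the diagonal operator $\Delta_{y}$ that multiplies by $f_{Y(s)|X^*(s)}(y|\cdot)$, the joint density factors as a product of the conditional-density operators $L_{X(s+\Delta s)|X^*(s)}$ and $L_{X(s)|X^*(s)}$ sandwiching $\Delta_y$ and the marginal $f_{X^*(s)}$. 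Eliminating the latent marginal by taking a ratio against the $y$-integrated (marginal over $Y$) operator yields an operator $L_{X(s),Y(s)=y|X(s+\Delta s)} \, L_{X(s)|X(s+\Delta s)}^{-1}$ that admits an eigenvalue–eigenfunction decomposition, where the eigenfunctions are the densities $f_{X(s)|X^*(s)}(\cdot|x^*)$ and the eigenvalues are $f_{Y(s)|X^*(s)}(y|x^*)$. Injectivity of the operators (Assumption \ref{injectivity}) is what guarantees the relevant inverses exist and that this diagonalization is well posed.

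The main obstacle, and the step requiring the most care, is establishing uniqueness of the eigendecomposition, since a raw spectral decomposition is ambiguous up to ordering, scaling, and the degenerate-eigenvalue problem. Here I would invoke the roles of the remaining assumptions: Assumption \ref{variation} (outcome variation) ensures that for distinct $x_1^* \neq x_2^*$ the eigenvalues $f_{Y(s)|X^*(s)}(y|x_1^*)$ and $f_{Y(s)|X^*(s)}(y|x_2^*)$ differ on a positive-probability set of $y$, which breaks degeneracy and pins down the pairing of eigenfunctions across different $y$ values; the normalization that each $f_{X(s)|X^*(s)}(\cdot|x^*)$ integrates to one fixes scaling; and the centering condition (Assumption \ref{centering}) fixes the otherwise-free labeling of the latent index $x^*$ by tying each eigenfunction to the specific value $x^* = M_x[f_{X(s)|X^*(s)}(\cdot|x^*)]$. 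Once the eigendecomposition is uniquely determined, $f_{X(s)|X^*(s)}$ is read off from the eigenfunctions and $f_{Y(s)|X^*(s)}$ from the eigenvalues; $f_{X(s+\Delta s)|X^*(s)}$ and $f_{X^*(s)}$ then follow by substituting back into the factorization and inverting the now-known operators. Because Assumption \ref{densities} permits $\mu_X$ to be either Lebesgue or discrete, I would treat the continuous and discrete cases in parallel, citing \citet{hu2008instrumental} and \citet{hu2008identification} respectively for the technical details of the spectral argument in each setting.
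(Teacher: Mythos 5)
Your proposal is correct and follows essentially the same route as the paper: both proofs reduce the statement to Theorem 1 of \citet{hu2008instrumental} (continuous case) and Theorem 1 of \citet{hu2008identification} (discrete case), verifying that Assumption \ref{asscondindep} delivers the required conditional-independence factorization while Assumptions \ref{densities}, \ref{injectivity}, and \ref{variation} map onto their regularity, injectivity, and eigenvalue-separation conditions, with Assumption \ref{centering} resolving the scaling/ordering ambiguity of the latent index (in the discrete case playing the role of Hu's alternative assumptions 2.3--2.7). Your added sketch of the internal operator-diagonalization and eigendecomposition-uniqueness argument accurately summarizes the machinery inside those cited theorems but does not constitute a different approach, since the paper simply invokes them after the same assumption-by-assumption verification.
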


From this result, any model (such as Equation (\ref{eqYvsXs})) that seeks to
determine a relation between $Y$ and $X^{\ast }$ is also identified. The
practical use of this identification result obviously requires the
determination of the density $f_{Y(s),X(s),X(s+\Delta s)}$. When locations
are regularly spaced, $\Delta s$ can be fixed so that knowledge of the
sample $(Y(S_{i}),X(S_{i}),X(S_{i}+\Delta s))$ is sufficient for estimation.
However, as noted earlier, if locations $S_{i}$ have random spacings, there
may not be pairs of observations exactly $\Delta s$ apart from each other.
In this case, we view the density of interest, $f_{Y(s),X(s),X(s+\Delta s)}(y,x,z)$, as a smooth function of $\Delta s$ that can be estimated via kernel smoothing, thanks to the identity:
\begin{equation}
f_{Y\left( s\right) ,X\left( s\right) ,X\left( s+\Delta s\right) }\left( y,x,z\right) =\lim_{h\longrightarrow 0}\int \frac{1}{h}K\left( \frac{u-\Delta s}{h}\right) f_{Y\left( s\right) ,X\left( s\right) ,X\left( s+u\right) }\left( y,x,z\right) du, \label{eqdenscondds}
\end{equation}%
under the
assumption that locations are drawn from some continuous density over space.
In some applications, isotropy can help reduce the dimensionality for
density estimation (in which case $\Vert\Delta s \Vert$ becomes the relevant
argument).
Naturally, this approach relies on a stationarity assumption for estimation:

\begin{assumption}[Stationarity]
The process $(Y(s), X(s))$ is strictly stationary.
\end{assumption}

Although spatial stationarity assumptions have been criticized in
applications \citep{pinkse2010future} due to inherent geographic
inhomogeneities, they have frequently been invoked when establishing spatial
results such as in GMM estimation \citep{conley1999gmm}, central limit
theorems \citep{bolthausen1982central, lahiri2003central}, density
estimation \citep{carbon1997kernel, hallin2004kernel}, and more recently
functional-coefficient spatial autoregressive models %
\citep{sun2016functional, sun2018estimation}.

Furthermore, the stationarity requirement can be substantially weakened by
viewing the density of interest as a conditional density 
\begin{equation}
f_{Y(s), X(s), X(s+\Delta s)|T}(y,x,z|t)
\end{equation}
where $T$ is a position-dependent variable that controls for the source of
the lack of stationarity. All above assumptions and results are then
understood to be conditional on $T$ (which is suppressed in the notation,
for simplicity). For instance, $T$ could be the distance to the nearest body
of water, the degree of a node in graph/network applications or controls for
treatment status or law enactments.\footnote{%
It should be stated that high-dimensionality of $T$ may have an impact on
estimation accuracy, due to the data needs associated with high-dimensional
density estimations. In practice, dimensionality of $T$ may thus be limited
by the size of the available data.}

It is even possible, in principle, to fully relax stationarity by
partitioning the space of $S$ through a grid of resolution $b$ and letting $T $ denote which grid \textquotedblleft box\textquotedblright\ point $S$
belongs to. If we let $b\rightarrow 0$ as $n\rightarrow \infty $,
stationarity conditional on $T$ will hold asymptotically under suitable
regularity conditions regarding the generating process. It is also possible
to replace partitioning into boxes by suitable kernel smoothing. For either
approaches, the key variance-bias trade-off to achieve is to simultaneously
ensure that (i) the number of observations within a region of linear extent $%
b$ still goes to infinity as sample size grows and (ii) the changes in the
distribution due to dependence on $T$ becomes asymptotically negligible
within a region of linear extent $b$. These considerations will typically
require, respectively, that $nb\longrightarrow \infty $ and that $b=o\left(
n^{-1/4}\right) $ along with twice differentiability of the dependence of
the density of all variables on $T$. We however leave a formal analysis of
these extensions, along with all necessary regularity conditions, for future
work to avoid obscuring the main ideas.

\section{Estimator and Implementation}

Estimation is based on the identity 
\begin{eqnarray}
& & f_{Y(s),X(s),X(s+\Delta s)}(y,x,z)  \label{eqfactor} \\
&=& \int f_{Y(s)|X^{*}(s)}(y|x^{*}) f_{X^{*}(s)}(x^{*})
f_{X(s)|X^{*}(s)}(x|x^{*}) f_{X(s+\Delta s)|X^{*}(s)}(z|x^{*}, \Delta s) d
\mu_X(x^{*}),  \notag
\end{eqnarray}
implied by conditional independence (Assumption \ref{asscondindep}). Theorem %
\ref{thmmain} implies that this integral equation, for a given left-hand
side density, has a unique solution. Hence, we can use the right-hand side
of (\ref{eqfactor}) to construct an estimator analogous to a maximum
likelihood estimator (MLE) in terms of 4 unknown densities to be estimated.
In the misclassification case ($\mu_X$ discrete), the densities $%
f_{X(s)|X^{*}(s)}(x|x^{*})$, $f_{X(s+\Delta s)|X^{*}(s)}(z|x^{*})$ and $%
f_{X^{*}(s)}(x^{*})$ can be parametrized as a matrix (or a vector) of
probabilities, as in \citet{hu2008identification}. In the continuous $\mu_X$
case, the densities are represented by a sieve approximation, as in %
\citet{hu2008instrumental}. 

One important aspect of our approach that is distinct from earlier work
(such as \citet{hu2008instrumental}) is the fact that $X(s+\Delta s)$ is not
a repeated measurement in the usual sense, because we only have access to
its estimated density, not its specific value at each sample point. We
address this by sampling pseudo-observations from the density 
\begin{equation*}
f_{X(s+\Delta s)|Y(s), X(s)}(z | y,x)= \frac{f_{Y(s),
X(s), X(s+\Delta s)}(y,x,z)} {\int f_{Y(s), X(s),
X(s+\Delta s)}(y,x,z) d\mu_X(z)}
\end{equation*}
where the right-hand side can be estimated from kernel smoothing, as in
Equation (\ref{eqdenscondds}), for some pre-specified $\Delta s$. For
estimation purposes, our sample then consists of $Y_i\mathrel{\overset{%
\makebox[0pt]{\mbox{\normalfont\tiny\sffamily def}}}{=}} Y(S_i)$, $X_i%
\mathrel{\overset{\makebox[0pt]{\mbox{\normalfont\tiny\sffamily def}}}{=}}
X(S_i)$ and $Z_i$ drawn from an estimate of $f_{X(s+\Delta s)|Y(s),
X(s)}(z | Y_i,X_i)$ for $i=1,\ldots, n$. One could, of
course, draw multiple pseudo-observations per data point to reduce the
simulation noise, although we did not find this to be necessary in our
application and simulations study. In cases where the data is very dense
along the spatial dimension, it may be possible to directly draw at random
from neighboring observations that lie within an asymptotically vanishing
tolerance $h$ of a given shift $\Delta s$ instead of first estimating a
conditional density. This scheme, however, makes it impossible to exploit
the faster convergence enabled by using higher-order kernels and the
noise-reduction arising from averaging over similar observation pairs at
different locations.

We then use a semiparametric sieve maximum likelihood estimator (MLE)%
\citep{Shen:sieve} of the form: 
\begin{equation}  \label{maximization}
(\hat{\theta}, \hat{\eta}, \hat{f}_{1}, \hat{f}_{2}, \hat{f}_{3}) = \argmax%
_{(\theta, \eta, f_{1}, f_{2}, f_{3})} \sum_{i=1}^{n} \ln
L(Y_i,X_i,Z_i;\theta,\eta,f_1,f_2,f_3)
\end{equation}
where the maximization is performed under suitable constraints detailed
below and where 
\begin{equation}  \label{eqdefL}
L(y,x,z;\theta,\eta,f_1,f_2,f_3) \mathrel{\overset{\makebox[0pt]{\mbox{%
\normalfont\tiny\sffamily def}}}{=}} \int_{\mathcal{X}^{*}}
f_{y|x^*}(y|x^{*}; \theta, \eta) f_{1}(x^{*}) f_{2}(x|x^{*}) f_{3}(z|x^{*}) dx^{*}
\end{equation}
where $\mathcal{X}^{*}$ denotes the support of $X^{*}$. In (\ref{eqdefL}),
the density $f_{Y(s)|X^{*}(s)}(y|x^{*})$ is indexed by $\theta$, the
parameter of interest and $\eta$, some nuisance parameter. In our setup, $%
\theta$ could specify the shape of the function $g$ in Equation (\ref{eqYvsXs}),
while $\eta$ could specify the density of the disturbance $U(S)$ 
(other ways to separate $\theta$ and $\eta$ are possible: for instance, $%
\theta$ could represent an average derivative, while $\eta$ includes both
the density of $U(S)$ and degrees of freedom of $g$ which do not affect the
average derivative. See \citet{hu2008instrumental} for more details). No
such separation is imposed on the remaining densities $(f_1,f_2,f_3)$, which
are all considered nuisance parameters. Note that only $f_3$ depends on the
shift $\Delta s$. For conciseness, we shall suppress this dependence in the notation whenever it is clear from the context. The parameter of interest $\theta$ is considered
finite dimensional, while all other parameters are infinite dimensional and
approximated through sieves in finite samples. This setup reflects most
empirical studies and will enable the development of an asymptotic theory
for asymptotic normality and root-$n$ consistency (in the next section).

The optimization in Equation (\ref{maximization}) must be performed under
some constraints in order to enforce assumptions needed for identification
as well as basic properties of densities. To enforce nonnegativity
constraints, we actually model the square root of densities, so that their
respective squares are automatically positive: 
\begin{equation}
f_{1}^{\frac{1}{2}}(x^{*}) = \sum_{i=1}^{i_{n}+1} \alpha_{i} p_{i, 1}(x^{*})
= \boldsymbol{p}_{1}(x^{*})^{\prime}\boldsymbol{\alpha}
\end{equation}
\begin{equation}
f_{2}^{\frac{1}{2}}(x|x^{*}) = \sum_{i=1}^{i_{n}+1} \sum_{j=1}^{j_{n}+1}
\beta_{ij} p_{i, 2}(x-x^{*}) q_{j}(x^{*}) = \boldsymbol{p}%
_{2}(x-x^{*})^{\prime}\boldsymbol{\beta} \boldsymbol{q}(x^{*})
\end{equation}
\begin{equation}
f_{3}^{\frac{1}{2}}(z|x^{*}) = \sum_{i=1}^{i_{n}+1} \sum_{j=1}^{j_{n}+1}
\gamma_{ij} p_{i, 3}(z-x^{*}) q_{j}(x^{*}) = \boldsymbol{p}%
_{3}(z-x^{*})^{\prime}\boldsymbol{\gamma} \boldsymbol{q}(x^{*})
\end{equation}

Let $x^{\ast }\in \lbrack 0,l_{x}]$, $(x-x^{\ast })\in \lbrack -l_{1},l_{1}]$%
, and $(z-x^{\ast })\in \lbrack -l_{2},l_{2}]$. With Fourier series, we have 
\newline
$p_{k,1}(a)=\ \cos (\frac{k2\pi }{l_{x}}a)$ or $p_{k,1}(a)=\ \sin (\frac{%
k2\pi }{l_{x}}a)\ \forall k>1$ for the univariate density while \newline
$p_{k,m}(a)=\ \cos (\frac{k\pi }{l_{m}}a)$ or $p_{k,m}(a)=\ \sin (\frac{k\pi 
}{l_{m}}a)\ \forall k>1$ and $m$ $\in \{1,2\}$, and $q_{k}(a)=\ \cos (\frac{%
k\pi }{l_{x}}a)$. $f(y_{i}|x^{\ast })$ can be specified similarly or be
fully parametric. In the following, we use both cosines and sines in numbers 
$\frac{i_{n}}{2}$ each (the first of the $(i_{n}+1)$ terms being the
constant). The number of terms in each series expansion can be determined
using existing data-driven methods (e.g. \citet{schennach:berkson}, %
\citet{laan:mlecv}).

Since non-negativity constraints are automatically satisfied by squaring, $%
M_{x}[f_{2}(\cdot |x^{\ast })]=x^{\ast }$ and densities integrating to 1 remain to enforce. We
proceed as follows. Consider for instance the constraint that the density $%
f_{2}=(\sum_{i=1}^{i_{n}+1}\sum_{j=1}^{j_{n}+1}p_{i}\Lambda _{ij}q_{j})^{2}$
(dropping the potential RHS subscript and arguments to ease notation)
integrates to 1. In matrix form, this reads $f_{2}=(p^{\prime }\Lambda
q)^{2}=q^{^{\prime }}\Lambda ^{^{\prime }}pp^{^{\prime }}\Lambda q$ so that,
if one uses an orthonormal basis, 
\begin{equation}
\int f_{2}(x|x^{\ast })\ dx=q^{^{\prime }}\Lambda ^{^{\prime }}I\Lambda
q=(q^{\prime }\otimes q^{\prime })vec(\Lambda ^{^{\prime }}\Lambda )
\end{equation}%
For the vector of orthogonal functions $B(x^{\ast })=[1\ \cos (x^{\ast })\
...\ \cos (2j_{n}x^{\ast })]^{\prime }$ and the transformation $T$ that
satisfies $TB(x^{\ast })=q(x^{\ast })\otimes q(x^{\ast })$, we obtain the
restriction $B^{\prime }(x^{\ast })T^{\prime }vec(\Lambda ^{^{\prime
}}\Lambda )=1$, i.e. $[T^{\prime }vec(\Lambda ^{^{\prime }}\Lambda )]_{11}=1$
and $[T^{\prime }vec(\Lambda ^{^{\prime }}\Lambda )]_{k1}=0$ for $k>1$.
Other constraints can be treated similarly; after a bit of algebra, it is
simple to implement the constraint brought by the functional, whether the
expected value, the median, the mode, or a percentile.

Solving the optimization problem (\ref{maximization}) subject to the
constraints delivers $\hat{\theta}_{\Delta s}$ for the chosen $\Delta s$.
Although any single nonzero value of $\Delta s$ delivers a consistent
estimator, its efficiency can be improved by combining the information
provided by all other distances. Since kernel estimates at two nearby points
are asymptotically uncorrelated, an asymptotically optimal linear
combination of the different $\hat{\theta}_{\Delta s}$ simply involves
weights inversely proportional to the variance of the corresponding
estimators. This approach is supported by our simulation experiments in
finite sample, which reveal only weak correlation between the estimation
errors of estimators based on different distances. Naturally, to ensure that
this asymptotic behavior is reached, it is recommended that the spacing
between the different $\Delta s$ be selected so that it converges to zero
slower than the bandwidth does, as sample size grows.

While it is beyond the scope of this work to provide a fully formally
justified data-driven $\Delta s$\ selection method, we can nevertheless
provide some guidance to practitioners. For simplicity, consider the case
where suitable $\Delta s$\ are selected purely based on a criterion of the
form $\left\Vert \Delta s\right\Vert \in \left[ \Delta s_{\min },\Delta
s_{\max }\right] $.

Start with $\Delta s_{\max }=\Delta s_{\min }=2h$, where 
$h$\ denotes the bandwidth used to smooth along $\Delta s$.\ Then gradually
increase $\Delta s_{\max }$ (while keeping $\Delta s_{\min }$\ fixed) in
steps of $2h$ and monitor the decrease in the estimated standard error on $%
\hat{\theta}$. The steps of length $2h$\ are motivated by the fact that
point estimates obtained with $\Delta s$\ that differ by that amount should
be roughly uncorrelated. Continue until a further increase in $\Delta s_{\max
} $\ either increases the standard error or only provides negligible
improvement. Next, increase $\Delta s_{\min }$\ (while keeping $\Delta
s_{\max }$\ fixed) in steps of $2h$ until finding two consecutive values of
$\Delta s_{\min }^{1},\Delta s_{\min }^{2}$\ yielding corresponding point
estimates $\hat{\theta}_{\Delta s_{\min }^{1}},\hat{\theta}_{\Delta s_{\min
}^{2}}$ that differ by less than a given small fraction $\phi $ of the
estimated standard error $\hat{\sigma}_{\hat{\theta}_{\Delta s_{\min }^{2}}}$
on $\hat{\theta}_{\Delta s_{\min }^{2}}$\ and report $\hat{\theta}_{\Delta
s_{\min }^{2}}$. If no such $\Delta s_{\min }^2$ is found, then iterate the procedure described above, now with $\Delta s_{\min }$ set to the largest $\Delta s_{\min }^2$ considered so far and again increase $\Delta s_{\max }$, etc.

The rationale for this approach is that the adjustment of $%
\Delta s_{\max }$ seeks to optimize the variance, while the adjustment in $%
\Delta s_{\min }$\ aims to control the bias. In some sense, the optimal
choice of $\Delta s_{\max }$\ is not that critical, as only efficiency could
suffer. The choice of $\Delta s_{\min }$\ is more important and enforces the
practitioner's tolerance for bias via the threshold $\phi $, which should be
sufficiently small so that statistically significant findings would not be
overturned if the bias were indeed this large.

\section{Inference}

Our estimator's hybrid nature (i.e. with $Z_{i}$ drawn from a kernel density
estimator fed into a sieve semiparametric MLE) makes its asymptotic analysis
much more involved than an application of standard results on sieve MLE and
complicates an explicit calculation of its asymptotic variance. To address
this, we establish that the construction of the $Z_{i}$ still yields an
estimator that admits an asymptotically linear representation, provided that
the corresponding (infeasible) sieve estimator with observed $Z_{i}$ has
that property. This result, stated formally below, will simultaneously
ensure asymptotic normality, root $n$ consistency, and asymptotic validity
of the bootstrap for our estimator.

To state our main asymptotic result, we define a profiled likelihood that
focuses on the parameter $\theta $ of interest:%
\begin{equation}
\mathcal{L}\left( \theta ,f\right) =E\left[ \ln L\left( Y,X,Z,W;\theta
,\omega \left( \theta \right) \right) \right]   \label{eqprofL}
\end{equation}%
for%
\begin{equation*}
\omega \left( \theta \right) =\arg \max_{\omega \in \Omega }E\left[ \ln
L\left( Y,X,Z,W;\theta ,\omega \right) \right] 
\end{equation*}%
with $Z$ distributed according to the conditional density $f_{X(s+\Delta
s)|Y(s),X(s),W}(z|y,x,w)$ of the repeated measurement,
which we denote by $f_{Z|X,Y,W}$ for simplicity (or simply $f$ when the
context avoids any confusion). The parameter $\omega \equiv \left( \eta
,f_{1},f_{2},f_{3}\right) $ denotes all the nuisance parameters, which
belong to some set $\Omega $ imposing suitable regularity conditions. Let $%
\theta _{0}$ and $f_{0}$ denote the true values of $\theta $ and $f$,
respectively. We explicitly include the possible dependence of the
likelihood function and the density $f_{Z|X,Y,W}$ on a vector of observed
covariates $W$. These can be incorporated into the definition of our
likelihood (Equation (\ref{eqdefL})) by conditioning all densities on the
covariates.

The empirical counterpart of (\ref{eqprofL}) is:%
\begin{equation}
\mathcal{\hat{L}}\left( \theta ,\hat{f}\right) =\frac{1}{n}\sum_{i=1}^{n}\ln
L\left( Y_{i},X_{i},Z_{i},W_{i};\theta ,\hat{\omega}\left( \theta \right)
\right) 
\end{equation}%
for%
\begin{equation*}
\hat{\omega}\left( \theta \right) =\arg \max_{\hat{\omega}\in \Omega _{n}}%
\frac{1}{n}\sum_{i=1}^{n}\ln L\left( Y_{i},X_{i},Z_{i},W_{i};\theta ,\hat{%
\omega}\right) 
\end{equation*}%
with $Z_{i}$ drawn from the estimated density $\hat{f}\equiv \hat{f}%
_{Z|X,Y,W}$ and the maximum is taken over a sample-size dependent sieve
space $\Omega _{n}$ (as described in the previous section). We define $\hat{%
\theta}=\arg \max_{\theta }\mathcal{\hat{L}}\left( \theta ,\hat{f}\right) $,
for some estimated $\hat{f}$.

To accommodate possible covariates, we allow $\hat{f}$ to depend on an
estimated finite dimensional nuisance parameter $\hat{\kappa}$ (whose true
value is denoted $\kappa _{0}$). We let $\kappa $ be finite dimensional to
reflect the fact that most empirical researchers would want to include
covariates through a parametric model to mitigate a possible curse of
dimensionality. In the same spirit, the dependence on $W$ and $\kappa $ is
assumed to have an index structure (where it is understood that the
definition of $Z$ is $\Delta s$-dependent):

\begin{assumption}
\label{assindex}The variables $Y,X,Z$, are generated through:%
\begin{equation}
(Y,X,Z)=G\left( (\tilde{Y},\tilde{X},\tilde{Z}) ,W,\kappa \right),   \label{eqgenV}
\end{equation}%
where $G$ is a known link function (depending on an unknown parameter 
$\kappa $) that is one-to-one in its first argument and the $(\tilde{Y},\tilde{X},\tilde{Z})$ are
jointly drawn from an (unknown) density $f_{\tilde{Y}\tilde{X}\tilde{Z}%
}\left( y,x,z\right) $ and independent from $W$.
\end{assumption}

This assumption effectively breaks our model down into a nonparametric component, involving the main variable of interest ($Y,X,Z$) and a parametric component, involving the covariates $W$. This general form is not very restrictive, since the researcher is allowed to specify the dependence on $W$ as flexibly as demanded by the problem at hand, while keeping in mind the usual bias-variance trade-off. This approach aims to provide a practical way to keep under control a possible curse of dimensionality in the presence of many covariates. In Section \ref{secapp}, we provide a specific example of link function.
Readers wishing to consider a covariate-free version of our estimator can simply ignore Assumption \ref{assindex} along with any $W$- and $\kappa$-dependence and take the function $G$ to be identity function in our treatment below.

This structure suggests the following estimation procedure: First, letting $G^{-1}$ denote inverse with respect to the first vector-valued argument $(Y,X,Z)$, we define, for a given spacing $\Delta s$,
\begin{eqnarray}
&&\hat{f}_{\tilde{Y},\tilde{X},\tilde{Z}}\left( \tilde{y},\tilde{x},%
\tilde{z};\Delta s\right) \label{eqfhattilde} \\
&&=n^{-1}h^{-2d_{x}-d_{y}-d_{s}}  
\sum_{i=1}^{n}K_{yxz,s}(h^{-1}\left(
G^{-1}\left( (Y_{i},X_{i},Z_{i}),W_{i},\hat{\kappa}\right) -(\tilde{y},\tilde{x},\tilde{z})\right), h^{-1}\left( \Delta S_{i}-\Delta s\right) ).  \notag
\end{eqnarray}%
for some kernel function $K_{yxz,s}$ and an estimated $\hat{\kappa}$ and where $\Delta S_{i}$ denote spacings observed in the sample.
Typically, $\hat{\kappa}$ is obtained by regressions of simple functions of $X,Y,Z$ onto $W$.\ Next, let%
\begin{equation}
\hat{f}_{Y,X,Z|W}\left( y,x,z|w;\Delta s\right) =\hat{f}_{\tilde{Y},%
\tilde{X},\tilde{Z}}\left(
G^{-1}\left( (y,x,z) ,w,\hat{\kappa}\right); \Delta s\right) \hat{J}\left( y,x,z,w,\hat{\kappa}\right) ,
\label{eqfhatnotilde}
\end{equation}%
where $\hat{J}\left( y,x,z,w,\hat{\kappa}\right) =\left( \det \nabla_{(y,x,z)}^{\prime }G\left( (y,x,z),w,\hat{\kappa}\right) \right) ^{-1}$\ is a Jacobian term. We can similarly
construct an estimator $\hat{f}_{Y,X|W}\left( y,x|w;\Delta s\right) 
$ (with a kernel function $K_{yx}$) and set%
\begin{equation*}
\hat{f}_{Z|Y,X,W}\left( z|y,x,w;\Delta s\right) =\frac{\hat{f}%
_{Y,X,Z|W}\left( y,x,z|w;\Delta s\right) }{\hat{f}_{Y,X|W}\left( y,x|w;\Delta s\right) },
\end{equation*}%
from which the repeated measurements $Z_{i}$ are drawn.

We now provide the basic conditions needed to handle the sieve component of
the estimator. In accordance with the definition of a profiled likelihood,
all gradients with respect to $\theta $ below (denoted by $\nabla $)
incorporate the effect of simultaneous changes in the nuisance parameters
through the function $\omega \left( \theta \right) $ or $\hat{\omega}\left(
\theta \right) $. This approach provides a simple way to formally abstract
away the nuisance parameters from the expansion relevant to the asymptotics
of $\hat{\theta}$. Let $\mathcal{X},\mathcal{Y},\mathcal{Z}$,$\mathcal{W}$
denote the support of $X,Y,Z,W$, respectively\footnote{%
The assumption of rectangular support of $\left( X,Y,Z,W\right) $ is made
purely for notational convenience and can be trivially relaxed.}, while $%
\Theta $ is the parameter space for $\theta $. Let $\mathcal{F}$ denote a
neighborhood of $f_{0}$ (where the sup-norm is used for $f$). With these
definitions in mind, we can now state our key assumptions.

\begin{assumption}[Consistency]
\label{asscons}(i) $\mathcal{L}\left( \theta ,f_{0}\right) $ is uniquely
maximized at $\theta =\theta _{0}$ for $\theta _{0}$ in the interior of $%
\Theta $ with $\Theta $ compact, (ii) $\sup_{\theta \in \Theta }\linebreak[3]%
\sup_{f\in \mathcal{F}}\left\vert \mathcal{\hat{L}}\left( \theta ,f\right) -%
\mathcal{L}\left( \theta ,f\right) \right\vert \overset{p}{\longrightarrow }%
0 $, and (iii) $\mathcal{L}\left( \theta ,f\right) $ is continuous in $f$ at 
$f_{0}$ uniformly for $\theta \in \Theta $.
\end{assumption}

\begin{assumption}[Limiting distribution]
\label{asssieve} (i) $\sup_{\theta \in \Theta ,f\in \mathcal{F}}\left\vert
\nabla \nabla ^{\prime }\mathcal{\hat{L}}\left( \theta ,f\right) -\nabla
\nabla ^{\prime }\mathcal{L}\left( \theta ,f\right) \right\vert \overset{p}{%
\longrightarrow }0$ (ii) $H=\nabla \nabla ^{\prime }\mathcal{L}\left( \theta
_{0},f_{0}\right) $ is invertible, (iii) $\nabla \nabla ^{\prime }\mathcal{L}%
\left( \theta ,f\right) $ is continuous in $f$ at $f_{0}$ uniformly for $%
\theta \in \Theta $ and (iv) $\nabla \nabla ^{\prime }\mathcal{L}\left(
\theta ,f_{0}\right) $ is continuous in $\theta $ at $\theta _{0}$.
\end{assumption}

We deliberately phrase Assumptions \ref{asscons} and \ref{asssieve} in a
high-level form because they arise in the asymptotic analysis of a
conventional sieve MLE estimator and a number of different possible
sufficient conditions are already available in the literature (e.g., %
\citet{hu2008instrumental}, \citet{chen:hb}).
For similar reasons, we remain
agnostic about the underlying location sampling process. As long as a given
interval of values for $\Delta s$ is repeatedly sampled as sample size grows
- allowing density estimation - our analysis is compatible with various
schemes, possibly featuring both infill or increasing domain asymptotics.
Assumption \ref{asscons}(i)
merely restates the conclusion of our earlier identification argument.
Assumptions \ref{asscons}(ii) and \ref{asssieve}(i) only require uniform
consistency and thus follow from uniform laws of large numbers for spatial
data (see \citet{jenish2009central}, who establish laws of large numbers
under mixing and moment conditions and turn them into uniform laws of large
numbers by adding stochastic equicontinuity and dominance). These conditions
are slightly strengthened here (relative to a standard sieve MLE) to account
for an estimated $f_{0}$. Assumptions \ref{asscons}(iii), \ref{asssieve}(ii)
and (iii) do not involve random quantities, hence the spatial nature of the
data is of no consequence. Assumption \ref{asscons}(iii) and \ref{asssieve}%
(iii) ensures that estimation of $f_{0}$ will not degrade the estimator's
properties.

We now use a more primitive formulation for the assumptions that are
specific to our estimator, for instance, those related to the fact that the
distribution of $Z$ is estimated and that the $Z_{i}$ are simulated draws.

\begin{assumption}[Support]
\label{asssupp}(i) $f_{Y,X|W}\left( y,x|w;\kappa \right) \geq \varepsilon >0$
$\forall x,y,w\in \mathcal{X}\times \mathcal{Y}\times \mathcal{W}$ and for $%
\kappa $ in a neighborhood $\mathcal{K}$ of $\kappa _{0}$ and (ii) $\mathcal{%
X},\mathcal{Y},\mathcal{Z},\mathcal{W}$ are compact.
\end{assumption}

To ensure that an estimator of the conditional density $f_{Z|Y,X,W}\left(
z|y,x,w;\kappa \right) $ of interest is well-behaved, it is natural to
require a nonvanishing conditioning density $f_{Y,X|W}\left( y,x|w;\kappa
\right) $. (The conditioning on $w$ is kept because we model the covariate $w
$ parametrically.) This type of \textquotedblleft nonvanishing
denominator\textquotedblright\ assumption is commonly made in the analysis
of semiparametric estimators with estimated densities, but could be relaxed
through tail trimming arguments. We do not consider this extension here,
because the intricate details needed would obscure the main ideas.

We also need to specify the nature of the spatial dependence of the
variables.

\begin{assumption}[Spatial dependence]
\label{assmix}The process $\left( X\left( s\right) ,Y\left( s\right)
,W\left( s\right) \right) $, indexed by $s\in \mathcal{S}$, is stationary
and strongly mixing.
\end{assumption}

This can be relaxed to the weaker mixing conditions given in %
\citet{carbon1997kernel}, but this extension is not spelled out here for
conciseness. Next, the use of kernel estimation is associated with some
familiar assumptions regarding the kernel and the smoothness of the
densities involved.

\begin{assumption}[Kernel]
\label{asskern2}The kernel $K_{yxz}$ is of dimension $d=2d_{x}+d_{y}+d_{s}$,
of order\footnote{%
See Definition \ref{defkorder} in Appendix \ref{appproof} for a formal statement.} $r>d$, satisfies a uniform Lipschitz condition and is
bounded. The bandwidth is selected such that $h=Cn^{-\varepsilon -1/\left(
2r\right) }$ for some $C,\varepsilon >0$. The kernel $K_{yx}$ and associated
bandwidth satisfy a similar assumption with $d=d_{x}+d_{y}+d_{s}$.
\end{assumption}

For simplicity the bandwidth is taken to decay at the same rate along all dimensions, while different prefactors along these dimensions can simply be incorporated in the definition of the kernel function itself.

\begin{assumption}[Density smoothness]
\label{assdensmo}The density $f_{\tilde{Y},\tilde{X},\tilde{Z}}\left( \tilde{y},\tilde{x},\tilde{z};\Delta s\right) $ is $r$ times
uniformly continuously differentiable in all its arguments.
\end{assumption}

Given the index structure, these standard kernel assumptions are augmented
by constraints on the link functions that ensure that the properties of the
estimated density of $\tilde{y},\tilde{x},\tilde{z}$ carry over to that of $y,x,z$.

\begin{assumption}[Link function]
\label{asslink}(i) The link function $G\left( (\tilde{y},\tilde{x},\tilde{z}),w,\kappa
\right) $ is one-to-one in its first argument with
uniformly nonsingular Jacobian. (ii) The inverse of $G\left( (\tilde{y},\tilde{x},\tilde{z}),w,\kappa \right) $ with respect to the first argument is uniformly
continuously jointly differentiable twice in both $\kappa $ and $(\tilde{y},\tilde{x},\tilde{z})$.
\end{assumption}

As the estimator is semiparametric in nature, its asymptotics will depend on
various score functions which need to be sufficiently smooth to
asymptotically eliminate any bias. The following assumption could be phrased
slightly more primitively by explicitly expanding the gradient $\nabla $
with respect to $\theta $, but at the expense of distracting notational
complications.

\begin{assumption}[Score smoothness]
  \label{asssemipar}The expected score
\begin{equation*}
  E\left[ \nabla \ln L\left( G\left( (\tilde{y},\tilde{x},\tilde{z}),W,\kappa _{0}\right) ,W;\theta _{0},\omega \left( \theta _{0}\right) \right) \right]
\end{equation*}
is $r$ times uniformly continuously differentiable in $\tilde{x},\tilde{y}$ (and $\tilde{z}$).
\end{assumption}

The following conditions are needed to account for the simulated nature of $ Z_{i}$ and are simple to verify by inspection.

\begin{assumption}[Generated $Z_{i}$]
\label{asszhat}(i) $\nabla \ln L\left( y,x,z,w;\theta ,\omega \left( \theta
\right) \right) $ is bounded and Lipschitz in $z$ and (ii) $f_{Z|YXW}\left(
z|y,x,w\right) $ is bounded and bounded away from zero over its support.
\end{assumption}

Finally, the preliminary estimator $\hat{\kappa}$ needs to satisfy some
simple condition associated with root $n$ consistency and asymptotic
normality.

\begin{assumption}[Estimator $\hat{\protect\kappa}$]
\label{assnuinf}The estimator $\hat{\kappa}$ admits the asymptotically
linear representation $n^{1/2}\left( \hat{\kappa}-\kappa _{0}\right)
=n^{-1/2}\sum_{i=1}^{n}\psi _{\kappa }\left( Y_{i},X_{i},Z_{i},W_{i}\right)
=O_{p}\left( 1\right) $\ for some known influence function $\psi _{\kappa
}\left( y,x,z,w\right) $.
\end{assumption}

This condition is easy to satisfy, as the estimator $\hat{\kappa}$ would
typically consists of regressions (e.g. of $Y$ on $W$, of $X$ on $W$ and of $%
Z$ on $W$) or, more generally, could be a GMM estimator.

We are now ready to state our main asymptotic result (established in Appendix \ref{appproof}):

\begin{theorem}[Asymptotically linear representation]
\label{thasslin}Under Assumptions \ref{assindex}, \ref{asscons}, \ref%
{asssieve}, \ref{asssupp}, \ref{assmix}, \ref{asskern2}, \ref{assdensmo}, %
\ref{asslink}, \ref{asssemipar}, \ref{asszhat}, \ref{assnuinf} , 
\begin{eqnarray*}
n^{1/2}\left( \hat{\theta}-\theta _{0}\right)  &=&n^{-1/2}\sum_{i=1}^{n}\psi
_{\text{MLE}}\left( Y_{i},X_{i},Z_{i},W_{i}\right)+n^{-1/2}\sum_{i=1}^{n}\psi _{\text{kernel}}\left(
Y_{i},X_{i},Z_{i},W_{i}\right)  \\
&&+n^{-1/2}\sum_{i=1}^{n}\psi _{\text{cov}}\left(
Y_{i},X_{i},Z_{i},W_{i}\right) +o_{p}\left( 1\right) 
\end{eqnarray*}%
where 
\begin{equation*}
\psi _{\text{MLE}}\left( y,x,z,w\right) =-H^{-1}\nabla \ln L\left(
y,x,z,w;\theta _{0},\omega \left( \theta _{0}\right) \right) 
\end{equation*}%
is the usual influence function of a standard sieve semiparametric MLE with
observed $Y_{i},X_{i},Z_{i},W_{i}$, while 
the correction term due to constructing the measurement $Z_{i}$ is
\begin{eqnarray*}
\psi _{\text{kernel}}\left( y,x,z,w\right)  &=&H^{-1} \left( E\left[ \nabla \ln \tilde{L}_{1}\left( G^{-1}\left( (Y,X,Z), W \right)  ;\theta _{0}\right) \right] \right. 
-\nabla \ln \tilde{L}_{1}\left( G^{-1}\left( (Y,X,Z), W \right)  ;\theta _{0}\right)  \\
&&+ E\left[ \nabla \ln \tilde{L}_{2}\left( G^{-1}\left( (Y,X,Z), W \right)  ;\theta _{0}\right) \right] 
\left. -\nabla \ln \tilde{L}_{2}\left( G^{-1}\left( (Y,X,Z), W \right)  ;\theta _{0}\right) \right) 
\end{eqnarray*}%
where
\begin{eqnarray*}
\nabla \ln \tilde{L}_{1}\left( G^{-1}\left( (Y,X,Z), W \right)  ;\theta _{0}\right) &=& E\left[ \nabla \ln L\left( G\left( (\tilde{y},\tilde{x},\tilde{z}%
),W\right) ,W;\theta _{0},\omega \left( \theta _{0}\right) \right) \right]\\
\nabla \ln \tilde{L}_{2}\left( G^{-1}\left( (Y,X,Z), W \right)  ;\theta _{0}\right) &=& E\left[ \nabla L_{2}\left( G_{yx}\left( (\tilde{y},\tilde{x},\tilde{z}%
),W\right) ,W,\theta _{0}\right) \right]
\end{eqnarray*}
in which $G_{yx}\left( (\tilde{y},\tilde{x},\tilde{z}),w\right) $ denotes the $y
$ and $x$ elements of the vector $G\left( (\tilde{y},\tilde{x},\tilde{z}%
),w\right) $ and
\begin{equation*}
\nabla L_{2}\left( y,x,w,\theta _{0}\right) =\int f_{Z|YXW}\left(
z|y,x,w\right) \nabla \ln L\left( y,x,z,w;\theta _{0},\omega \left( \theta
_{0}\right) \right) dz.
\end{equation*}%
Finally,
\begin{equation*}
\psi _{\text{cov}}\left( y,x,z,w\right) =-H^{-1}E\left[ \nabla _{\kappa
}^{\prime }\ln f\left( Z|Y,X,W;\kappa _{0}\right) \nabla \ln L\left(
Y,X,Z,W;\theta _{0},\omega \left( \theta _{0}\right) \right) \right] \psi
_{\kappa }\left( y,x,z,w\right) .
\end{equation*}%
is the correction term for the estimation of the nuisance parameter $\kappa $%
, in which $\psi _{\kappa }\left( y,x,z,w\right) $ denotes the influence
function of the estimator $\hat{\kappa}$ from Assumption \ref{assnuinf}.
\end{theorem}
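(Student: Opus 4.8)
The plan is to treat $\hat\theta$ as an M-estimator defined by the profiled criterion $\hat{\mathcal{L}}(\theta,\hat f)$ and to follow the standard route (consistency, first-order condition, Hessian inversion), while isolating the two nonstandard ingredients --- the kernel-estimated and simulation-drawn $Z_i$, and the estimated index parameter $\hat\kappa$ --- as additive first-order corrections. First I would establish consistency $\hat\theta\overset{p}{\to}\theta_0$: Assumption~\ref{asscons}(i)--(ii) delivers the usual identification-plus-uniform-convergence argument once I verify that the construction in (\ref{eqfhattilde})--(\ref{eqfhatnotilde}) yields $\sup\|\hat f-f_0\|\overset{p}{\to}0$, which follows from Assumptions~\ref{asssupp}, \ref{assmix}, \ref{asskern2}, \ref{assdensmo}, \ref{asslink} together with consistency of $\hat\kappa$ (Assumption~\ref{assnuinf}), using uniform convergence of kernel density estimators under strong mixing (as in \citet{carbon1997kernel}); Assumption~\ref{asscons}(iii) then transfers the maximizer's consistency from $f_0$ to $\hat f$. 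With consistency in hand I would write the first-order condition $\nabla\hat{\mathcal{L}}(\hat\theta,\hat f)=0$ (the profiling, encoded in $\nabla$, removes the nuisance $\omega$ by the envelope property) and take a mean-value expansion in $\theta$, giving $n^{1/2}(\hat\theta-\theta_0)=-\left[\nabla\nabla'\hat{\mathcal{L}}(\bar\theta,\hat f)\right]^{-1}n^{1/2}\nabla\hat{\mathcal{L}}(\theta_0,\hat f)$. Assumption~\ref{asssieve}(i)--(iv) replaces the Hessian by the invertible $H$ up to $o_p(1)$, so the entire problem reduces to finding an asymptotically linear representation of the single score vector $n^{1/2}\nabla\hat{\mathcal{L}}(\theta_0,\hat f)$.

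Second, I would decompose this score into three pieces according to the three sources of randomness it carries: the realized draws $Z_i$, the kernel estimation error in $\hat f$, and the estimation error in $\hat\kappa$. The cleanest device is to add and subtract the infeasible score built from the true conditional density $f_0$ and the true $\kappa_0$. The ``as-observed'' piece --- the score evaluated at the realized $Z_i$ with population nuisance $\omega(\theta_0)$ --- is exactly $-H\,n^{-1/2}\sum_i\psi_{\text{MLE}}$ and is governed by the high-level sieve Assumptions~\ref{asscons}--\ref{asssieve} (it carries, in particular, the simulation noise, since $\psi_{\text{MLE}}$ is evaluated at the actual $Z_i$). The two remaining pieces are the genuinely new content.

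Third, and this is the technical core, I would compute the first-order effect of replacing $f_0$ by $\hat f$ in the law of the $Z_i$. Writing $\hat f_{Z|Y,X,W}$ as the ratio $\hat f_{Y,X,Z|W}/\hat f_{Y,X|W}$ of the index-transformed kernel estimators of Assumption~\ref{assindex}, I would linearize the averaged score in the direction $\hat f-f_0$. Two steps control this. The bias of the kernel estimators is $O(h^{r})$, which is $o(n^{-1/2})$ because the order-$r>d$ kernel and the bandwidth rate $h=Cn^{-\varepsilon-1/(2r)}$ of Assumption~\ref{asskern2} give $h^{r}=C^{r}n^{-r\varepsilon-1/2}=o(n^{-1/2})$, while the smoothness Assumptions~\ref{assdensmo}, \ref{asslink}, \ref{asssemipar} justify the bias expansion and the requisite differentiability of the score integrated against the density. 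The stochastic part of $\hat f-f_0$ is a sample average of kernel terms, so substituting it into the linearized score produces a second-order U/V-statistic; a H\'{a}jek projection under the mixing Assumption~\ref{assmix} shows the degenerate remainder is $o_p(n^{-1/2})$ and the projection yields the influence contribution. Carrying the Jacobian change of variables through, the numerator of the ratio contributes the $\nabla\ln\tilde L_1$ term (the score averaged over $W$ as a function of the transformed $(\tilde y,\tilde x,\tilde z)$), while the denominator, which sees only $(y,x)$, contributes the $\nabla\ln\tilde L_2$ term (with $\nabla L_2$ the score integrated over $z\sim f_{Z|Y,X,W}$); centering each at its expectation, a consequence of the projection being mean-zero, reproduces exactly $\psi_{\text{kernel}}$. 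Assumption~\ref{asszhat} --- boundedness and Lipschitz continuity of $\nabla\ln L$ in $z$ and a nonvanishing $f_{Z|YXW}$ --- is what makes these manipulations and the control of the conditional draws legitimate.

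Finally, the $\hat\kappa$ correction is a routine chain-rule step: $\hat\kappa$ enters the score only through $G^{-1}$ and the Jacobian in the sampling density of $Z$, so a Taylor expansion in $\kappa$ about $\kappa_0$, valid by the twice-differentiability in Assumption~\ref{asslink}(ii), produces the factor $E\!\left[\nabla_\kappa'\ln f(Z|Y,X,W;\kappa_0)\,\nabla\ln L\right]$; substituting the asymptotically linear representation of $\hat\kappa$ from Assumption~\ref{assnuinf} turns this into $n^{-1/2}\sum_i\psi_{\text{cov}}(Y_i,X_i,Z_i,W_i)$. Collecting the three pieces and multiplying by $-H^{-1}$ gives the claimed representation, with the $o_p(1)$ absorbing all higher-order and degenerate remainders. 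I expect the main obstacle to be the third step: keeping the ratio structure, the index change of variables, and the simulation noise simultaneously under control, and in particular verifying that the bandwidth/kernel-order combination is fast enough to annihilate the bias without inflating the variance of the projected term, while showing that the interaction between the estimation error $\hat f-f_0$ and the conditional sampling of the $Z_i$ is itself $o_p(n^{-1/2})$.
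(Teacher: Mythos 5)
Your overall architecture coincides with the paper's own proof: consistency of $\hat{f}$ from uniform kernel rates under mixing followed by Assumption \ref{asscons} and a standard argmax argument; a mean-value expansion of the profiled first-order condition with $\hat{H}\overset{p}{\to}H$ by Assumption \ref{asssieve}; the add-and-subtract split into an as-observed MLE score, a kernel-estimation effect, and a $\hat{\kappa}$ effect; linearization of the ratio $\hat{f}_{Y,X,Z|W}/\hat{f}_{Y,X|W}$ using $o_p(n^{-1/4})$ rates (the paper's Lemma \ref{lemwasass4}); the index change of variables producing $\nabla\ln\tilde{L}_1$ and $\nabla\ln\tilde{L}_2$; and the Taylor step in $\kappa$ giving $\psi_{\text{cov}}$ via Assumption \ref{assnuinf}. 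Two points, however, deserve attention, one of which is a genuine gap.

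The gap is exactly the item you defer to the end as ``the main obstacle'': the interaction between $\hat{f}-f_0$ and the conditional sampling of the $Z_i$. Your decomposition requires evaluating the empirical score at $(f_0,\kappa_0)$ \emph{with the same realized randomness} as at $\hat{f}$, but since the $Z_i$ are drawn from $\hat{f}$, that object is not even defined in your proposal, and asserting the cross term is $o_p(n^{-1/2})$ does not make it so. The paper resolves this with a concrete coupling device: it sets $Z_{i}=\hat{F}_{Z|X,Y,W}^{-1}\left( U_{i}|X_{i},Y_{i},W_{i}\right)$ with the uniforms $U_i$ held fixed as $f$ varies, so that $\mathcal{\hat{L}}(\theta,f)$ is well-defined off $f=\hat{f}$ and the $Z_i$ are Lipschitz in $f$ in sup-norm (using that $f_{Z|YXW}$ is bounded and bounded away from zero, Assumption \ref{asszhat}(ii)); combined with the Lipschitz property of $\nabla\ln L$ in $z$ (Assumption \ref{asszhat}(i)), this delivers the stochastic equicontinuity that makes the empirical-process remainder $o_p(1)$. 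Separately, your H\'{a}jek-projection step is both unnecessary and harder than you suggest: after the ratio linearization the score functional is \emph{linear} in the kernel estimators, so (paper's Lemma \ref{lemnewey}) $\int\hat{f}_{V}(v)g(v)\,dv$ equals $n^{-1}\sum_{i}g_{K}(V_{i})$ exactly, a plain sample average whose deviation from $n^{-1}\sum_{i}g(V_{i})$ is a deterministic $O(h^{r})=o(n^{-1/2})$ bias under the order-$r$ kernel and Assumption \ref{asssemipar}. No degenerate second-order U-statistic ever arises --- which matters, because controlling degenerate U-statistics under spatial mixing is delicate, and the paper's linearity argument deliberately sidesteps it.
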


The conclusion of Theorem \ref{thasslin} is stated in a way such that any
central limit theorems for sample averages involving spatial data (see,
e.g., 
\citet{bolthausen1982central, lahiri2003central, jenish2009central,
jenish2012spatial} for CLT under various types of mixing and moment
conditions) can be freely used to obtain the limiting distribution. If a
resampling approach is preferred, a block bootstrap 
\citep{hall1995blocking,nordman2007optimal,carlstein:blockboot}
approach should be used to account
for the possible spatial dependence. While the existence of an
asymptotically linear representation is the key result that enables
bootstrap validity, formally establishing this requires additional technical
conditions. The simplest of such conditions would be that our Assumptions %
\ref{asscons}-\ref{assnuinf} hold with sample quantities replaced by
bootstrap versions and the population quantities replaced by a sequence of
sample quantities (e.g., paralleling Theorem 23.5 in \citet{vaart:asymp}).
More primitive conditions that do not involve bootstrapped quantities\ can
also be stated in terms of standard results (e.g. \citet{bickel:boot}, as
summarized in Theorem \ref{thbickel} in Appendix \ref{appproof}).

\section{Simulations}

We conduct simulations to assess the performance of our measurement error robust estimator. We generate a correlated Gaussian random field $X^{*}(S)$ on a rectangle subset (130$\times$65) of $\mathbb{R}^2$, on which we then construct $Y(S) = g(X^{*}(S)) + U(S)$ and $X(S) = X^{*}(S) + V(S)$ which are observed at random locations $S = S_{i}, \ i=1,...,n$.

Specifically, $X^*$ has a normal distribution with mean $3.5$, variance $1$, and correlation to its first-order neighbors of 0.6. The correlation is roughly divided by $3$ with each increment in distance. We specify $g(x^*) = \theta_1 + \theta_2 x^*$, and $(\theta_1, \theta_2) = (-3.5, 2)$. The error terms, $U$ and $V$, are normally distributed independently of $X^{*}$ with standard deviations of $1.3$ and $0.8$, respectively. 

A realization of the underlying random fields is depicted in Figure 1. 

\begin{figure}[!ht]
\begin{subfigure}{.35\textwidth}
  \centering
  \includegraphics[width=.9\linewidth]{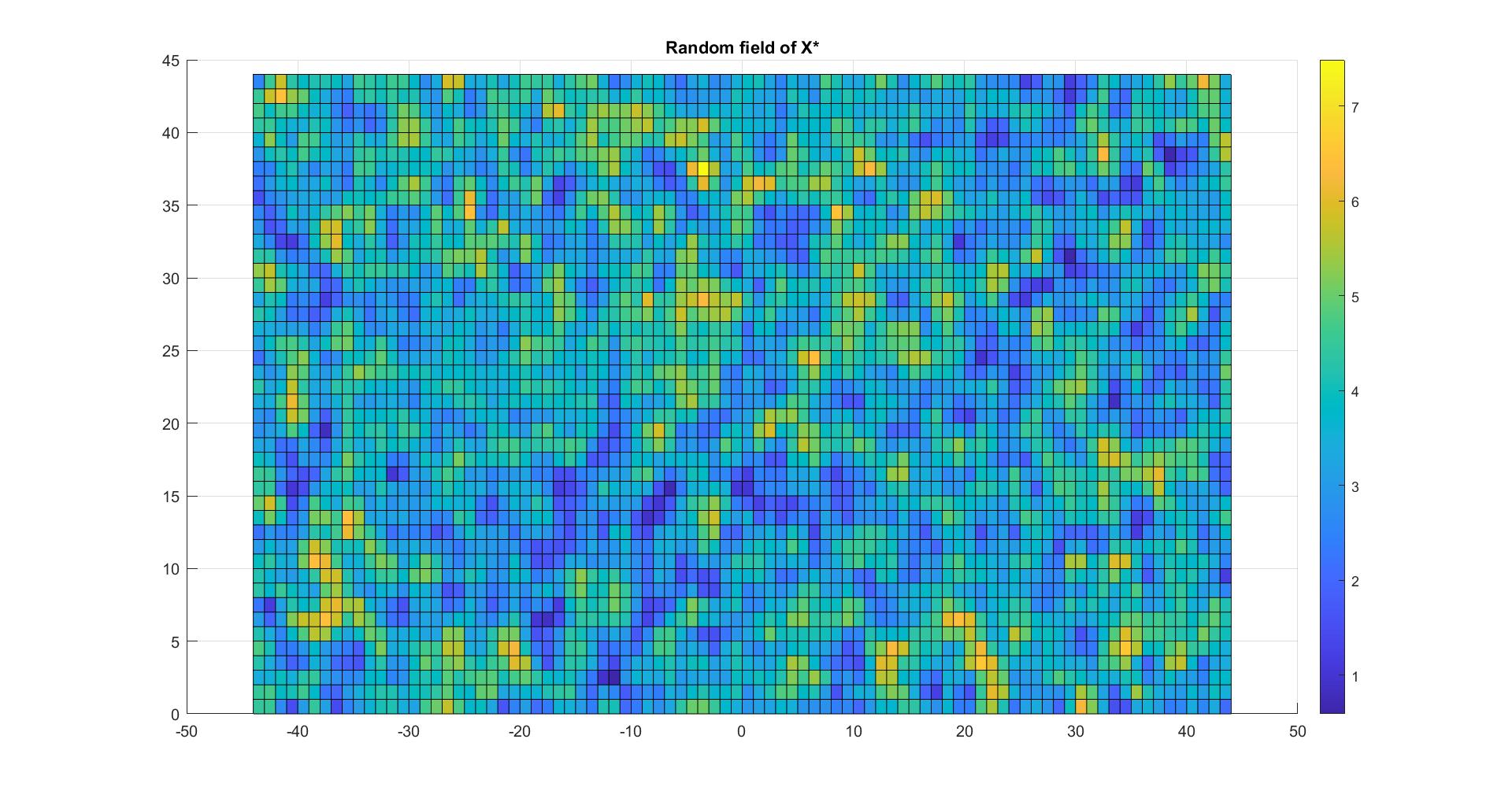}
\end{subfigure}%
\begin{subfigure}{.35\textwidth}
  \centering
  \includegraphics[width=.9\linewidth]{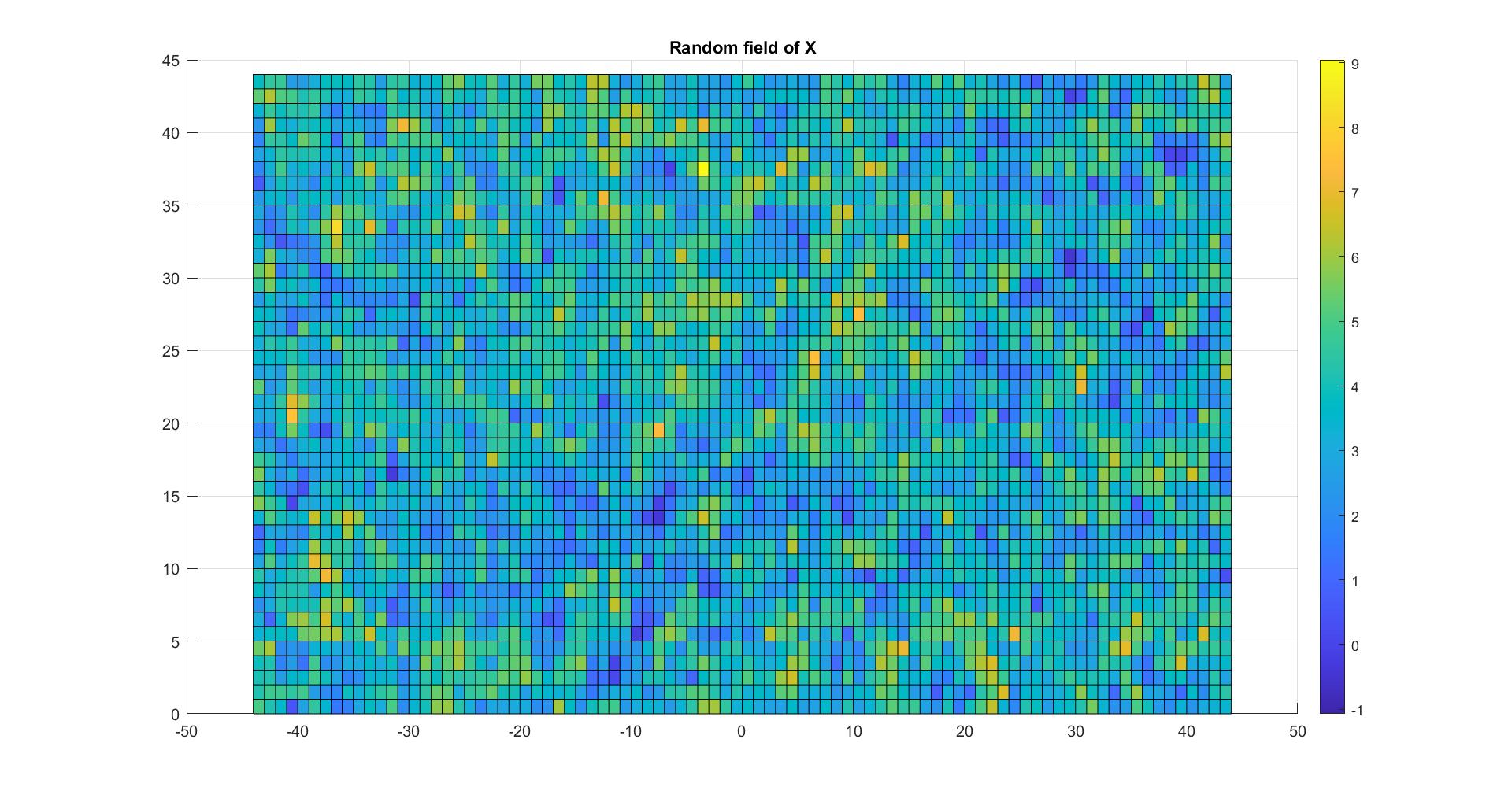}
\end{subfigure}%
\begin{subfigure}{.35\textwidth}
  \centering
  \includegraphics[width=.9\linewidth]{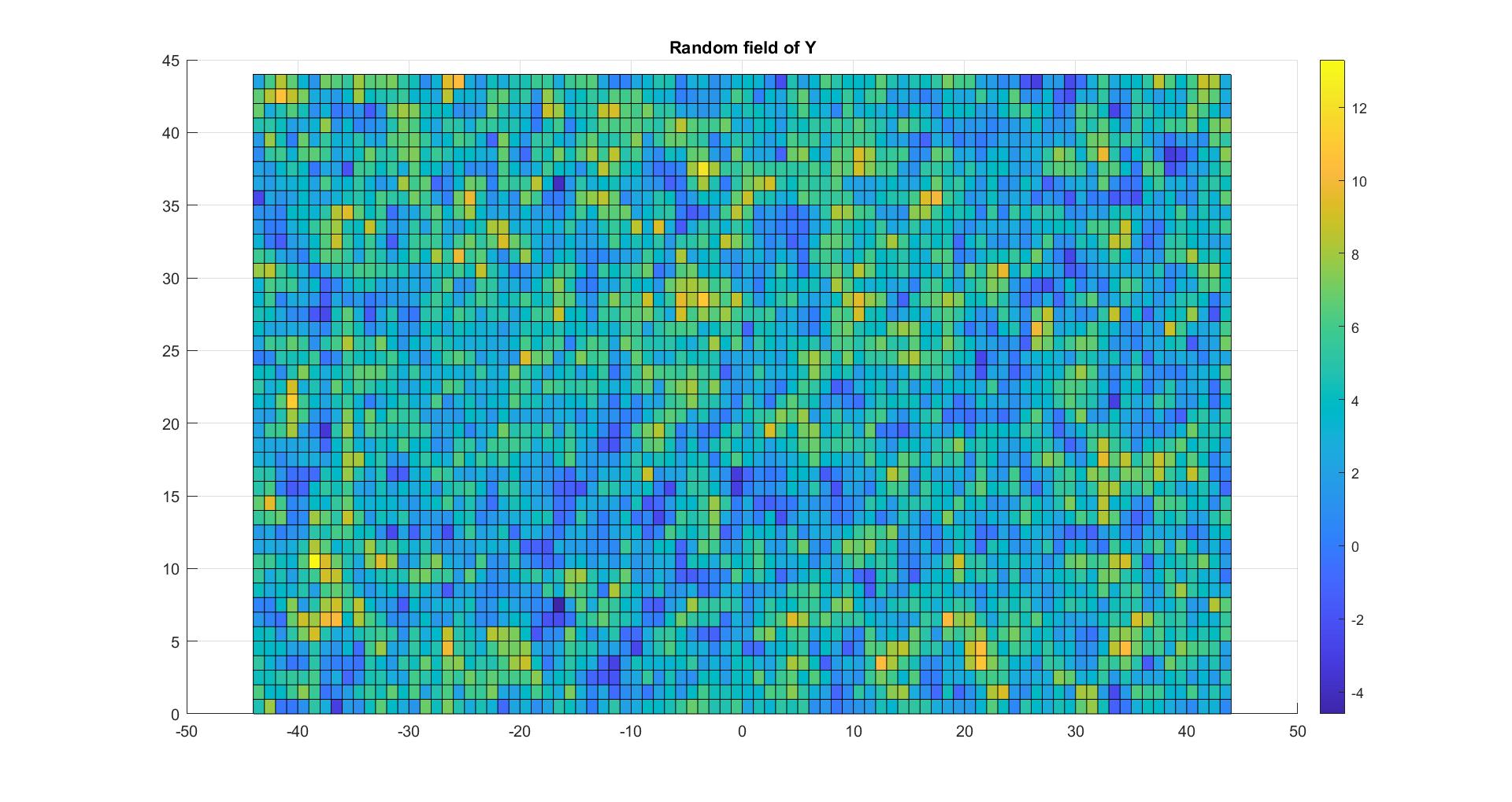}
  \label{fig:sfig3}
\end{subfigure} \\
\caption{Spatial plot of a realization of the underlying random fields. \\ Leftmost: $X^*$; middle: $X$; rightmost: $Y$.}
\end{figure}

We parametrically specify $f(y_{i}|x^{*})$ in the optimization problem and analyze results for $(\theta_1, \theta_2, \sigma_u)$. Two forms of the estimators are tested: a simple, unweighted average over all distances, and the optimal inverse-variance-weighted average. In both cases, the density $f_{Y(s), X(s), X(s+\Delta S)}(y,x,z;\Delta s)$ is estimated by adaptive kernel density estimation based on diffusion processes \citep{botev2010kernel}, which provides a bandwidth selection rule and allows us to construct $f_{X(s+\Delta S)|Y(s), X(s)}(z | y,x;\Delta s)$ and sample pseudo-instruments. 

The number of sieve terms has been chosen by examining the resulting densities and ensuring small variations in the number of sieves do not cause the resulting estimator to vary much. This is in line with the suggestion in \citet{hu2008instrumental} "that a valid smoothing parameter can be obtained by scanning a range of values in search of a region where the estimates are not very sensitive to small variations in the smoothing parameter".
With Section 3's notations, this leads to $i_n = 6$, $j_n=4$ for $f(x|x^*)$, $i_n = 4$, $j_n=4$ for $f(z|x^*)$, and $i_n = 4$ for $f(x^*)$. In appendix \ref{appaddsim}, we report additional simulations for a range of sieves truncation choices -- $i_n=j_n=2k, k=1, 2, 3$ for all densities -- which suggest performance does not depend strongly on the number of sieves terms within the range $4-6$. Lower values appear too rough and reduce the performance of the estimator, while higher values add too much variance and let the number of parameters explode, which also increases computational burden.
These two versions of the estimators are compared to the infeasible OLS that uses the unobserved regressor and to the biased OLS estimator that regresses on the mismeasured regressor. Results are displayed in Tables \ref{tabres1} through \ref{coverage}. 

\begin{table}[!ht]\caption{Simulation Results}
\[\begin{array}{ | l | l | l | l |}
\hline
\text{Parameter }\theta_1\text{, true value:}-3.5 & \mbox{Mean} & \mbox{Standard deviation} & \mbox{RMSE} \\ \hline
	\mbox{Infeasible OLS} & -3.50 & 0.12 & 0.12 \\ \hline
	\mbox{OLS} & -0.77 & 0.15 & 2.74 \\ \hline
        \mbox{IV Nearest-Neighbor} & -3.56 & 0.34 & 0.35 \\ \hline
	\mbox{Unweighted Spatial} &-3.54 & 0.19 & 0.19 \\ \hline
	\mbox{Weighted Spatial}  &-3.53 & 0.19 & 0.19 \\ \hline
\end{array}\]
\[\begin{array}{ | l | l | l | l |}
\hline
\text{Parameter }\theta_2\text{, true value: }\phantom{-}2.0	& \mbox{Mean} & \mbox{Standard deviation} & \mbox{RMSE} \\ \hline
        \mbox{Infeasible OLS} & 2.00 & 0.03 & 0.03 \\ \hline
	\mbox{OLS} & 1.22 & 0.04 & 0.78 \\ \hline
        \mbox{IV Nearest-Neighbor} & 2.03 & 0.10 & 0.10 \\ \hline
        \mbox{Unweighted Spatial} & 2.03 & 0.05 & 0.06 \\ \hline
	\mbox{Weighted Spatial} & 2.03 & 0.05 & 0.06 \\ \hline
\end{array}\]
\[\begin{array}{ | l | l | l | l |}
\hline
\text{Parameter }\sigma_u\text{, true value: }\phantom{-}1.3	& \mbox{Mean} & \mbox{Standard deviation} & \mbox{RMSE} \\ \hline
	\mbox{Infeasible OLS} & 1.30 & 0.02 & 0.02 \\ \hline
	\mbox{OLS} & 1.80 & 0.03 & 0.50 \\ \hline
        \mbox{IV Nearest-Neighbor} & 1.18 & 0.11 & 0.17 \\ \hline
	\mbox{Unweighted Spatial} & 1.19 & 0.07 & 0.13 \\ \hline
	\mbox{Weighted Spatial} & 1.20 & 0.06 & 0.12 \\ \hline
\end{array}\]
\caption*{Simulations with 1500 observations. Infeasible OLS refers to the infeasible OLS estimator using the unobserved true regressor; OLS is the biased OLS estimator using mismeasured covariate; Nearest-Neighbor is the standard IV estimator using the nearest neighbor covariate value as instrument; Unweighted Spatial is our unweighted average spatial estimator; Weighted Spatial is the optimally weighted average spatial estimator.}\label{tabres1}
\end{table}

It is seen --- as expected in presence of substantial measurement error --- that the biased OLS regression using the mismeasured regressor performs poorly, displaying strong attenuation bias.  

Our estimator exhibits significant improvement over OLS in terms of bias. For all parameters, it also attains a Root Mean Square Error (RMSE) that is not much larger than that of the efficient, infeasible, OLS estimator that uses the actual covariate. Of course, a finite-sample bias is expected, especially given the slight misspecification induced by the truncation of the sieve expansion, but its magnitude remains reasonable and far lower than OLS's. Furthermore, the estimator performs similarly to OLS in terms of variance.

Another informative benchmark is the use of the nearest neighbor as an instrument. Because of the linear specification and classical error, this is an alternative to our method in this specific case.
While the instrumental variable approach provides a substantial improvement over inconsistent OLS, our estimator improves accuracy even further. The randomness in nearest neighbor distances is likely the source of the additional variability in the IV estimator that is not present in our approach. Hence, there is little or no cost to using our method over a nearest-neighbor instrument in the linear case.

Note, however, that our method's key advantage is its broader applicability. Indeed, the nearest neighbor IV estimator will generally be inconsistent in nonlinear models since traditional instruments cannot serve as repeated measurements in nonlinear models \citep{Amem:IV}. 
Furthermore, nearest neighboring observations of $X$ cannot play the role of a the variable ``$Z$'' in the \citet{hu2008instrumental} framework, due to the randomly-chosen, unevenly-spaced locations $S$. Our approach avoids these problems.

To illustrate the inconsistency problem associated with using nearest neighbor, we now consider nonlinear models. Letting $g$ be a third-order polynomial with coefficients $\theta = (-3.5, 0.2, 0.2, -0.05)$, we compare the results from applying our method to the nearest-neighbor IV approach. For ease of interpretation, we report the value of the function $g$ at quartiles of the distribution of $x^*$, rather then the coefficients.

The results are displayed in Table \ref{polynomial}. Using the nearest-neighbor as an instrument clearly is an unreliable strategy: the curve is very noisily estimated and systematically biased, especially for values of $x^*$ away from its median. By contrast, our method is able to recover the true curve with minimal bias and good accuracy. 
\begin{table}[!ht]\caption{Polynomial}\label{polynomial}
\[\begin{array}{ | l | r | r | r | }
\hline
\mbox{Quartile} & 1^\text{st} & 2^\text{nd} & 3^\text{rd} \\
\hline
x^* & 2.82 & 3.51 & 4.17 \\
g(x^*) & -2.47 & -2.50 & -2.81 \\
\hline
\mbox{Mean weighted-spatial} & -2.44 & -2.55 & -3.00 \\
\mbox{Mean IV-NN} & -0.73 & -2.79 & -4.80 \\
\hline
\mbox{Std. weighted-spatial} & 0.14 & 0.16 & 0.15 \\
\mbox{Std. IV-NN} & 17.11 & 5.32 & 22.46 \\
\hline
\mbox{RMSE weighted-spatial} & 0.14 & 0.16 & 0.24 \\
\mbox{RMSE IV-NN} & 17.16 & 5.32 & 22.49 \\
\hline
\end{array}\]
\caption*{Performance of our method (weighted-spatial) vs. using the nearest-neighbor as an instrument (IV-NN) in recovering $g(x^*)$ at quartiles of the distribution of $x^*$.}
\end{table}


Perhaps more importantly, the method's low bias enables it to also deliver reliable confidence intervals with bootstrapped standard errors. The (spatial block-) bootstrap procedure is described in Appendix C, with $22 \times 15$ blocks. To speed up calculations, any sample average is pre-computed on each block once, and then any block bootstrap sample averages can be quickly computed from averages over the pre-computed block averages.

Table \ref{coverage} displays the coverage performance of 95\% confidence intervals for all estimators. It appears that our proposed estimator achieves coverage close to 95\% and does not fall far from the infeasible OLS estimators that makes use of the perfectly-measured regressor.

\begin{table}[!ht]\caption{Coverage}\label{coverage}
\[\begin{array}{ | l | l | l |}
\hline
	& \theta_1 & \theta_2 \\ \hline
	\mbox{Infeasible OLS} & 0.95 & 0.94 \\ \hline
    \mbox{OLS} & 0 & 0 \\ \hline
    \mbox{Unweighted spatial} & 0.96 & 0.92 \\ \hline
    \mbox{Weighted spatial} & 0.96 & 0.96 \\ \hline
\end{array}\]
\caption*{Coverage performance of 95\% confidence intervals.}
\end{table}

Further simulations exploring the link between the distance $\Vert \Delta s\Vert$ that determines the choice of instrument and estimation accuracy reveal a non-trivial relationship. Using the RMSE of the estimated $\theta_2$ as a figure of merit, we find that using $\Vert \Delta s\Vert = 1.5$ provides the best result with a RMSE of 0.04, beating both the closer distance of 0.75 (RMSE 0.07) and the larger distance of 1.25 (RMSE 0.09). The reason for this non-monotone behavior is likely that closer observations improve the instrument's strength, while larger distances induce a higher count of observations, which allows a more precise estimate of the conditional density. The analysis of the $\Delta s$-dependence is done here for illustration purposes --- when using a weighted average over a range of $\Delta s$ (as we shall do in our application), there is no need to select a specific $\Delta s$.

While estimators from individual distances can exhibit heavy tails, the presence of outlier estimates is alleviated thanks to averaging over different estimates. For instance in estimating $\theta_2=2$, the first percentile is 1.98 and the $99^{th}$ reaches 2.08. In this example, the intercept is somewhat more prone to outliers; in estimating $\theta_1 = -3.5$, the corresponding percentile figures are -4.15 and -3.51. 

In Appendix \ref{appaddsim}, we report results from additional simulations. These are designed to further assess the robustness of the estimator to variations in our baseline setup, in particular the presence of non-classical measurement error or non-linearities in the outcome equation.
First, we generate the non-classical error-contaminated observed regressor $X$ as a log-normal($\mu$, $\sigma^2$) where the log-normal (conditional on $X^*$) is has mean $\ln(X^*)$\footnote{In the extremely rare event that the argument of the logarithm falls below $0$, it would be truncated to $\ln(0.001)$.} and variance $1/25$ and we implement our estimator with median-centering.
We also consider nonlinear conditional expectations with a standard probit model. Table \ref{mode_nonlin_results} (bias, standard deviation, RMSE) and Table \ref{coverage_results} (coverage) in the Appendix \ref{appaddsim} display the results for these modified simulation designs. They suggest that our estimator performs adequately in such situations as well.

Finally, we investigate the impact of incorporating covariates and resorting to semiparametric estimation. A covariate is generated as $W=(X^* + \mathcal{N}(0, 1))^2/20$ -- and thus is correlated with the unobserved regressor -- and enters the outcome equation additively ($Y=W+g(X^*)+U$). We adopt a semiparametric procedure in which the covariate is allowed to shift location of densities. 
Table \ref{covariates_results} (bias, standard deviation, RMSE) and Table \ref{coverage_results} (coverage) report the results. They are in line with our baseline results, albeit with a general increase in RMSE.

\section{Application}
\label{secapp}

We revisit the influential study of \citet{michalopoulos2013pre} to demonstrate how our approach can effectively deliver measurement-error robust estimation and inference in the context of spatial data, without necessitating additional auxiliary variables, such as instruments or validation data. In this application, the possibility of significant measurement error in a key regressor is an important concern that existing methods have been unable to fully address.

This study investigates the relationship between pre-colonial ethnic political centralization and contemporary development. The underlying motivation is to confirm anecdotal observations that the pre-existence of a complex large-scale political structure within ethnic groups appears to strongly impact economic development, independently of political structures put in place during colonization. The pre-colonial political structure is captured by measures of the extent of 
jurisdictional hierarchy beyond the local level developed by \citet{murdock1967ethnographic}. Obtaining such measures is challenging, as it involves subjective assessments, and is thus prone to misclassification errors, as discussed by \citet{michalopoulos2013pre}. Since this quantity appears as a regressor in the analysis, the possibility of measurement error induced bias must be considered and we consider the observed regressor, $x_i$, to be a mismeasured version of the true regressor, $x_i^*$.

The dependent variable, $y_i$, in this study is economic activity. Given unavailability of comparable economic indicators across African ethnic homelands, the authors employ nighttime artificial light intensity as a proxy for economic activity, in the spirit of \citet{henderson2012measuring}, \citet{elvidge1997relation} and \citet{doll2006mapping}, among others.

Their main regression takes the following form: 
\begin{equation}
    y_i = \beta_0 + \beta_1 x_i^* + w_i' \beta_W + \epsilon_i
\end{equation}
where $y_i$ denotes light density at night, $x_i^*$ is the (correctly-measured) level jurisdictional hierarchy or ``complexity'', taking value in $\{0,1,2,3,4\}$, and $w_i$ is a vector of covariates including population density, location controls (distance to the capital city, distance to the border, and distance to the coast), geographic features (land suitability for agriculture, malaria stability index, land area under water, and petroleum and diamond dummies), and income per capita. Country fixed effects are also considered. 

Results from Table 2 and 3 in \citet{michalopoulos2013pre}, which are partially reproduced in Table \ref{tabmichal}, suggest that a one unit increase in the jurisdictional hierarchy index --- roughly corresponding to a one standard deviation increase --- leads to an increase in light luminosity of 20 \% (with all controls and country fixed effects) to 40\% (without controls) --- corresponding to a 0.1 to 0.2 standard deviation increase. See \citet{michalopoulos2013pre} for details. 

\begin{table}[!ht] \caption{Replicated results}
\label{tabmichal}
\[\begin{array}{ | l | l | l | l | l |}
\hline
	 & \mbox{Coefficient} & \mbox{se} & \mbox{95\% CI lb} & \mbox{95\% CI ub} \\ \hline
	\mbox{No controls} & 0.41 & 0.12 & 0.17 & 0.66  \\ \hline
	\mbox{Controls} & 0.2 & 0.05 & 0.1 & 0.29 \\ \hline
	\mbox{Controls and FE} & 0.18 & 0.05 & 0.08 & 0.27 \\ \hline
\end{array}\]
\caption*{OLS estimate for hierarchy index coefficient on (log) light luminosity; standard errors (se); lower bound (lb) and upper bound (ub) of 95\% confidence interval (CI). FE refers to country fixed effects.}
\end{table}

These results suggest a strong relationship between pre-colonial political complexity and current economic development, and here we seek to ensure that this finding is robust to the presence of misclassification errors. It is also of independent interest to quantify how prevalent classification errors are in such frameworks. We illustrate below how our methodology can deliver on these issues. \medskip

The spatial region comprises geographic locations indexed as an element of $\mathbb{R}^2$, with kilometers as unit of measure. We estimate the spatial autocorrelation of the hierarchical complexity to vary from 0.35 to 0.25 for distances between 10 and 150 kilometers, the complexity process being viewed as isotropic. This supports our identification strategy and we construct $\Vert \Delta s \Vert= j \times 10$ km  for $j =1, ..., 15$ as repeated measurements, after estimating the density through kernel smoothing, as in the simulations. 

Although the correctly classified variable is unobserved, it can be argued that the misclassification is at least mode-preserving \citep{schennach:hb}, i.e. for any true underlying level of complexity, the correct level is more likely
to be reported than any other (incorrect) level. We view the errors as some subjective mis-judgements unlikely to be spatially correlated and invoke assumption 2.1 with $d=0$ (though a robustness check avoiding the smallest $\Vert \Delta s\Vert$ delivers similar results). Combined with repeated measurements provided by the spatial structure, this identifies the distribution $f(y_{i}|x_i^{*})$.

In the sample, the highest level of complexity ($x_i=4$) occurs less than 1\% of the time, making it difficult to estimate probabilities involving the associated event accurately. To alleviate the issue, we pool outcomes $X=3$ and $X=4$ together at the value 3.\footnote{Alternative strategies would be to use the weighted-average value (3.1) or to drop observations with a 4. These options do not materially change the results, as expected by the very low frequency of 4s.} 

While our analysis allows for covariates by considering conditional densities, this particular application poses additional complications due to the large number of controls (more than ten before the inclusion of fixed effects). To account for these numerous covariates $W$, we adapt our nonparametric estimation strategy in the crucial steps of a) the estimation of the distribution of $z$ conditional on the observations to generate pseudo-instruments, and b) the maximum likelihood estimation step. 

For step a), we adopt the link function-based strategy embodied in Assumption \ref{assindex}, with a link function constructed element-by-element as follows.
We first decompose the joint distribution as $f(z,x,y |w) = f(y|w,x,z) f(x,z|w)$.
(i) We then obtain a kernel density estimator on the conditional density of $y$ for all $(x, z)$ under the assumption that dependence on the covariates $w$ takes the form of a location shift, i.e. $y=G_y(x,z,\tilde{y})+w'\kappa_y$, where the kernel approach allows the link function $G_y(x,z,\tilde{y})$ to be nonparametric, while $\kappa_y$ is an unknown parameter and $\tilde{y}$ is a noise term independent from $w$.
(ii) We specify $\begin{pmatrix}z \\ x \end{pmatrix} =t\left[G_{zx}\left(\begin{pmatrix}(w'\kappa_{z}+\tilde{z} \\ (w'\kappa_{x}+\tilde{x} \end{pmatrix}\right)\right]$, where $G_x$ and $G_z$ are known link functions and $t$ is a truncation function that maps to the discrete support of $x$ or $z$ while $\kappa_x,\kappa_y$ are parameters to be estimated by maximum likelihood.
The cutoffs defining the function $t$ are treated as unknown and optimized over, while $G$ is taken to be linear and the noises $\tilde{x},\tilde{z}$ are zero-mean Gaussians with unknown variance matrix; they are independent from $w$. 
(iii) Finally, we recover the conditional distribution of $z$ from the joint and sample pseudo-instruments.

Regarding step b), our log-likelihood, once conditioned on covariates, takes the form $f_{xyz\vert w} = \sum_{x^*=0}^3 f_{y|x^*, w} f_{x|x^*} f_{z|w} f_{x^*|w}$ under the measurement error assumption $f_{x|x^*,w}=f_{x|x^*}$.
Since $z$ and $x^*$ are discrete with few support points and covariates have high dimension, we adopt the link function modeling strategy again for both $f_{z|w}$ and $f_{x^*|w}$ while $f_{x|x^*}$ is left fully nonparametric -- $f_{x|x^*}$ is a 4x4 matrix to estimate with only restrictions that probabilities are nonnegative and sum up to 1, and the mode-centering restriction.

Finally, we obtain bootstrap standard errors as in the simulations, which provided adequate coverage. We choose blocks of size $1250\times 1250$, which comprises less than 10\% of the sample per block as in the simulations.

Applying our measurement-error robust (inverse variance-weighted) spatial estimator yields the results shown in Table \ref{tabmere}.

\begin{table} \caption{Measurement error robust estimates}
\[\begin{array}{ | l | l | l | l | l |}
\hline
	 & \mbox{Coefficient} & \mbox{se} & \mbox{95\% CI lb} & \mbox{95\% CI ub} \\ \hline
	\mbox{No control} & 1.82 & 0.19 & 1.44 & 2.20 \\ \hline
	\mbox{Controls} & 1.31 & 0.29 & 0.74 & 1.88 \\ \hline
	\mbox{Controls and FE} & 0.91 & 0.37 & 0.18 & 1.64 \\ \hline
\end{array}\]
\caption*{Measurement-error corrected estimate for hierarchy index coefficient on (log) light luminosity; standard errors (se) are estimated using a block bootstrap; lower bound (lb) and upper bound (ub) of 95\% confidence interval (CI). FE refers to country fixed effects.}
\label{tabmere}
\end{table}

A regression without additional controls yields a statistically significant estimate of 1.80, a much larger finding than that of the OLS estimator. The coefficient decreases as controls are added, though measurement error robust estimates still point to a stronger influence of political complexity on development than the biased OLS coefficients do.
The use of our measurement error robust estimator also does not come at the cost of statistical significance, as all coefficients remain statistically significant.

Our method also identifies the misclassification matrix, which is reported in tables \ref{tabmismat1}, \ref{tabmismat2}, and \ref{tabmismat3}.\footnote{We report the estimate with the instrument constructed at a distance of $10$km rather than the weighted averages to ensure probabilities sum up to 1. The matrix looks broadly similar for instruments coming from  different distances.} For reference, the estimates of $\mathbb{P}[X=i]$ are 0.26, 0.39, 0.24, and 0.11 for $i=0, 1, 2, 3$, respectively. There appears to be substantial misclassification, regardless of the specification. While extreme misclassifications are less frequent, subjective assessments can often deviate to nearby categories which is reflected in these estimates.

The large differences in the estimated coefficients
between Tables \ref{tabmichal} and \ref{tabmere} further suggests that measurement error could have important effects on the coefficients of interest in this application. This prompts us to formally test if the effect of measurement error is statistically significant. Consider an estimator that is robust to
measurement error ($\hat{\beta}$) and one that is not ($\tilde{\beta}$),
with corresponding influence functions $\hat{\psi}$ and
$\tilde{\psi}$ and corresponding standard errors $\hat{\sigma}$
\ and $\tilde{\sigma}$. A formal test that the presence of
measurement error affects the results can be based on the statistic:%
\[
\frac{\left\vert \hat{\beta}-\tilde{\beta}\right\vert }{\sqrt{\hat{E}\left[
\left( \hat{\psi}-\tilde{\psi}\right) ^{2}\right] }}
\]%
where we consider scalar $\beta $ for simplicity and where $%
\hat{E}\left[ \ldots \right] $ denotes sample averages. To avoid
computing the correlation between the influence functions, consider the
worst-case scenario where the two influence functions are perfectly
negatively correlated: $\tilde{\psi}=-c\hat{\psi}$ for some $c>0$.
The denominator can thus be bounded as:%
\[
\sqrt{E\left[ \left( \hat{\psi}-\tilde{\psi}\right) ^{2}\right] } \leq 
\sqrt{E\left[ \left( \hat{\psi}+c\hat{\psi}\right) ^{2}\right] }=\left(
1+c\right) \sqrt{E\left[ \hat{\psi}^{2}\right] }=\left( 1+c\right) \hat{%
\sigma}=\hat{\sigma}+\tilde{\sigma}.
\]
A valid (but conservative) test can thus be based on comparing the
ratio $\left\vert \hat{\beta}-\tilde{\beta}\right\vert /\left( \hat{\sigma}+%
\tilde{\sigma}\right) $ to standard normal critical values. For the 3
estimates reported in Tables \ref{tabmichal} and \ref{tabmere}, this statistic takes the values $4.5$
, $2.7$ and $1.1$, thus indicating a statistically
significant effect (at 95\% level) of measurement error in most
cases. Overall, the results support the view that measurement error is a major concern in such applications and that our method offers a viable avenue to address this issue. 

\begin{table}[!ht] \caption{$\mathbb{P}[X=i | X^*=j]$ (no control)}
\[\begin{array}{ | l | l | l | l | l | l |}
\hline
     & j=0 & j=1 & j=2 & j=3  \\ \hline
	i=0 & 0.36 & 0.30 & 0.15 & 0.13  \\ \hline
	i=1 & 0.35 & 0.36 & 0.26 & 0.27 \\ \hline
    i=2 & 0.21 & 0.24 & 0.39 & 0.19 \\ \hline
    i=3 & 0.08 & 0.09 & 0.20 & 0.42 \\ \hline
\end{array}\]
\label{tabmismat1}
\end{table}
\begin{table}[!ht] \caption{$\mathbb{P}[X=i | X^*=j]$ (controls)}
\[\begin{array}{ | l | l | l | l | l | l |}
\hline
    & j=0 & j=1 & j=2 & j=3  \\ \hline
	i=0 & 0.31 & 0.26 & 0.20 & 0.16 \\ \hline
    i=1 & 0.31 & 0.34 & 0.13 & 0.34 \\ \hline
    i=2 & 0.22 & 0.24 & 0.50 & 0.09 \\ \hline
    i=3 & 0.16 & 0.16 & 0.18 & 0.42 \\ \hline
\end{array}\]
\label{tabmismat2}
\end{table}
\begin{table}[!ht] \caption{$\mathbb{P}[X=i | X^*=j]$ (controls and FE)}
\[\begin{array}{ | l | l | l | l | l | l |}
\hline
     & j=0 & j=1 & j=2 & j=3  \\ \hline
	i=0 & 0.27 & 0.25 & 0.23 & 0.24 \\ \hline
    i=1 & 0.25 & 0.41 & 0.27 & 0.22 \\ \hline
    i=2 & 0.24 & 0.17 & 0.27 & 0.24 \\ \hline
    i=3 & 0.24 & 0.17 & 0.23 & 0.30 \\ \hline
\end{array}\]
\caption*{Misclassification probability matrices}
\label{tabmismat3}
\end{table}

Overall, our results reinforce those of \citet{michalopoulos2013pre} and, if anything, uncover an even stronger relationship between pre-colonial centralization and current development. Not only are the point estimates of the coefficients larger, but their statistical significance also remains very high. Our proposed approach thus seems to provide a practical and feasible way to address measurement error issues at no extra data collection cost in spatial settings. This capability should prove especially useful in the context of noisy historical data and, more broadly, in any noisy data setting where observation pairs can be assigned a quantifiable notion of ``proximity''. This not only includes geographically tagged data, but also more abstract spaces, such as product or consumer characteristics or network data.

\section{Conclusion}

We have shown that the use of spatial data provides a formal and effective
way to correct for the presence of potentially nonclassical covariate
measurement error in general nonlinear model without relying on
distributional assumptions. Using neighboring observations as repeated
measurements requires carefully controlling for the nonuniform spacing
between observations by constructing the joint distribution of all
measurements as a function of the distance between observations, in order to
ensure that the resulting measurement system satisfies the appropriate
conditional independence restrictions needed to establish identification of
the model.

The method's implementation combines a sieve semiparametric maximum
likelihood with a first-step kernel smoothing estimator and
simulation methods. Monte Carlo simulations suggest that this implementation
performs well at typically available sample sizes.

The method's effectiveness is further illustrated by revisiting a well-known
study of the effect of pre-colonial political structure on current economic
development in Africa. Our estimator support the authors' original findings
by showing that their results are robust to allowing for the likely
possibility that political structure is measured with error. Our results
suggest that the studied relationship could even be stronger than previously
thought.

Our approach opens the way to considering much broader classes of repeated
measurements than previously thought possible, as long as a well-defined
notion of proximity between pairs of observations can be defined. Beyond
geographical data, this could be applicable to network data as well as more
abstract spaces of consumer or product characteristics.

\bibliographystyle{aea}
\bibliography{References}

@article{hu2008instrumental,
  title={Instrumental variable treatment of nonclassical measurement error models},
  author={Hu, Yingyao and Schennach, Susanne M},
  journal={Econometrica},
  volume={76},
  number={1},
  pages={195--216},
  year={2008},
  publisher={Wiley Online Library}
}

@book{zhou2017spatial,
  title={Spatial Data Handling in Big Data Era},
  author={Zhou, Chenghu and Su, Fenzhen and Harvey, Francis and Xu, Jun},
  year={2017},
  publisher={Springer}
}

@article{hu2008nonparametric,
        title="Nonparametric identification of dynamic models with unobserved state variables",
        author="Y. Hu and M. Shum",
        journal="Journal of Econometrics",
        pages="32--44",
        volume=171,
        year=2012
}

@article{andrews:semikern,
        title="Nonparametric Kernel Estimation for Semiparametric Models",
        author="D. W. K. Andrews",
        journal="Econometric Theory",
        volume=11,
        pages="560--596",
        year="1995"
}

@incollection{Newey:HB,
  author="W. Newey and D. McFadden",
  title="Large Sample Estimation and Hypothesis Testing",
  booktitle="Handbook of Econometrics",
  volume="IV",
  editor="R. F. Engel and D. L. McFadden",
  publisher="Elsevier Science",
  year=1994
}

@article{newey:semiparam,
        title="The Asymptotic Variance of Semiparametric Estimators",
        author="W. Newey",
        journal="Econometrica",
        volume=62,
        pages="1349--1382",
        year=1994
}

@article{cunha2010estimating,
  title={Estimating the technology of cognitive and noncognitive skill formation},
  author={Cunha, Flavio and Heckman, James J and Schennach, Susanne M},
  journal={Econometrica},
  volume={78},
  number={3},
  pages={883--931},
  year={2010},
  publisher={Wiley Online Library}
}

@article{michalopoulos2013pre,
  title={Pre-colonial ethnic institutions and contemporary African development},
  author={Michalopoulos, Stelios and Papaioannou, Elias},
  journal={Econometrica},
  volume={81},
  number={1},
  pages={113--152},
  year={2013},
  publisher={Wiley Online Library}
}

@book{murdock1967ethnographic,
  title={Ethnographic atlas},
  author={Murdock, George Peter},
  publisher="University of Pittsburgh Press",
  year={1969}
}

@article{pinkse2010future,
  title={The future of spatial econometrics},
  author={Pinkse, Joris and Slade, Margaret E},
  journal={Journal of Regional Science},
  volume={50},
  number={1},
  pages={103--117},
  year={2010},
  publisher={Wiley Online Library}
}

@article{redding2017quantitative,
  title={Quantitative spatial economics},
  author={Redding, Stephen J and Rossi-Hansberg, Esteban},
  journal={Annual Review of Economics},
  volume={9},
  pages={21--58},
  year={2017},
  publisher={Annual Reviews}
}

@article{hu2008identification,
  title={Identification and estimation of nonlinear models with misclassification error using instrumental variables: A general solution},
  author={Hu, Yingyao},
  journal={Journal of Econometrics},
  volume={144},
  number={1},
  pages={27--61},
  year={2008},
  publisher={Elsevier}
}

@article{griliches:panel,
	title="Errors in Variables in panel data",
	author="Z. Griliches and J. A. Hausman",
	journal="Journal of Econometrics",
	volume=31,
	pages="93--118",
	year="1986"
}

@article{newey:npiv,
        author="W. K. Newey and J. L. Powell",
        title="Instrumental Variable Estimation of Nonparametric Models",
        journal="Econometrica",
        volume=71,
        pages="1565--1578",
        year="2003"
}

@article{hallhorowitz:npiv,
        author="P. Hall and J. L. Horowitz",
        title="Nonparametric Methods for Inference in the Presence of Instrumental Variables",
        journal="Annals of Statistics",
        volume=33,
        pages="2904--2929",
        year=2005
}

@article{schennach:annrev,
        author="S. M. Schennach",
        title="Recent Advances in the Measurement Error Literature",
        volume="8",
        journal="Annual Reviews of Economics",
        pages="341--377",
        year=2016
}

@article{kruskal:3way,
        author="J. B. Kruskal",
        year=1977,
        title="Three-Way Arrays: Rank and Uniqueness of Trilinear Decompositions, with Applications to Arithmetic Complexity and Statistics",
        journal="Linear Algebra and its Applications",
        volume=18,
        pages="95--138"
}

@incollection{paula:econet,
        author="A. de Paula",
        title="Econometrics of Network Models",
        editor="B. Honoré and A. Pakes and M. Piazzesi and L. Samuelson",
        publisher="Cambridge University Press",
        pages="268--323",
        year=2017,
        booktitle="Advances in Economics and Econometrics: Eleventh World Congress",
        chapter="8"
}

@techreport{hu:finmix,
  title="Global estimation of finite mixture and misclassication models with an application to multiple equilibria",
  author="Y. Hu and R. Xiao",
  institution="CeMMAP",
  number="CWP32/18",
  year=2018
}

@article{Shen:sieve,
  author = {X. Shen},
  year = 1997,
  title = {On Methods of Sieves and Penalization},
  journal = {Annals of Statistics},
  volume = 25,
  pages = {2555-2591}
}

@article{bolthausen1982central,
  title={On the central limit theorem for stationary mixing random fields},
  author={Bolthausen, Erwin},
  journal={The Annals of Probability},
  pages={1047--1050},
  year={1982},
  publisher={JSTOR}
}

@article{lahiri2003central,
  title={Central limit theorems for weighted sums of a spatial process under a class of stochastic and fixed designs},
  author={Lahiri, SN},
  journal={Sankhy{\=a}: The Indian Journal of Statistics},
  pages={356--388},
  year={2003},
  publisher={JSTOR}
}

@article{nordman2007optimal,
  title={Optimal block size for variance estimation by a spatial block bootstrap method},
  author={Nordman, Daniel J and Lahiri, Soumendra N and Fridley, Brooke L},
  journal={Sankhy{\=a}: The Indian Journal of Statistics},
  pages={468--493},
  year={2007},
  publisher={JSTOR}
}

@article{carbon1997kernel,
  title={Kernel density estimation for random fields (density estimation for random fields)},
  author={Carbon, Michel and Tran, Lanh Tat and Wu, Berlin},
  journal={Statistics \& Probability Letters},
  volume={36},
  number={2},
  pages={115--125},
  year={1997},
  publisher={Elsevier}
}

@article{jenish2009central,
  title={Central limit theorems and uniform laws of large numbers for arrays of random fields},
  author={Jenish, Nazgul and Prucha, Ingmar R},
  journal={Journal of econometrics},
  volume={150},
  number={1},
  pages={86--98},
  year={2009},
  publisher={Elsevier}
}

@article{jenish2012spatial,
  title={On spatial processes and asymptotic inference under near-epoch dependence},
  author={Jenish, Nazgul and Prucha, Ingmar R},
  journal={Journal of econometrics},
  volume={170},
  number={1},
  pages={178--190},
  year={2012},
  publisher={Elsevier}
}

@article{hall1995blocking,
  title={On blocking rules for the bootstrap with dependent data},
  author={Hall, Peter and Horowitz, Joel L and Jing, Bing-Yi},
  journal={Biometrika},
  volume={82},
  number={3},
  pages={561--574},
  year={1995},
  publisher={Oxford University Press}
}

@article{henderson2012measuring,
  title={Measuring economic growth from outer space},
  author={Henderson, J Vernon and Storeygard, Adam and Weil, David N},
  journal={The American Economic Review},
  volume={102},
  number={2},
  pages={994--1028},
  year={2012},
  publisher={American Economic Association}
}

@article{elvidge1997relation,
  title={Relation between satellite observed visible-near infrared emissions, population, economic activity and electric power consumption},
  author={Elvidge, Christopher D and Baugh, Kimberley E and Kihn, Eric A and Kroehl, Herbert W and Davis, Ethan R and Davis, Chris W},
  journal={International Journal of Remote Sensing},
  volume={18},
  number={6},
  pages={1373--1379},
  year={1997},
  publisher={Taylor \& Francis}
}

@article{doll2006mapping,
  title={Mapping regional economic activity from night-time light satellite imagery},
  author={Doll, Christopher NH and Muller, Jan-Peter and Morley, Jeremy G},
  journal={Ecological Economics},
  volume={57},
  number={1},
  pages={75--92},
  year={2006},
  publisher={Elsevier}
}

@incollection{schennach:hb,
        author="S. M. Schennach",
        title="Mismeasured and unobserved variables",
        booktitle="Handbook of Econometrics",
        volume="7A, invited, under (minor) revision",
        editors="S. Durlauf, L. P. Hansen, J. Heckman and R. L. Matzkin",
        publisher="Elsevier Science",
        year=2018
}

@book{jolliffe:pcamarket,
        title="Principal component analysis",
        author="I. T. Jolliffe",
        publisher="Spinger-Verlag",
        year="1986",
        address="New York"
}

@techreport{schennach:nlfact,
        author="F. Gunsilius and S. M. Schennach",
        title="Independent Principal Component Analysis",
        type="{W}orking {P}aper",
        year=2019,
        institution="Cemmap",
        number="CWP46/19"
}

@article{szipiro:misalligned,
    author = {Szpiro, Adam A. and Sheppard, Lianne and Lumley, Thomas},
    title = "{Efficient measurement error correction with spatially misaligned data}",
    journal = {Biostatistics},
    volume = {12},
    pages = {610-623},
    year = {2011}
}

@article{carroll:spatialsimex,
    author = {Alexeeff, Stacey E. and Carroll, Raymond J. and Coull, Brent},
    title = "{Spatial measurement error and correction by spatial SIMEX in linear regression models when using predicted air pollution exposures}",
    journal = {Biostatistics},
    volume = {17},
    pages = {377-389},
    year = {2016}
}

@article{prusha:spatialiv,
    author="H. H. Kelejian and I. R. Prucha",
    year=1998,
    title="A Generalized Spatial Two-Stage Least Squares Procedure for Estimating a Spatial Autoregressive Model with Autoregressive Disturbances",
    journal="Journal of Real Estate Finance and Economics",
    pages="99-121",
    volume=17
}

@mastersthesis{krige:thesis,
    author="D. G. Krige",
    title="A statistical approach to some mine valuations and allied problems at the Witwatersrand",
    school="University of Witwatersrand",
    year=1951
}

@incollection{chiles:kriging,
  author="J. P. Chil{\`{e}}s and N. Desassis",
  year=2018,
  title="Fifty Years of Kriging",
  editor="B. Daya Sagar and Q. Cheng and F. Agterberg",
  booktitle="Handbook of Mathematical Geosciences",
  publisher="Springer"
}

@book{cressie:kriking,
    author="N. Cressie",
    year=1993,
    title="Statistics for Spatial Data",
    publisher="Wiley Interscience",
    address="New York"
}

@article{Amem:IV,
   author="Amemiya, Y.",
   title="Instrumental Variable Estimator for the Nonlinear Errors-in-Variables Model",
   journal="Journal of Econometrics",
   volume=28,
   year=1985,
   pages="273--289"
}

@article{conley1999gmm,
  title={GMM estimation with cross sectional dependence},
  author={Conley, Timothy G},
  journal={Journal of econometrics},
  volume={92},
  number={1},
  pages={1--45},
  year={1999},
  publisher={Elsevier}
}

@article{sun2018estimation,
  title={Estimation and inference in functional-coefficient spatial autoregressive panel data models with fixed effects},
  author={Sun, Yiguo and Malikov, Emir},
  journal={Journal of Econometrics},
  volume={203},
  number={2},
  pages={359--378},
  year={2018},
  publisher={Elsevier}
}

@article{hallin2004kernel,
  title={Kernel density estimation for spatial processes: the L1 theory},
  author={Hallin, Marc and Lu, Zudi and Tran, Lanh T},
  journal={Journal of Multivariate Analysis},
  volume={88},
  number={1},
  pages={61--75},
  year={2004},
  publisher={Elsevier}
}

@article{sun2016functional,
  title={Functional-coefficient spatial autoregressive models with nonparametric spatial weights},
  author={Sun, Yiguo},
  journal={Journal of econometrics},
  volume={195},
  number={1},
  pages={134--153},
  year={2016},
  publisher={Elsevier}
}

@article{bramoulle:peernet,
  title="Identification of peer effects through social networks",
  author="Y. Bramoull{\'e} and H. Djebbari and B. Fortin",
  journal="Journal of Econometrics",
  volume=150,
  year=2009,
  pages="41--55"
}

@incollection{chen:hb,
	author="X. Chen",
	title="Large Sample Sieve Estimation of Semi-Nonparametric Models",
	booktitle="Handbook of Econometrics",
	volume="Vol. 6",
	publisher="Elsevier Science",
	year=2005
}

@article{carlstein:blockboot,
       author="E. Carlstein",
       title="The use of subseries methods for estimating the variance of a general statistic from a stationary time series",
       journal="Annals of Statistics",
       volume=14,
       pages="1171--1179",
       year=1986
}

@article{schennach:berkson,
	title="Regressions with {B}erkson errors in covariates --- A nonparametric approach",
	author="S. M. Schennach",
	journal="Annals of Statistics",
	volume=41,
	pages="1642--1668",
	year=2013
}

@article{laan:mlecv,
	author="M. J. van der Laan and S. Dudoit and S. Keles",
	title="Asymptotic optimality of likelihood-based cross-validation",
	journal="Statistical Applications in Genetics and Molecular Biology",
	volume=3,
	pages="4",
	year=2004
}

@book{vaart:asymp,
   author="A. W. van der Vaart",
   title="Asymptotic Statistics",
   publisher="Cambridge University Press",
   year=1998
}

@article{bickel:boot,
	author="P. J. Bickel and D. A. Freedman",
	title="Some asymptotic theory for the bootstrap",
	journal="Annals of Statistics",
	year=1981,
	volume=9,
	pages="1196--1217"
}

@article{botev2010kernel,
  title={Kernel density estimation via diffusion},
  author={Botev, Zdravko I and Grotowski, Joseph F and Kroese, Dirk P},
  journal={The annals of Statistics},
  volume={38},
  number={5},
  pages={2916--2957},
  year={2010},
  publisher={Institute of Mathematical Statistics}
}

@article{gunsilius2023independent,
  title={Independent nonlinear component analysis},
  author={Gunsilius, Florian and Schennach, Susanne},
  journal={Journal of the American Statistical Association},
  volume={118},
  number={542},
  pages={1305--1318},
  year={2023},
  publisher={Taylor \& Francis}
}
\appendix

\section{Proofs}

\label{appproof}

\begin{proof}[Theorem \protect\ref{thmmain}]
We handle the case of discrete and continuous $\mu_{X}$ separately. 

For the continuous case, we show that assumptions 1 to 5 in %
\citet{hu2008instrumental} are satisfied in our framework. Identification
then follows from their Theorem 1.

First, assumption \ref{asscondindep} implies their assumption 2, both (i)
and (ii). For (i), we observe that 
\begin{eqnarray}
f_{Y( s) \mid X( s) ,X( s+\Delta s) ,X^{\ast }( s) }( y\mid x,z,x^{\ast })
&=& \frac{f_{Y( s) ,X( s) ,X( s+\Delta s) \mid X^{\ast }( s) }( y,x,z\mid
x^{\ast }) }{f_{X( s) ,X( s+\Delta s) \mid X^{\ast }( s) }( x,z\mid x^{\ast
}) }  \notag \\
&=& \frac{f_{Y( s) \mid X^{\ast }( s) }( y\mid x^{\ast }) f_{X( s) \mid
X^{\ast }( s) }( x\mid x^{\ast }) f_{X( s+\Delta s) \mid X^{\ast }( s) }(
z\mid x^{\ast }) }{f_{X( s) \mid X^{\ast }( s) }( x\mid x^{\ast }) f_{X(
s+\Delta s) \mid X^{\ast }( s) }( z\mid x^{\ast }) }  \notag \\
&=& f_{Y( s) \mid X^{\ast }( s) }( y\mid x^{\ast })  \label{eqcii}
\end{eqnarray}
where assumption \ref{asscondindep} was used to factor the densities as
product of conditional marginals. Next, to establish (ii), we similarly have 
\begin{eqnarray}
f_{X( s) \mid X( s+\Delta s) ,X^{\ast }( s) }( x\mid z,x^{\ast }) &=& \frac{%
f_{X( s) ,X( s+\Delta s) \mid X^{\ast }( s) }( x,z\mid x^{\ast }) }{f_{X(
s+\Delta s) \mid X^{\ast }( s) }( z\mid x^{\ast }) }  \notag \\
&=& \frac{f_{X( s) \mid X^{\ast }( s) }( x\mid x^{\ast }) f_{X( s+\Delta s)
\mid X^{\ast }( s) }( z\mid x^{\ast }) }{f_{X( s+\Delta s) \mid X^{\ast }(
s) }( z\mid x^{\ast }) }  \notag \\
&=& f_{X( s) \mid X^{\ast }( s) }( x\mid x^{\ast }) .  \label{eqciii}
\end{eqnarray}

Assumptions \ref{densities}, \ref{injectivity}, and \ref{variation} are
direct counterparts of assumptions 1, 3, and 4 in \citet{hu2008instrumental}
adapted to our spatial setup. Finally, the existence of $M_x$ in assumption %
\ref{centering} establishes their assumption 5.

Hence, by Theorem 1 in \citet{hu2008instrumental}, the knowledge of $%
f_{Y(s), X(s), X(s+\Delta s)}(y,x,z)$ identifies $f_{Y(s)|X^{*}(s)}$, $%
f_{X(s)|X^{*}(s)}$, $f_{X(s+\Delta s)|X^{*}(s)}$, and $f_{X^{*}(s)}$.

For the discrete case, we first show that our assumptions imply the
assumptions 1, 2, 2.1, 2.2 of \citet{hu2008identification}. Note that their
assumptions explicitly include possible conditioning on a covariate $w$,
while our notation leaves such conditioning implicit, for simplicity.

Our assumption \ref{asscondindep} implies their assumption 1 and 2, by the
same reasoning that lead to Equations (\ref{eqcii}) and (\ref{eqciii})
above. Next, our assumption \ref{injectivity} reduces to their assumptions
2.1 and 2.2 in the discrete case, since the integral operators reduce to
matrix multiplications when $\mu_X$ is discrete: $[L_{B|A}h] (b) = \int
f_{B|A}(b|a) h(a) d\mu_X(a) = \sum_a F_{B|A}(b|a) h(a) \mu(\{a\})$.

Finally, although none of our assumptions imply one of their set of
alternative assumptions 2.3 through 2.7, these assumptions are only needed
to secure the proper ordering of the possible values of the latent discrete
variable $X^{*}$. Any re-ordering of it implies a re-ordering of the column
of the matrix $f_{X(s)|X^{*}(s)}(x|x^{*})$. However, any ordering other than
the correct one would lead to a violation of our assumption \ref{centering}: 
$M_x[f_{X(s)|X^{*}(s)}( \cdot | x^{*})] = x^{*}$. Hence our assumption \ref%
{centering} has the same effect as their set of alternative assumptions 2.3
through 2.7. (Note that in the special case where $M_x$ is the mode
functional, our assumption \ref{centering} regarding $X(s)|X^{*}(s)$ is the
same as their assumption 2.7.)

From the above consideration, we can invoke their Theorem 1 to establish
identification of our model in the discrete case.
\end{proof}

\begin{proof}[Proof of Theorem \protect\ref{thasslin}]
We take the following convention to ensure that the $Z_{i}$ vary smoothly as 
$f$ is changed in the expression $\mathcal{\hat{L}}\left( \theta ,f\right) $
for $f\not=\hat{f}$. Letting $F^{-1}\left( \cdot |x,y,w\right) $ denotes the
inverse of the cdf of $Z$ given $X$, $Y$ and $W$ with respect to the first
argument, we set $Z_{i}=\hat{F}_{Z|X,Y,W}^{-1}\left(
U_{i}|X_{i},Y_{i},W_{i}\right) $ (in the unidimensional case\footnote{%
In the the multivariate $Z_{i}$ case, one proceeds iteratively, starting
with $Z_{i,1}=F_{Z_{1}|X,Y}^{-1}\left( U_{i,1}|X_{i},Y_{i}\right) $ and
continuing with $Z_{i,k}=F_{Z_{k}|Z_{1},\ldots ,Z_{k-1},X,Y}^{-1}\left(
U_{i,k}|Z_{i,1},\ldots ,Z_{i,k-1},X_{i},Y_{i}\right) $ for $k=2,\ldots ,\dim
Z_{i}$ and with all $U_{i,k}$ mutually independent.}) where $U_{i}$ is drawn
from a uniform and the $U_{i}$ are kept fixed as $f$ varies. This is purely
a device of proof and a harmless convention because $\mathcal{\hat{L}}\left(
\theta ,f\right) $ is only evaluated at $f=\hat{f}$ in the estimator.
However, the structure of the proof (which uses constructs involving $%
\mathcal{\hat{L}}\left( \theta ,f\right) $ for $f\not=\hat{f}$) is
considerably simplified with this convention.

We first show consistency. Conclusions (i) and (ii) or Lemma \ref{lemwasass4} and \ref%
{asssupp}(i) imply that $\left\Vert \hat{f}-f_{0}\right\Vert \overset{p}{%
\longrightarrow }0$. To show that $\hat{\theta}\overset{p}{\longrightarrow }%
\theta $, we observe that, by the triangular inequality,%
\begin{equation*}
\left\vert \mathcal{\hat{L}}\left( \theta ,\hat{f}\right) -\mathcal{L}\left(
\theta ,f_{0}\right) \right\vert \leq \left\vert \mathcal{\hat{L}}\left(
\theta ,\hat{f}\right) -\mathcal{L}\left( \theta ,\hat{f}\right) \right\vert
+\left\vert \mathcal{L}\left( \theta ,\hat{f}\right) -\mathcal{L}\left(
\theta ,f_{0}\right) \right\vert .
\end{equation*}%
The first term satisfies $\left\vert \mathcal{\hat{L}}\left( \theta ,\hat{f}%
\right) -\mathcal{L}\left( \theta ,\hat{f}\right) \right\vert \overset{p}{%
\longrightarrow }0$ by Assumption \ref{asscons}(ii) and the fact that
eventually $\hat{f}\in \mathcal{F}$ since $\hat{f}\overset{p}{%
\longrightarrow }f_{0}$. The second term is also such that $\left\vert 
\mathcal{L}\left( \theta ,\hat{f}\right) -\mathcal{L}\left( \theta
,f_{0}\right) \right\vert \overset{p}{\longrightarrow }0$ by Assumption \ref%
{asscons}(iii) and $\hat{f}\overset{p}{\longrightarrow }f_{0}$. Since $%
\mathcal{\hat{L}}\left( \theta ,\hat{f}\right) $ converges uniformly to a
function that is uniquely maximized at $\theta _{0}$ (by Assumption \ref%
{asscons}(i)), it follows that $\hat{\theta}=\arg \max_{\theta \in \Theta }%
\mathcal{\hat{L}}\left( \theta ,\hat{f}\right) \overset{p}{\longrightarrow }%
\arg \max_{\theta \in \Theta }\mathcal{L}\left( \theta ,f_{0}\right) =\theta
_{0}$, by Theorem 2.1 in \cite{Newey:HB}.

By a standard expansion of the first order conditions $\nabla \mathcal{\hat{L%
}}\left( \hat{\theta},\hat{f}\right) =0$ around the true value $\theta
=\theta _{0}$, we have:%
\begin{equation*}
\nabla \mathcal{\hat{L}}\left( \theta _{0},\hat{f}\right) +\nabla \nabla
^{\prime }\mathcal{\hat{L}}\left( \bar{\theta},\hat{f}\right) \left( \hat{%
\theta}-\theta _{0}\right) =0
\end{equation*}%
where $\bar{\theta}$ is mean value between $\theta _{0}$ and $\hat{\theta}$.
Rearranging, we have%
\begin{eqnarray*}
&~&n^{1/2}\left( \hat{\theta}-\theta _{0}\right)  \\
&=&-n^{1/2}\left( \nabla \nabla ^{\prime }\mathcal{\hat{L}}\left( \bar{\theta%
},\hat{f}\right) \right) ^{-1}\nabla \mathcal{\hat{L}}\left( \theta _{0},%
\hat{f}\right)  \\
&=&-n^{1/2}\left( \nabla \nabla ^{\prime }\mathcal{\hat{L}}\left( \bar{\theta%
},\hat{f}\right) \right) ^{-1}\left( \nabla \mathcal{\hat{L}}\left( \theta
_{0},\hat{f}\right) -\nabla \mathcal{L}\left( \theta _{0},\hat{f}\right)
+\nabla \mathcal{L}\left( \theta _{0},\hat{f}\right) -\nabla \mathcal{L}%
\left( \theta _{0},f_{0}\right) \right)  \\
&=&\hat{\Psi}_{\text{MLE}}+\hat{\Psi}_{\text{kernel}}+\hat{R}_{1}
\end{eqnarray*}%
where we have inserted $-\nabla \mathcal{L}\left( \theta _{0},\hat{f}\right)
+\nabla \mathcal{L}\left( \theta _{0},\hat{f}\right) =0$ and $\nabla 
\mathcal{L}\left( \theta _{0},f_{0}\right) =0$ (by construction) and where%
\begin{eqnarray*}
\hat{\Psi}_{\text{MLE}} &=&-n^{1/2}\hat{H}^{-1}\left( \nabla \mathcal{\hat{L}%
}\left( \theta _{0},f_{0}\right) -\nabla \mathcal{L}\left( \theta
_{0},f_{0}\right) \right)  \\
\hat{\Psi}_{\text{kernel}} &=&-n^{1/2}\hat{H}^{-1}\left( \nabla \mathcal{L}%
\left( \theta _{0},\hat{f}\right) -\nabla \mathcal{L}\left( \theta
_{0},f_{0}\right) \right)  \\
\hat{R}_{1} &=&-n^{1/2}\hat{H}^{-1}\left( \left( \nabla \mathcal{\hat{L}}%
\left( \theta _{0},\hat{f}\right) -\nabla \mathcal{L}\left( \theta _{0},\hat{%
f}\right) \right) -\left( \nabla \mathcal{\hat{L}}\left( \theta
_{0},f_{0}\right) -\nabla \mathcal{L}\left( \theta _{0},f_{0}\right) \right)
\right)  \\
\hat{H} &=&\nabla \nabla ^{\prime }\mathcal{\hat{L}}\left( \bar{\theta},\hat{%
f}\right) .
\end{eqnarray*}%
We first show that $\hat{H}\overset{p}{\longrightarrow }H\equiv \nabla
\nabla ^{\prime }\mathcal{L}\left( \theta _{0},f_{0}\right) $ as follows:%
\begin{equation*}
\hat{H}-H=\left( \nabla \nabla ^{\prime }\mathcal{\hat{L}}\left( \bar{\theta}%
,f_{0}\right) -\nabla \nabla ^{\prime }\mathcal{L}\left( \theta
_{0},f_{0}\right) \right) +\left( \nabla \nabla ^{\prime }\mathcal{\hat{L}}%
\left( \bar{\theta},\hat{f}\right) -\nabla \nabla ^{\prime }\mathcal{\hat{L}}%
\left( \bar{\theta},f_{0}\right) \right) 
\end{equation*}%
where the first term is such that $\left( \nabla \nabla ^{\prime }\mathcal{%
\hat{L}}\left( \bar{\theta},f_{0}\right) -\nabla \nabla ^{\prime }\mathcal{L}%
\left( \theta _{0},f_{0}\right) \right) \overset{p}{\longrightarrow }0$ from
Assumption \ref{asssieve}(i) and (iv), while the second term can be written
as:%
\begin{eqnarray*}
\nabla \nabla ^{\prime }\mathcal{\hat{L}}\left( \bar{\theta},\hat{f}\right)
-\nabla \nabla ^{\prime }\mathcal{\hat{L}}\left( \bar{\theta},f_{0}\right) 
&=&\left( \nabla \nabla ^{\prime }\mathcal{\hat{L}}\left( \bar{\theta},\hat{f%
}\right) -\nabla \nabla ^{\prime }\mathcal{L}\left( \bar{\theta},\hat{f}%
\right) \right)  \\
&&-\left( \nabla \nabla ^{\prime }\mathcal{\hat{L}}\left( \bar{\theta}%
,f_{0}\right) -\nabla \nabla ^{\prime }\mathcal{L}\left( \bar{\theta}%
,f_{0}\right) \right)  \\
&&-\left( \nabla \nabla ^{\prime }\mathcal{L}\left( \bar{\theta}%
,f_{0}\right) -\nabla \nabla ^{\prime }\mathcal{L}\left( \bar{\theta},\hat{f}%
\right) \right) .
\end{eqnarray*}%
The two first term converge in probability to zero by Assumption \ref%
{asssieve}(i) and the fact that eventually $\hat{f}\in \mathcal{F}$, by
conclusions (i) and (ii) of Lemma \ref{lemwasass4} and \ref{asssupp}(i). The last term
converges in probability to $0$ since, by Assumption \ref{asssieve}(iii), 
\begin{equation*}
\plim_{n\longrightarrow \infty }\nabla \nabla ^{\prime }\mathcal{L}\left(
\theta ,\hat{f}\right) =\nabla \nabla ^{\prime }\mathcal{L}\left( \theta ,%
\plim_{n\longrightarrow \infty }\hat{f}\right) =\nabla \nabla ^{\prime }%
\mathcal{L}\left( \theta ,f_{0}\right) \text{ uniformly for }\theta \in
\Theta \text{.}
\end{equation*}%
It follows that $\hat{H}\overset{p}{\longrightarrow }H$. By assumption \ref%
{asssieve}(ii), we also have $\hat{H}^{-1}\overset{p}{\longrightarrow }H^{-1}
$, so that $\hat{\Psi}_{\text{MLE}}-\Psi _{\text{MLE}}\overset{p}{%
\longrightarrow }0$, $\hat{\Psi}_{\text{kernel}}-\tilde{\Psi}_{\text{kernel}}%
\overset{p}{\longrightarrow }0$ and $\hat{R}_{1}-R_{1}\overset{p}{%
\longrightarrow }0$ for 
\begin{eqnarray*}
\Psi _{\text{MLE}} &=&-n^{1/2}H^{-1}\left( \nabla \mathcal{\hat{L}}\left(
\theta _{0},f_{0}\right) -\nabla \mathcal{L}\left( \theta _{0},f_{0}\right)
\right) =-n^{-1/2}H^{-1}\sum_{i=1}^{n}\psi _{MLE}\left(
Y_{i},X_{i},Z_{i},W_{i}\right)  \\
\tilde{\Psi}_{\text{kernel}} &=&-n^{1/2}H^{-1}\left( \nabla \mathcal{L}%
\left( \theta _{0},\hat{f}\right) -\nabla \mathcal{L}\left( \theta
_{0},f_{0}\right) \right)  \\
R_{1} &=&-n^{1/2}H^{-1}\left( \left( \nabla \mathcal{\hat{L}}\left( \theta
_{0},\hat{f}\right) -\nabla \mathcal{L}\left( \theta _{0},\hat{f}\right)
\right) -\left( \nabla \mathcal{\hat{L}}\left( \theta _{0},f_{0}\right)
-\nabla \mathcal{L}\left( \theta _{0},f_{0}\right) \right) \right) .
\end{eqnarray*}%
where $\psi _{\text{MLE}}\left( y,x,z,w\right) =\nabla \ln L\left(
y,x,z,w;\theta _{0},\omega \left( \theta _{0}\right) \right) $ is the usual
influence function of a sieve MLE estimator of $\theta _{0}$, while 
\begin{eqnarray*}
\tilde{\Psi}_{\text{kernel}} &=& -n^{1/2}H^{-1}\iiiint \left( \hat{f}\left( z|y,x,w;\hat{\kappa%
}\right) -f\left( z|y,x,w;\kappa _{0}\right) \right) f_{YXW}\left(
y,x,w\right)\times\\
&& \nabla \ln L\left( y,x,z,w;\theta _{0},\omega \left( \theta
_{0}\right) \right) dydxdzdw \\
&=&\tilde{\Psi}_{\text{kernel}}^{1}+\tilde{\Psi}_{\text{kernel}}^{2}+R_{2}
\end{eqnarray*}%
where $f$ denotes $f_{Z|YXW}$ (as in the definition of our estimator) and
where%
\begin{eqnarray*}
\tilde{\Psi}_{\text{kernel}}^{1} &=&-n^{1/2}H^{-1}\iiiint \left( 
\hat{f}\left( z|y,x,w;\kappa _{0}\right) -f\left( z|y,x,w;\kappa _{0}\right)
\right) f_{YXW}\left( y,x,w\right) \times\\
&&\nabla \ln L\left( y,x,z,w;\theta
_{0},\omega \left( \theta _{0}\right) \right) dydxdzdw \\
\tilde{\Psi}_{\text{kernel}}^{2} &=&-n^{1/2}H^{-1}\iiiint \left(
f\left( z|y,x,w;\hat{\kappa}\right) -f\left( z|y,x,w;\kappa _{0}\right)
\right) f_{YXW}\left( y,x,w\right) \times\\
&& \nabla \ln L\left( y,x,z,w;\theta
_{0},\omega \left( \theta _{0}\right) \right) dydxdzdw \\
R_{2} &=&-n^{1/2}H^{-1}\iiiint \left( \left( \hat{f}\left(
z|y,x,w;\hat{\kappa}\right) -\hat{f}\left( z|y,x,w;\kappa _{0}\right)
\right) -\left( f\left( z|y,x,w;\hat{\kappa}\right) -f\left( z|y,x,w;\kappa
_{0}\right) \right) \right)  \\
&&\times f_{YXW}\left( y,x,w\right) \nabla \ln L\left( y,x,z,w;\theta
_{0},\omega \left( \theta _{0}\right) \right) dydxdzdw.
\end{eqnarray*}%
We can re-write these terms in alternative ways (making the dependence on $%
\kappa _{0}$ implicit when not central to the argument):%
\begin{eqnarray*}
\tilde{\Psi}_{\text{kernel}}^{1}&=&-n^{1/2}H^{-1}\iiiint \left( 
\frac{\hat{f}_{ZYX|W}\left( z,y,x|w\right) }{\hat{f}_{YX|W}\left(
y,x|w\right) }-\frac{f_{ZYX|W}\left( z,y,x|w\right) }{f_{YX|W}\left(
y,x|w\right) }\right) f_{YXW}\left( y,x,w\right)\times\\
&&\nabla \ln L\left(y,x,z,w;\theta _{0},\omega \left( \theta _{0}\right) \right) dydxdzdw
\end{eqnarray*}%
and $\tilde{\Psi}_{\text{kernel}}^{1}$ can be further linearized by using
the fact that:%
\begin{eqnarray*}
\left( \frac{\hat{a}}{\hat{b}}-\frac{a}{b}\right)  &=&\left( \frac{\hat{a}-a%
}{b}-\frac{a}{b}\frac{\hat{b}-b}{b}\right) +\left( 1+\frac{\hat{b}-b}{b}%
\right) ^{-1}\left( \frac{a}{b}\left( \frac{\hat{b}-b}{b}\right) ^{2}-\frac{%
\left( \hat{b}-b\right) }{b}\frac{\left( \hat{a}-a\right) }{b}\right)  \\
&=&\left( \frac{\hat{a}-a}{b}-\frac{a}{b}\frac{\hat{b}-b}{b}\right)
+o_{p}\left( n^{-1/2}\right) 
\end{eqnarray*}%
if $\left\Vert \hat{a}-a\right\Vert =o_{p}\left( n^{-1/4}\right) $, $%
\left\Vert \hat{b}-b\right\Vert =o_{p}\left( n^{-1/4}\right) $ and $b\geq
\varepsilon >0$. Setting $a=f_{ZYX|W}\left( z,y,x|w\right) $, $\hat{a}=\hat{f}%
_{ZYX|W}\left( z,y,x|w\right) $, $b=f_{YX|W}\left( y,x|w\right) $ and\ $\hat{b}%
=\hat{f}_{YX|W}\left( y,x|w\right) $ and invoking Lemma \ref{lemwasass4} to
establish the required $o_{p}\left( n^{-1/4}\right) $ rates yields:%
\begin{eqnarray*}
\tilde{\Psi}_{\text{kernel}}^{1} &=&-n^{1/2}H^{-1}\int \int \int \int \frac{1%
}{f_{YX|W}\left( y,x|w\right) }\left( \hat{f}_{ZYX|W}\left( z,y,x|w\right)
-f_{ZYX|W}\left( z,y,x|w\right) \right)  \\
&&\times f_{YXW}\left( y,x,w\right) \nabla \ln L\left( y,x,z,w;\theta
_{0},\omega \left( \theta _{0}\right) \right) dydxdzdw \\
&&+n^{1/2}H^{-1}\int \int \int \int \frac{f_{ZYX|W}\left( z,y,x|w\right) }{%
\left( f_{YX|W}\left( y,x|w\right) \right) ^{2}}\left( \hat{f}_{YX|W}\left(
y,x|w\right) -f_{YX|W}\left( y,x|w\right) \right)  \\
&&\times f_{YXW}\left( y,x,w\right) \nabla \ln L\left( y,x,z,w;\theta
_{0},\omega \left( \theta _{0}\right) \right) dydxdzdw+n^{1/2}o_{p}\left(
n^{-1/2}\right)  \\
&=&\tilde{\Psi}_{\text{kernel}}^{11}+\tilde{\Psi}_{\text{kernel}%
}^{12}+o_{p}\left( 1\right) 
\end{eqnarray*}%
where%
\begin{eqnarray*}
\tilde{\Psi}_{\text{kernel}}^{11} &=&-n^{1/2}H^{-1}\int \int \int \int
\left( \hat{f}_{ZYX|W}\left( z,y,x|w\right) -f_{ZYX|W}\left( z,y,x|w\right)
\right) f_{W}\left( w\right) \times  \\
&&\nabla \ln L\left( y,x,z,w;\theta _{0},\omega \left( \theta _{0}\right)
\right) dydxdzdw \\
\tilde{\Psi}_{\text{kernel}}^{12} &=&n^{1/2}H^{-1}\int \int \int \int \left( 
\hat{f}_{YX|W}\left( y,x|w\right) -f_{YX|W}\left( y,x|w\right) \right)
f_{W}\left( w\right) f_{Z|YXW}\left( z|y,x,w\right) \times  \\
&&\nabla \ln L\left( y,x,z,w;\theta _{0},\omega \left( \theta _{0}\right)
\right) dydxdzdw
\end{eqnarray*}%
Next, we exploit the index structure implied by Equation (\ref{eqgenV}) to
write:%
\begin{eqnarray*}
f_{ZYX|W}\left( z,y,x|w\right)  &=&f_{\tilde{Z}\tilde{Y}\tilde{X}}\left(
G^{-1}\left( (y,x,z),w\right) \right) J\left( z,y,x,w\right)  \\
f_{YX|W}\left( y,x|w\right)  &=&\int f_{ZYX|W}\left( z,y,x|w\right) dz
\end{eqnarray*}%
where $G^{-1}$ denotes an inverse with respect to the first (vectorial)
argument and where the dependence on $\kappa _{0}$ is implicit, while $%
J\left( z,y,x,w\right) =\left( \det \nabla _{(y,x,z)}^{\prime }G\left(
(y,x,z),w\right) \right) ^{-1}$ is a Jacobian term. Similar expressions hold
for the corresponding estimated densities. We now make the change of
variable $(\tilde{y},\tilde{x},\tilde{z})=G^{-1}\left( (y,x,z),w\right) $
and use Fubini's Theorem to obtain%
\begin{eqnarray*}
\tilde{\Psi}_{\text{kernel}}^{11} &=&-n^{1/2}H^{-1}\int \int \int \left( 
\hat{f}_{\tilde{Z}\tilde{Y}\tilde{X}}\left( \tilde{z},\tilde{y},\tilde{x}%
\right) -f_{\tilde{Z}\tilde{Y}\tilde{X}}\left( \tilde{z},\tilde{y},\tilde{x}%
\right) \right) \times  \\
&&\int f_{W}\left( w\right) \nabla \ln L\left( G\left( (\tilde{y},\tilde{x},%
\tilde{z}),w\right) ,w;\theta _{0},\omega \left( \theta _{0}\right) \right)
dw~d\tilde{y}d\tilde{x}d\tilde{z} \\
&=&-n^{1/2}H^{-1}\int \int \int \left( \hat{f}_{\tilde{Z}\tilde{Y}\tilde{X}%
}\left( \tilde{z},\tilde{y},\tilde{x}\right) -f_{\tilde{Z}\tilde{Y}\tilde{X}%
}\left( \tilde{z},\tilde{y},\tilde{x}\right) \right) \nabla \ln \tilde{L}%
_{1}\left( \tilde{y},\tilde{x},\tilde{z};\theta _{0}\right) d\tilde{y}d%
\tilde{x}d\tilde{z}
\end{eqnarray*}%
where%
\begin{eqnarray*}
\nabla \ln \tilde{L}_{1}\left( \tilde{y},\tilde{x},\tilde{z};\theta
_{0}\right)  &=&\int f_{W}\left( w\right) \nabla \ln L\left( G\left( (\tilde{%
y},\tilde{x},\tilde{z}),w\right) ,w;\theta _{0},\omega \left( \theta
_{0}\right) \right) dw \\
&=&E\left[ \nabla \ln L\left( G\left( (\tilde{y},\tilde{x},\tilde{z}%
),W\right) ,W;\theta _{0},\omega \left( \theta _{0}\right) \right) \right] .
\end{eqnarray*}%
Similarly,%
\begin{eqnarray*}
\tilde{\Psi}_{\text{kernel}}^{12} &=&n^{1/2}H^{-1}\int \int \int \left[ \int
\left( \hat{f}_{ZYX|W}\left( z,y,x|w\right) -f_{ZYX|W}\left( zy,x|w\right)
\right) dz\right] \times  \\
&&f_{W}\left( w\right) \left[ \int f_{Z|YXW}\left( z|y,x,w\right) \nabla \ln
L\left( y,x,z,w;\theta _{0},\omega \left( \theta _{0}\right) \right) dz%
\right] dwdydx \\
&=&n^{1/2}H^{-1}\int \int \int \int \left( \hat{f}_{YXZ|W}\left(
y,x,z|w\right) -f_{YXZ|W}\left( y,x,z|w\right) \right) \times  \\
&& f_{W}\left( w\right)
\nabla L_{2}\left( y,x,w,\theta _{0}\right) dzdwdydx
\end{eqnarray*}%
where%
\begin{equation*}
\nabla L_{2}\left( y,x,w,\theta _{0}\right) =\int f_{Z|YXW}\left(
z|y,x,w\right) \nabla \ln L\left( y,x,z,w;\theta _{0},\omega \left( \theta
_{0}\right) \right) dz.
\end{equation*}%
Exploiting the index structure, we have:%
\begin{eqnarray*}
\tilde{\Psi}_{\text{kernel}}^{12} &=& n^{1/2}H^{-1}\int \int \int \int \left( 
\hat{f}_{\tilde{Z}\tilde{Y}\tilde{X}}\left( G^{-1}\left( (y,x,z),w\right)
\right) -f_{\tilde{Z}\tilde{Y}\tilde{X}}\left( G^{-1}\left( (y,x,z),w\right)
\right) \right)\times \\
&&J\left( z,y,x,w\right) f_{W}\left( w\right) \nabla
L_{2}\left( y,x,w,\theta _{0}\right) dzdwdydx
\end{eqnarray*}%
and making the change of variable $(\tilde{y},\tilde{x},\tilde{z}%
)=G^{-1}\left( (y,x,z),w\right) $, yields:%
\begin{eqnarray*}
\tilde{\Psi}_{\text{kernel}}^{12} &=&n^{1/2}H^{-1}\int \int \int \int \left( 
\hat{f}_{\tilde{Z}\tilde{Y}\tilde{X}}\left( \tilde{y},\tilde{x},\tilde{z}%
\right) -f_{\tilde{Z}\tilde{Y}\tilde{X}}\left( \tilde{y},\tilde{x},\tilde{z}%
\right) \right)\times \\
&&f_{W}\left( w\right) \nabla L_{2}\left( G_{yx}\left( (\tilde{%
y},\tilde{x},\tilde{z}),w\right) ,w,\theta _{0}\right) dwd\tilde{y}d\tilde{x}%
d\tilde{z} \\
&=&n^{1/2}H^{-1}\int \int \int \left( \hat{f}_{\tilde{Z}\tilde{Y}\tilde{X}%
}\left( \tilde{y},\tilde{x},\tilde{z}\right) -f_{\tilde{Z}\tilde{Y}\tilde{X}%
}\left( \tilde{y},\tilde{x},\tilde{z}\right) \right)\times \\
&&\left[ \int f_{W}\left(
w\right) \nabla L_{2}\left( G_{yx}\left( (\tilde{y},\tilde{x},\tilde{z}%
),w\right) ,w,\theta _{0}\right) dw\right] d\tilde{y}d\tilde{x}d\tilde{z} \\
&=&n^{1/2}H^{-1}\int \int \int \left( \hat{f}_{\tilde{Z}\tilde{Y}\tilde{X}%
}\left( \tilde{y},\tilde{x},\tilde{z}\right) -f_{\tilde{Z}\tilde{Y}\tilde{X}%
}\left( \tilde{y},\tilde{x},\tilde{z}\right) \right) \nabla \ln \tilde{L}%
_{2}\left( \tilde{y},\tilde{x},\tilde{z};\theta _{0}\right) d\tilde{y}d%
\tilde{x}d\tilde{z}
\end{eqnarray*}%
where $G_{yx}\left( (\tilde{y},\tilde{x},\tilde{z}),w\right) $ denotes the $y
$ and $x$ elements of the vector $G\left( (\tilde{y},\tilde{x},\tilde{z}%
),w\right) $ and where%
\begin{eqnarray*}
\nabla \ln \tilde{L}_{2}\left( \tilde{y},\tilde{x},\tilde{z};\theta
_{0}\right)  &=&\int f_{W}\left( w\right) \nabla L_{2}\left( G_{yx}\left( (%
\tilde{y},\tilde{x},\tilde{z}),w\right) ,w,\theta _{0}\right) dw \\
&=&E\left[ \nabla L_{2}\left( G_{yx}\left( (\tilde{y},\tilde{x},\tilde{z}%
),W\right) ,W,\theta _{0}\right) \right] 
\end{eqnarray*}%
Using standard semiparametric correction terms for density estimation (\cite%
{newey:semiparam}) and under the small bias result of Lemma \ref{lemwasass4}%
, $\tilde{\Psi}_{\text{kernel}}^{1}$ can be shown to be asymptotically
equivalent to sample averages (by Lemma \ref{lemnewey} below, under
Assumptions \ref{asskern2} and \ref{asssemipar}): 
\begin{equation*}
\tilde{\Psi}_{\text{kernel}}^{1}=n^{-1/2}\sum_{i=1}^{n}\psi _{\text{kernel}%
}\left( Y_{i},X_{i},Z_{i},W_{i}\right) +o_{p}\left( 1\right) 
\end{equation*}%
where $\psi _{\text{kernel}}$ is given the Theorem statement.

Next, we can re-express the $\tilde{\Psi}_{\text{kernel}}^{2}$ term as:%
\begin{equation*}
\tilde{\Psi}_{\text{kernel}}^{2}=\tilde{\Psi}_{\text{kernel}}^{21}+R_{3}
\end{equation*}%
where%
\begin{eqnarray*}
R_{3}&=&-n^{1/2}H^{-1}\iiiint \left( \nabla _{\kappa }^{\prime
}f\left( z|y,x,w;\dot{\kappa}\right) -\nabla _{\kappa }^{\prime }f\left(
z|y,x,w;\kappa _{0}\right) \right) f_{YXW}\left( y,x,w\right)\times \\
& &\nabla \ln L\left( y,x,z,w;\theta _{0},\omega \left( \theta _{0}\right) \right)
dydxdzdw~\left( \hat{\kappa}-\kappa _{0}\right) 
\end{eqnarray*}%
and
\begin{eqnarray*}
\tilde{\Psi}_{\text{kernel}}^{21} &=&-n^{1/2}H^{-1}\iiiint
\nabla _{\kappa }^{\prime }f\left( z|y,x,w;\kappa _{0}\right) f_{YXW}\left(
y,x,w\right)\times \\
&&\nabla \ln L\left( y,x,z,w;\theta _{0},\omega \left( \theta
_{0}\right) \right) dydxdzdw~\left( \hat{\kappa}-\kappa _{0}\right)  \\
\end{eqnarray*}%
\begin{eqnarray*}
&=&-n^{1/2}H^{-1}\iiiint \nabla _{\kappa }^{\prime }\ln f\left(
z|y,x,w;\kappa _{0}\right) f_{ZYXW}\left( z,y,x,w\right)\times \\
&&\nabla \ln L\left(y,x,z,w;\theta _{0},\omega \left( \theta _{0}\right) \right) dydxdzdw~\left( \hat{\kappa}-\kappa _{0}\right)  \\
&=&-H^{-1}E\left[ \nabla _{\kappa }^{\prime }\ln f\left(Z|Y,X,W;\kappa_{0}\right) 
\nabla \ln L\left( Y,X,Z,W;\theta _{0},\omega \left( \theta
_{0}\right) \right) \right] \times\\
&&n^{-1/2}\sum_{i=1}^{n}\psi _{\kappa }\left(
Y_{i},X_{i},Z_{i},W_{i}\right) +o_{p}\left( 1\right)  \\
&=&n^{-1/2}\sum_{i=1}^{n}\psi _{\text{cov}}\left(
Y_{i},X_{i},Z_{i},W_{i}\right) +o_{p}\left( 1\right) 
\end{eqnarray*}%
where $\left( \hat{\kappa}-\kappa _{0}\right) $ was replaced by its
expression from Assumption \ref{assnuinf} and where $\psi _{\text{cov}}\left( y,x,z,w\right)$ is given in the Theorem statement.

There remains to show that the remainder term $R_{1},R_{2}$ and $R_{3}$ are $%
o_{p}\left( 1\right) $.

For $R_{1}$, we need to show that $n^{1/2}((\nabla \mathcal{\hat{L}}\left(
\theta _{0},f\right) -\nabla \mathcal{L}\left( \theta _{0},f\right)
)-(\nabla \mathcal{\hat{L}}\left( \theta _{0},f_{0}\right) -\nabla \mathcal{L%
}\left( \theta _{0},f_{0}\right) ))$ is stochastically equicontinuous in $f$
at $f=f_{0}$ for all sufficiently large $n$. This standard property follows
from (a) $\nabla \mathcal{L}\left( \theta _{0},f\right) $ being linear in $f$
with bounded prefactor by Assumption \ref{asszhat}(i), (b) $\nabla \mathcal{%
\hat{L}}\left( \theta _{0},f\right) $ being Lipschitz in each of the $Z_{i}$
by Assumption \ref{asszhat}(i) and (c) the $Z_{i}$ being Lipschitz in $f$
(in the sup norm $\left\Vert \cdot \right\Vert _{\infty }$). The third
assertion can be shown by observing that changes $F-F_{0}$ in the
conditional cdf of $Z_{i}$ are bounded by $C\left\Vert f-f_{0}\right\Vert
_{\infty }$ for some $C<\infty $. Since both $f_{0}$ and $f$ are bounded by
Assumption \ref{asszhat}(ii), the change $F^{-1}-F_{0}^{-1}$ is also bounded
by $C^{\prime }\left\Vert f-f_{0}\right\Vert _{\infty }$ for some $C^{\prime
}$ that is finite under Assumption \ref{asszhat}(ii). Thus the $Z_{i}$ are
Lipschitz in $f$.

For $R_{2}$, we observe that 
\begin{eqnarray*}
\left\vert R_{2}\right\vert  &\leq &H^{-1}\int \int \int \int \left\vert
f_{YXW}\left( y,x,w\right) \right\vert \left\vert \nabla \ln L\left(
y,x,z,w;\theta _{0},\omega \left( \theta _{0}\right) \right) \right\vert
dydxdzdw \\
&&\times n^{1/2}\max_{z,y,x,w}\left\vert \left( \left( \hat{f}\left( z|y,x,w;%
\hat{\kappa}\right) -\hat{f}\left( z|y,x,w;\kappa _{0}\right) \right)
-\left( f\left( z|y,x,w;\hat{\kappa}\right) -f\left( z|y,x,w;\kappa
_{0}\right) \right) \right) \right\vert 
\end{eqnarray*}%
where the integral is finite by Assumption \ref{asszhat} and \ref{asssupp}%
(ii), while the argument of the max can be written as: 
\begin{eqnarray*}
&&\left( \hat{f}\left( z|y,x,w;\hat{\kappa}\right) -\hat{f}\left(
z|y,x,w;\kappa _{0}\right) \right) -\left( f\left( z|y,x,w;\hat{\kappa}%
\right) -f\left( z|y,x,w;\kappa _{0}\right) \right)  \\
&=&\left( \nabla _{\kappa }^{\prime }\hat{f}\left( z|y,x,w;\dot{\kappa}%
\right) -\nabla _{\kappa }^{\prime }f\left( z|y,x,w;\dot{\kappa}\right)
\right) \left( \hat{\kappa}-\kappa _{0}\right) .
\end{eqnarray*}%
We have $\left\Vert \hat{\kappa}-\kappa _{0}\right\Vert =O_{p}\left(
n^{-1/2}\right) $ by Assumption \ref{assnuinf}\ while $\left\Vert \nabla
_{\kappa }^{\prime }\hat{f}\left( z|y,x,w;\dot{\kappa}\right) -\nabla
_{\kappa }^{\prime }f\left( z|y,x,w;\dot{\kappa}\right) \right\Vert
=o_{p}\left( 1\right) $ by Lemma \ref{lemwasass4}.

We can bound $R_{3}$ by expressing $\left( \nabla _{\kappa }^{\prime
}f\left( z|y,x,w;\dot{\kappa}\right) -\nabla _{\kappa }^{\prime }f\left(
z|y,x,w;\kappa _{0}\right) \right) $ using another mean value $\ddot{\kappa}$%
:%
\begin{eqnarray*}
\left\vert R_{3}\right\vert &\leq &n^{1/2}\left\Vert H\right\Vert ^{-1}\int \int \int \int \left\Vert
\nabla _{\kappa }\nabla _{\kappa }^{\prime }f\left( z|y,x,w;\ddot{\kappa}%
\right) \right\Vert \times\\
&& f_{YXW}\left( y,x,w\right) \left\Vert \nabla \ln L\left(
y,x,z,w;\theta _{0},\omega \left( \theta _{0}\right) \right) \right\Vert
dydxdzdw \, \left\Vert \hat{\kappa}-\kappa _{0}\right\Vert ^{2},
\end{eqnarray*}%
where $\left\Vert \hat{\kappa}-\kappa _{0}\right\Vert ^{2}=O_{p}\left(
n^{-1}\right) $ by Assumption \ref{assnuinf}, while the integral is bounded
by Assumption \ref{asssupp}(ii), \ref{asszhat}(i) \ref{asslink}(ii) and \ref%
{assdensmo}.
\end{proof}

\bigskip 

\begin{definition}
\label{defpowvec}For $t\in \mathbb{R}^{d}$ and $k\in \mathbb{N}^{d}$, let $%
t^{k}\equiv \prod_{i=1}^{d}\left( t_{i}\right) ^{k_{i}}$, $\left\vert
t\right\vert ^{k}\equiv \prod_{i=1}^{d}\left\vert t_{i}\right\vert ^{k_{i}}$%
, $\left\Vert k\right\Vert _{1}\equiv \sum_{i=1}^{d}\left\vert
k_{i}\right\vert $, $k!=\prod_{i=1}^{d}k_{i}!$ and $g^{\left( k\right)
}\left( t\right) =\frac{\partial ^{\left\Vert k\right\Vert _{1}}g\left(
t\right) }{\partial t_{1}^{k_{1}}\cdots \partial t_{d}^{k_{d}}}$.
\end{definition}

\begin{definition}
\label{defkorder}$K\left( \cdot \right) $ is a $d$-dimensional kernel of
order $r$ if $\int K\left( t\right) dt=1$, $\int K\left( t\right) t^{k}dt=0$
for $\sum_{i=1}^{d}k_{i}<r$\ and $\int \left\vert K\left( t\right)
\right\vert \left\vert t\right\vert ^{k}dt<\infty $ for $\sum_{i=1}^{d}k_{i}%
\leq r$.
\end{definition}

\begin{lemma}
\label{lemnewey}Let $K$ be a $d$-dimensional kernel of order $r$ and let $%
\hat{f}_{V}\left( v\right) =n^{-1}\sum_{i=1}^{n}h^{-d}K\left( \left(
v-V_{i}\right) /h\right) $. If the function $g\left( v\right) $ admits
uniformly continuous and bounded $r^{\text{th}}$ mixed derivatives, then,
\[n^{1/2}\left( \int \hat{f}_{V}\left( v\right) g\left( v\right) dv-E\left[
g\left( V\right) \right] \right) =n^{-1/2}\sum_{i=1}^{n}\left( g\left(
V_{i}\right) -E\left[ g\left( V\right) \right] \right) +O\left( h^{r}\right) .
\]
\end{lemma}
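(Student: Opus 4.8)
The plan is to turn the integral against the kernel estimator into an exact sample average plus a deterministic bias, annihilate the low-order bias terms using the order-$r$ moment conditions of the kernel, and then bound the surviving $r$-th order remainder uniformly using the smoothness of $g$.

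First I would use linearity to write $\int \hat{f}_{V}(v) g(v)\, dv = n^{-1}\sum_{i=1}^{n}\int h^{-d}K\!\left((v-V_{i})/h\right) g(v)\, dv$ and, in each summand, perform the change of variables $u=(v-V_{i})/h$ (so that $dv=h^{d}\, du$). This yields the clean expression $n^{-1}\sum_{i=1}^{n}\int K(u)\, g(V_{i}+hu)\, du$, reducing the problem to a pointwise question about the local smoothing of $g$ by the kernel around each $V_{i}$.

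Next I would Taylor-expand $g(V_{i}+hu)$ about $V_{i}$ to order $r$, using the multi-index notation of Definition \ref{defpowvec}: $g(V_{i}+hu)=\sum_{\|k\|_{1}\leq r-1}\frac{(hu)^{k}}{k!}g^{(k)}(V_{i})+\sum_{\|k\|_{1}=r}\frac{(hu)^{k}}{k!}g^{(k)}(V_{i}+\xi_{k}hu)$ in Lagrange form. Integrating against $K$ and invoking Definition \ref{defkorder}: the $\|k\|_{1}=0$ term contributes $g(V_{i})\int K(u)\,du=g(V_{i})$, while every intermediate term with $1\leq\|k\|_{1}\leq r-1$ vanishes by the moment condition $\int K(u) u^{k}\, du=0$. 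Hence $\int K(u)g(V_{i}+hu)\,du=g(V_{i})+\rho_{i}$, where $\rho_{i}=h^{r}\sum_{\|k\|_{1}=r}\frac{1}{k!}\int K(u) u^{k} g^{(k)}(V_{i}+\xi_{k}hu)\, du$.

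The one delicate step, and the main (though routine) obstacle, is the uniform control of $\rho_{i}$: since the intermediate point $V_{i}+\xi_{k}hu$ depends on $u$, I cannot factor $g^{(k)}$ out of the integral. However, the hypothesis supplies uniformly bounded $r$-th mixed derivatives and Definition \ref{defkorder} supplies $\int|K(u)|\,|u|^{k}\, du<\infty$ for $\|k\|_{1}=r$, so that $|\rho_{i}|\leq h^{r}\sum_{\|k\|_{1}=r}\frac{1}{k!}\left(\sup_{v}|g^{(k)}(v)|\right)\int|K(u)|\,|u|^{k}\, du\leq C h^{r}$ uniformly in $i$. Summing over $i$ gives $\int \hat{f}_{V}(v) g(v)\, dv=n^{-1}\sum_{i=1}^{n}g(V_{i})+O(h^{r})$, a deterministic bias of order $h^{r}$. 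Subtracting $E[g(V)]$ and multiplying by $n^{1/2}$ then delivers the claimed representation, the leading term being $n^{-1/2}\sum_{i=1}^{n}(g(V_{i})-E[g(V)])$ and the remainder being the deterministic bias of order $h^{r}$; after the $n^{1/2}$ scaling this contributes a term of order $n^{1/2}h^{r}$, which is negligible (i.e.\ $o(1)$) under the bandwidth choice of Assumption \ref{asskern2}. No stochastic argument is needed beyond this uniform bound, since the bias is nonrandom and the only random object is already the target linear term.
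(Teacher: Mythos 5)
Your proposal is correct and takes essentially the same route as the paper: both exchange integral and sum, substitute $u=(v-V_i)/h$ to obtain $\int K(u)\,g(V_i+hu)\,du$, annihilate all terms of order below $r$ via the kernel's moment conditions, and bound the order-$r$ Taylor remainder uniformly by $\left(\sup_v \bigl\vert g^{(k)}(v)\bigr\vert\right)\int \vert K(u)\vert\,\vert u\vert^{k}\,du$, exactly as in the paper's bound on $g_K(v)-g(v)$. The only quibble is your remark that the bias is ``nonrandom'': the remainders $\rho_i$ do depend on the $V_i$, but since your bound $\vert\rho_i\vert\leq Ch^{r}$ holds uniformly (surely), the $O(h^{r})$ conclusion is unaffected.
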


\bigskip

\begin{proof}[Proof of Lemma \protect\ref{lemnewey}]
Rewrite the left-hand side as 
\begin{equation*}
n^{1/2}\left( \int \hat{f}_{V}\left( v\right) g\left( v\right) dv-E\left[
g\left( V\right) \right] \right) =n^{1/2}\left( \frac{1}{n}%
\sum_{i=1}^{n}g\left( V_{i}\right) -E\left[ g\left( V\right) \right] +\frac{1%
}{n}\sum_{i=1}^{n}\left( g_{K}\left( V_{i}\right) -g\left( V_{i}\right)
\right) \right)
\end{equation*}%
where%
\begin{equation*}
g_{K}\left( v\right) \equiv \int \frac{1}{h^{d}}K\left( \frac{u-v}{h}\right)
g\left( u\right) du=\int K\left( t\right) g\left( v+th\right) dt
\end{equation*}%
by the change of variable $u=v+th$. Next, by a Taylor expansion,%
\begin{equation*}
\left\vert g_{K}\left( v\right) -g\left( v\right) \right\vert =\left\vert
\sum_{0\leq \left\Vert \ell \right\Vert _{1}<k}\frac{h^{\ell }}{\ell !}%
g^{\left( \ell \right) }\left( v\right) \int K\left( t\right) t^{\ell
}dt\right\vert +\left\vert \sum_{\left\Vert \ell \right\Vert
_{1}=k}h^{k}\int K\left( t\right) g^{\left( k\right) }\left( v+\tilde{t}%
h\right) \frac{t^{\ell }}{\ell !}dt\right\vert
\end{equation*}%
where the first term vanishes by the properties of a kernel of order $k$ and
second term is bounded by:%
\begin{equation*}
h^{k}\sum_{\left\Vert \ell \right\Vert _{1}=k}\frac{1}{\ell !}\left(
\sup_{u}\left\vert g^{\left( k\right) }\left( u\right) \right\vert \right)
\int \left\vert K\left( t\right) \right\vert \left\vert t\right\vert ^{\ell
}dt=O\left( h^{k}\right) \text{.}
\end{equation*}
\end{proof}

\bigskip

\begin{lemma}
\label{lemkernconv}Let $K$ be a $d$-dimensional kernel of order $r\geq 2$
that is Lipschitz and let $f_{V}\left( v\right) $ admits $r\geq 2$ uniformly
bounded continuous derivatives. Let $V\left( s\right) $ be stationary and
strongly mixing\footnote{%
This Lemma also holds under the weaker mixing conditions given %
\citet{carbon1997kernel}, but this extension is not spelled out here for
conciseness.} for $s\in \mathcal{S}$. Then,%
\begin{equation}
\sup_{v\in \mathcal{V}}\left\vert \hat{f}\left( v\right) -f\left( v\right)
\right\vert =O_{p}\left( \left( \frac{\ln n}{nh^{d}}\right) ^{1/2}\right)
+O\left( h^{r}\right)   \label{eqkernvarbias}
\end{equation}%
where $\mathcal{V}$ is a compact subset of $\mathbb{R}^{d}$ and $\hat{f}%
\left( v\right) =\left( nh^{d}\right) ^{-1}\sum_{i=1}^{n}K\left( \left(
V_{i}-v\right) /h\right) $ for $V_{i}\equiv V\left( S_{i}\right) $ where the
random $S_{i}$ take value in $\mathcal{S}$. Moreover, if $r>d$, selecting $%
h=n^{-1/\left( 2r\right) -\varepsilon }$ for $\varepsilon >0$ yields: $%
\sup_{v\in \mathcal{V}}\left\vert \hat{f}\left( v\right) -f\left( v\right)
\right\vert =o_{p}\left( n^{-1/4}\right) $ and $\sup_{v\in \mathcal{V}%
}\left\vert E\left[ \hat{f}\left( v\right) \right] -f\left( v\right)
\right\vert =o\left( n^{-1/2}\right) $.
\end{lemma}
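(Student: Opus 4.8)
The plan is to split the error by the triangle inequality into a stochastic part and a deterministic bias part,
\[
\sup_{v\in\mathcal{V}}\left|\hat f(v)-f(v)\right|\le \sup_{v\in\mathcal{V}}\left|\hat f(v)-E[\hat f(v)]\right|+\sup_{v\in\mathcal{V}}\left|E[\hat f(v)]-f(v)\right|,
\]
to bound each term separately, and to specialize the bandwidth only at the very end. The bias term is the easy half and involves no probability. Writing $E[\hat f(v)]=\int h^{-d}K((u-v)/h)f(u)\,du=\int K(t)f(v+th)\,dt$ via the change of variable $u=v+th$, I would run exactly the Taylor expansion used in the proof of Lemma \ref{lemnewey}: all terms of order strictly below $r$ vanish by the vanishing-moment property of an order-$r$ kernel (Definition \ref{defkorder}), and the order-$r$ remainder is bounded by $h^{r}\sum_{\|\ell\|_1=r}(\ell!)^{-1}(\sup_u|f^{(r)}(u)|)\int|K(t)||t|^{\ell}\,dt$. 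Since $f$ has uniformly bounded continuous $r$-th derivatives and the kernel has finite $r$-th absolute moments, this is $O(h^r)$ \emph{uniformly} in $v$, supplying both the $O(h^r)$ term in \eqref{eqkernvarbias} and (after substitution) the stated $o(n^{-1/2})$ bias rate.

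The substantive work, and the main obstacle, is the uniform stochastic bound $\sup_v|\hat f(v)-E[\hat f(v)]|=O_p((\ln n/(nh^d))^{1/2})$, which is where the spatial dependence matters. I would use the standard covering argument. Cover the compact set $\mathcal{V}$ by $N_n=O(\delta_n^{-d})$ balls of radius $\delta_n$ centered at points $v_1,\dots,v_{N_n}$. The Lipschitz property of $K$ gives $|\hat f(v)-\hat f(v_j)|\le C\delta_n h^{-(d+1)}$ (and the same bound for the expectations) whenever $\|v-v_j\|\le\delta_n$, so choosing $\delta_n$ to decay polynomially fast, e.g.\ $\delta_n=h^{d+1}n^{-1}$, makes the discretization error negligible relative to the target rate while keeping $N_n$ polynomial in $n$. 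It then suffices to control $\max_j|\hat f(v_j)-E[\hat f(v_j)]|$. At each fixed $v_j$, this quantity is a normalized sum of the centered, bounded (by $\|K\|_\infty$ after the $(nh^d)^{-1}$ scaling), strongly mixing variables $K((V_i-v_j)/h)$; a Bernstein-type exponential inequality for strongly mixing sequences, obtained by block decomposition as in \citet{carbon1997kernel}, yields a tail of the form $P(|\hat f(v_j)-E\hat f(v_j)|>\lambda)\le c_1\exp(-c_2\, nh^d\lambda^2)$ in the relevant range of $\lambda$. A union bound gives $P(\max_j|\cdot|>\lambda)\le c_1 N_n\exp(-c_2\, nh^d\lambda^2)$, and since $N_n$ is polynomial, taking $\lambda=M(\ln n/(nh^d))^{1/2}$ with $M$ large forces this probability to zero, delivering the claimed $O_p$ rate. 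The two delicate ingredients here are the spatial-mixing exponential inequality (which is precisely why the footnote defers to the weaker conditions of \citet{carbon1997kernel}) and the accompanying variance computation showing the effective variance remains $O((nh^d)^{-1})$ despite the dependence; I would import both from \citet{carbon1997kernel} rather than re-derive them.

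Finally, for the ``moreover'' claims I would substitute $h=n^{-1/(2r)-\varepsilon}$, so that $nh^d=n^{1-d/(2r)-d\varepsilon}$ and the variance term becomes $(\ln n)^{1/2}n^{-(1-d/(2r)-d\varepsilon)/2}$. Because $r>d$ gives $d/(2r)<1/2$ with strict slack, for all sufficiently small $\varepsilon>0$ the exponent exceeds $1/4$, so the variance term is $o_p(n^{-1/4})$; combined with the bias bound $O(h^r)=O(n^{-1/2-r\varepsilon})=o(n^{-1/4})$ this yields $\sup_v|\hat f(v)-f(v)|=o_p(n^{-1/4})$. The separate expectation statement $\sup_v|E[\hat f(v)]-f(v)|=o(n^{-1/2})$ is then immediate from the already-established uniform bias bound $O(h^r)=O(n^{-1/2-r\varepsilon})$.
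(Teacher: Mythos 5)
Your proposal is correct and takes essentially the same route as the paper's proof: the same triangle-inequality split into a stochastic term controlled via \citet{carbon1997kernel} (the paper simply invokes their Theorem 3.1 for the uniform $O_p\left(\left(\ln n/(nh^{d})\right)^{1/2}\right)$ rate, noting it is insensitive to the kernel order, where you unfold the covering-net/Bernstein argument behind that theorem before importing its key ingredients) plus a standard Taylor-expansion bias bound of $O(h^{r})$ uniform in $v$, followed by direct substitution of $h=n^{-1/(2r)-\varepsilon}$. Your explicit caveat that $\varepsilon$ must be sufficiently small for the $o_p(n^{-1/4})$ conclusion is a minor refinement the paper's statement leaves implicit, but it changes nothing of substance.
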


\begin{proof}
The results in \citet{carbon1997kernel} do not direcly handle the case of
higher-order kernels. However, noting that$\left\vert \hat{f}\left( v\right)
-f\left( v\right) \right\vert \leq \left\vert \hat{f}\left( v\right) -E\left[
\hat{f}\left( v\right) \right] \right\vert +\left\vert E\left[ \hat{f}\left(
v\right) \right] -f\left( v\right) \right\vert $ we observe that Theorem 3.1
in \citet{carbon1997kernel} directly implies that the first term is $%
O_{p}\left( \left( \frac{\ln n}{nh^{d}}\right) ^{1/2}\right) $ under our
assumptions, regardless of the order of the kernel.\ Next, the bias term, $E%
\left[ \hat{f}\left( v\right) \right] -f\left( v\right) $, which does not
depend on the spatial correlation structure, can be calculated in the
standard way (e.g. \citet{andrews:semikern}) to yield $O\left( h^{r}\right) $
for a multivariate $r$ order kernel. The specific rates for $h=n^{-1/\left(
2r\right) -\varepsilon }$ can be shown by direct substitution.
\end{proof}

\bigskip

\begin{lemma}
\label{lemwasass4}Under Assumptions \ref{asssupp}, \ref{asskern2}, \ref%
{assdensmo} and \ref{asslink}, we have
\begin{eqnarray*}
\text{(i)}&& \sup_{y,x,w\in \mathcal{Y}\times 
\mathcal{X}\times \mathcal{W}}\left\vert \hat{f}_{Y,X|W}\left( y,x|w\right)
-f_{Y,X|W}\left( y,x|w\right) \right\vert =o_{p}\left( n^{-1/4}\right),\\
\text{(ii)}&& \sup_{y,x,z,w\in \mathcal{Y}\times \mathcal{X}\times \mathcal{Z}\times 
\mathcal{W}}\left\vert \hat{f}_{Y,X,Z|W}\left( y,x,z|w\right)
-f_{Y,X,Z|W}\left( y,x,z|w\right) \right\vert =o_{p}\left( n^{-1/4}\right),\\
\text{(iii)}&&\ \sup_{y,x,w\in \mathcal{Y}\times \mathcal{X}\times \mathcal{W}%
}\left\vert E\left[ \hat{f}_{Y,X|W}\left( y,x|w\right) \right]
-f_{Y,X|W}\left( y,x|w\right) \right\vert =o\left( n^{-1/2}\right) \\
\text{(iv)}&& \sup_{y,x,z,w\in \mathcal{Y}\times \mathcal{X}\times \mathcal{Z}\times 
\mathcal{W}}\left\vert E\left[ \hat{f}_{Y,X,Z|W}\left( y,x,z|w\right) \right]
-f_{Y,X,Z|W}\left( y,x,z|w\right) \right\vert =o\left( n^{-1/2}\right)\text{ and} \\
\text{(v)}&& \sup_{y,x,z,w,\kappa \in \mathcal{Y}\times \mathcal{X}\times \mathcal{Z}%
\times \mathcal{W}\times \mathcal{K}}\left\Vert \nabla _{\kappa }\hat{f}%
\left( z|y,x,w;\kappa \right) -\nabla _{\kappa }f\left( z|y,x,w;\kappa
\right) \right\Vert =o_{p}\left( 1\right) .
\end{eqnarray*}

\end{lemma}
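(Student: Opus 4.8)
The plan is to reduce all five claims to the uniform kernel-convergence result of Lemma \ref{lemkernconv} applied to the transformed-variable estimator $\hat{f}_{\tilde{Y},\tilde{X},\tilde{Z}}$, and then to propagate the resulting rates through the smooth link map and through the estimated $\hat{\kappa}$. First I would treat the infeasible estimator $\hat{f}_{\tilde{Y},\tilde{X},\tilde{Z}}(\cdot;\kappa_0)$ obtained by plugging the true $\kappa_0$ into (\ref{eqfhattilde}). By the index structure of Assumption \ref{assindex}, the arguments $G^{-1}((Y_i,X_i,Z_i),W_i,\kappa_0)=(\tilde{Y}_i,\tilde{X}_i,\tilde{Z}_i)$ are precisely the stationary, strongly mixing draws (Assumption \ref{assmix}) from $f_{\tilde{Y}\tilde{X}\tilde{Z}}$, so, after adjoining the $\Delta s$ coordinate (smoothed over the observed spacings $\Delta S_i$), this is a bona fide $d$-dimensional kernel density estimator with $d=2d_x+d_y+d_s$. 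Under the kernel conditions of Assumption \ref{asskern2} (order $r>d$, Lipschitz, bounded, with $h=Cn^{-\varepsilon-1/(2r)}$) and the $r$-fold differentiability of Assumption \ref{assdensmo}, Lemma \ref{lemkernconv} delivers $\sup|\hat{f}_{\tilde{Y},\tilde{X},\tilde{Z}}(\cdot;\kappa_0)-f_{\tilde{Y},\tilde{X},\tilde{Z}}|=o_p(n^{-1/4})$ together with the bias bound $\sup|E[\hat{f}_{\tilde{Y},\tilde{X},\tilde{Z}}(\cdot;\kappa_0)]-f_{\tilde{Y},\tilde{X},\tilde{Z}}|=o(n^{-1/2})$; the same holds for the $(y,x)$-marginal estimator, which uses $K_{yx}$ of dimension $d_x+d_y+d_s<r$.

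Next I would transfer these rates to the original coordinates. By (\ref{eqfhatnotilde}), $\hat{f}_{Y,X,Z|W}(\cdot;\kappa_0)$ equals $\hat{f}_{\tilde{Y},\tilde{X},\tilde{Z}}(G^{-1}((y,x,z),w,\kappa_0);\kappa_0)$ times the Jacobian factor $J$. Under Assumption \ref{asslink}, $G^{-1}(\cdot,w,\kappa_0)$ is a continuously differentiable bijection with uniformly nonsingular Jacobian, so on the compact support guaranteed by Assumption \ref{asssupp}(ii) it maps into a compact set and $J$ is uniformly bounded; the sup-norm bound therefore passes through unchanged, yielding claim (ii) and, since the expectation commutes with the deterministic, $\kappa_0$-fixed transformation, the bias claims (iii)--(iv); claim (i) follows identically for $\hat{f}_{Y,X|W}$. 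To obtain (i)--(ii) for the feasible estimator I would write $\hat{f}(\cdot;\hat{\kappa})-f=(\hat{f}(\cdot;\hat{\kappa})-\hat{f}(\cdot;\kappa_0))+(\hat{f}(\cdot;\kappa_0)-f)$, bound the first difference by a mean-value expansion $\nabla_\kappa\hat{f}(\cdot;\dot{\kappa})(\hat{\kappa}-\kappa_0)$, and invoke $\|\hat{\kappa}-\kappa_0\|=O_p(n^{-1/2})$ from Assumption \ref{assnuinf} together with the $O_p(1)$ gradient bound supplied by the argument for (v); this gives $O_p(n^{-1/2})=o_p(n^{-1/4})$.

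Claim (v), the uniform-in-$\kappa$ convergence of the $\kappa$-gradient, is where I expect the main obstacle. Differentiating the kernel estimator with respect to $\kappa$ acts through the rescaled argument and therefore produces an extra factor $h^{-1}$ multiplying the kernel derivative and $\nabla_\kappa G^{-1}$ (bounded by Assumption \ref{asslink}(ii)); the quotient structure $\hat{f}(z|y,x,w;\kappa)=\hat{f}_{Y,X,Z|W}/\hat{f}_{Y,X|W}$ is then handled by the quotient rule, with the denominator bounded away from zero by Assumption \ref{asssupp}(i). The difficulty is that the extra $h^{-1}$ slows the variance rate: an analogue of Lemma \ref{lemkernconv} for the derivative estimator produces a variance term of order $(\ln n/(nh^{d+2}))^{1/2}$ and a derivative-bias term of order $h^{r-1}$, so I must verify that $nh^{d+2}\to\infty$ under $h=Cn^{-\varepsilon-1/(2r)}$. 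This requires $r$ sufficiently large relative to $d$ (equivalently $\varepsilon$ small), which is compatible with the paper's use of higher-order kernels and, crucially, suffices because (v) asks only for $o_p(1)$ rather than an $n^{-1/4}$ rate. The uniformity over $\kappa\in\mathcal{K}$ is then obtained by upgrading this pointwise control to a uniform law for the derivative kernel estimator, via a standard stochastic-equicontinuity/bracketing argument over the compact $\mathcal{K}$ that exploits the Lipschitz continuity of the kernel and its derivative together with the joint smoothness of $G^{-1}$ in $\kappa$ from Assumption \ref{asslink}(ii).
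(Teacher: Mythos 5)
Your proposal is correct, and for parts (i)--(iv) it follows essentially the same route as the paper: apply the uniform kernel-convergence result of Lemma \ref{lemkernconv} to the transformed vector $(\tilde{Y},\tilde{X},\tilde{Z},\Delta S)$ (stationary and strongly mixing, with density $r$-times differentiable by Assumption \ref{assdensmo}), then carry the sup-norm and bias rates back to the original coordinates through the link map, using boundedness of the Jacobian from Assumption \ref{asslink} and compactness from Assumption \ref{asssupp}(ii). Where you genuinely diverge is part (v) and the handling of $\hat{\kappa}$. The paper disposes of (v) in one stroke: it invokes Assumption \ref{asslink}(ii) (twice continuous differentiability of $\nabla_{\kappa}G^{-1}$, uniformly in $\kappa$) and re-applies Lemma \ref{lemkernconv} with $r=2$ to the gradient object, asserting uniformity in $\kappa$ from the uniformity of the assumption; it never tracks the extra $h^{-1}$ that differentiation through the rescaled kernel argument produces. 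Your direct analysis of the derivative estimator --- variance of order $\left( \ln n/(nh^{d+2})\right)^{1/2}$, verification that $nh^{d+2}\rightarrow \infty$ under $h=Cn^{-\varepsilon -1/(2r)}$ with $r>d$ (indeed $(d+2)/(2r)\leq (d+2)/(2(d+1))<1$, so only small $\varepsilon$ is needed, not ``$r$ large relative to $d$'' as you hedge), plus a stochastic-equicontinuity argument over the compact $\mathcal{K}$ --- is more explicit and arguably more careful, since $\nabla_{\kappa}\hat{f}$ is not literally a density estimator (the kernel derivative integrates to zero and carries the weights $\nabla_{\kappa}G^{-1}(V_{i},W_{i},\kappa)$), so the paper's terse reduction hides exactly the bookkeeping you supply. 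Finally, note that the paper proves (i)--(iv) only for the estimator evaluated at $\kappa_{0}$ (its proof says ``for $\hat{\kappa}=\kappa_{0}$'') and defers all $\kappa$-perturbation effects to the terms $\tilde{\Psi}_{\text{kernel}}^{2}$, $R_{2}$ and $R_{3}$ in the proof of Theorem \ref{thasslin}; your additional mean-value step combining (v) with Assumption \ref{assnuinf} to cover the feasible $\hat{f}(\cdot ;\hat{\kappa})$ is a harmless strengthening that replicates that bookkeeping inside the lemma itself.
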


\bigskip 

\begin{proof}
Consider the random vector
\[
V\equiv \left( \tilde{Y},\tilde{X},\tilde{Z}%
,\Delta S\right) =\left( G^{-1}\left( (Y,X,Y),W,\kappa _{0}\right)
,\Delta S\right)
\]
and observe that its joint density is $r$
times continuously differentiable and bounded by Assumptions \ref{assdensmo}.

Applying Lemma \ref{lemkernconv} to $V$ yields $\sup_{\tilde{y},\tilde{x},%
\tilde{z},\Delta s}\left\vert \hat{f}_{\tilde{Y},\tilde{X},\tilde{Z},\Delta
S}\left( \tilde{y},\tilde{x},\tilde{z},\Delta s\right) -f_{\tilde{Y},\tilde{X%
},\tilde{Z},\Delta S}\left( \tilde{y},\tilde{x},\tilde{z},\Delta s\right)
\right\vert =o_{p}\left( n^{-1/4}\right) $ and $\sup_{\tilde{y},\tilde{x},%
\tilde{z},\Delta s}\left\vert E\left[ \hat{f}_{\tilde{Y},\tilde{X},\tilde{Z}%
,\Delta S}\left( \tilde{y},\tilde{x},\tilde{z},\Delta s\right) \right] -f_{%
\tilde{Y},\tilde{X},\tilde{Z},\Delta S}\left( \tilde{y},\tilde{x},\tilde{z}%
,\Delta s\right) \right\vert =o\left( n^{-1/2}\right) $ (where the sups are
over compact sets) for the estimator (\ref{eqfhattilde}) in the main text
(for $\hat{\kappa}=\kappa _{0}$). By construction, these rates automatically
carry over to the estimator (\ref{eqfhatnotilde}), since the Jacobian
$=\left( \det \nabla _{(y,x,z)}^{\prime }G\left(
(y,x,z),w\right) \right) ^{-1}$
is bounded by Assumption %
\ref{asslink}. This establishes conclusions (i) and (ii) of the Lemma. A
similar reasoning can be used for $f_{Y,X|W}\left( y,x|w;\Delta
s\right) $ in order to establish (iii) and (iv).

For conclusion (v), we invoke Assumption \ref{asslink}(ii):
$\nabla _{\kappa } G^{-1}\left( (y,x,z),w,\kappa
\right) $
exists and is twice
uniformly continuously differentiable in $(y,x,z)$. Then, by the same reasoning as
above, we can use Lemma \ref{lemkernconv}, now with $r=2$, to conclude that $%
\left\Vert \nabla _{\kappa }\hat{f}\left( z,y,x,w;\kappa \right) -\nabla
_{\kappa }f\left( z,y,x,w;\kappa \right) \right\Vert $ is uniformly at most $%
o_{p}\left( 1\right) $, with uniformity in $\kappa $ holding because \ref%
{asslink}(ii) holds uniformly in $\kappa $.\ The same conclusion then holds
for $\nabla _{\kappa }\hat{f}\left( z|y,x,w;\kappa \right) -\nabla _{\kappa
}f\left( z|y,x,w;\kappa \right) $ since $f\left( z|y,x,w;\kappa \right)
=f\left( z,y,x|w;\kappa \right) /f\left( y,x|w;\kappa \right) $ with nonzero
denominator by Assumption \ref{asssupp}.
\end{proof}

\bigskip

The following Theorem collects the results found in Section 3 of %
\citet{bickel:boot}.

\begin{theorem}
\label{thbickel}Let $F$ denote an arbitrary cdf, $F_{0}$ denote the true
cdf, $F_{n}$ denote the empirical cdf for an iid sample of size $n$ and $%
F_{n}^{\ast }$ denote the empirical cdf of a bootstrap sample of size $n$
drawn wih replacement from a sample of size $n$. Let $g\left( F\right) $ be G%
\^{a}teau-differentiable at $F=F_{0}$ with derivative $\dot{g}\left(
F_{0}\right) $ representable as an integral:%
\begin{equation*}
\dot{g}\left( F_{0}\right) \left( F-F_{0}\right) \equiv \left[ \frac{%
\partial }{\partial \varepsilon }g\left( F_{0}+\varepsilon \left(
F-F_{0}\right) \right) \right] _{\varepsilon =0}=\int \psi \left(
x,F_{0}\right) dF\left( x\right) .
\end{equation*}%
If (i) $\int \left\Vert \psi \left( x,F_{0}\right) \right\Vert
^{2}dF_{0}\left( x\right) <\infty $ and (ii) $\int \left\Vert \psi \left(
x,F_{n}\right) -\psi \left( x,F_{0}\right) \right\Vert ^{2}dF_{n}\overset{as}%
{\longrightarrow }0$, then%
\begin{equation*}
n^{1/2}\left( g\left( F_{n}^{\ast }\right) -g\left( F_{n}\right) \right) 
\overset{d}{\longrightarrow }N\left( 0,\Omega \right)
\end{equation*}%
and%
\begin{equation*}
n^{1/2}\left( g\left( F_{n}\right) -g\left( F_{0}\right) \right)
=n^{-1/2}\sum_{i=1}^{n}\psi \left( x_{i},F_{0}\right) +o_{p}\left( 1\right) 
\overset{d}{\longrightarrow }N\left( 0,\Omega \right)
\end{equation*}%
for $\Omega =E\left[ \psi \left( x,F_{0}\right) \psi ^{\prime }\left(
x,F_{0}\right) \right] $.
\end{theorem}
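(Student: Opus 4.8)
The plan is to reduce both limit statements to the asymptotic behaviour of a single sample average of the influence function $\psi(\cdot,F_0)$ via a von Mises linearization, exactly as assembled in Section 3 of \citet{bickel:boot}. A preliminary observation does most of the setup work: substituting $F=F_0$ into the integral representation $\dot g(F_0)(F-F_0)=\int\psi(x,F_0)\,dF(x)$ makes the left-hand side vanish, so $\int\psi(x,F_0)\,dF_0(x)=0$. Thus $\psi(\cdot,F_0)$ is mean-zero under $F_0$, condition (i) makes it square-integrable, and $\Omega=E[\psi\psi']$ is a well-defined finite covariance. Because $\int\psi\,dF_0=0$ we may freely write $\dot g(F_0)(F-F_0)=\int\psi(x,F_0)\,d(F-F_0)(x)$.

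For the second conclusion (the sampling distribution of $g(F_n)$) I would expand $g(F_n)-g(F_0)=\dot g(F_0)(F_n-F_0)+R_n$. The leading term is $\int\psi(x,F_0)\,d(F_n-F_0)(x)=n^{-1}\sum_{i=1}^n\psi(x_i,F_0)$, so $n^{1/2}(g(F_n)-g(F_0))=n^{-1/2}\sum_i\psi(x_i,F_0)+n^{1/2}R_n$; the leading sum converges to $N(0,\Omega)$ by the Lindeberg--L\'evy CLT under condition (i), and it remains only to show $n^{1/2}R_n=o_p(1)$. Here differentiability is combined with condition (ii): the $L^2(F_n)$-continuity of $\psi$ in its distributional argument is what converts mere G\^ateaux differentiability into a remainder that is uniformly negligible along the random direction $F_n-F_0$, yielding both the asymptotically linear representation and the stated normal limit.

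For the first (bootstrap) conclusion I would instead linearize about the resampling centre, $g(F_n^\ast)-g(F_n)=\int\psi(x,F_n)\,d(F_n^\ast-F_n)(x)+R_n^\ast$, so that the leading term carries the plug-in influence function $\psi(\cdot,F_n)$. The crucial step is to replace $\psi(\cdot,F_n)$ by the fixed $\psi(\cdot,F_0)$ inside the bootstrapped empirical measure. Conditionally on the data, $n^{1/2}\int h\,d(F_n^\ast-F_n)(x)$ has variance at most $\int h^2\,dF_n$, so taking $h=\psi(\cdot,F_n)-\psi(\cdot,F_0)$ bounds the substitution error by $\int\|\psi(x,F_n)-\psi(x,F_0)\|^2\,dF_n$, which condition (ii) sends to zero almost surely. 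What remains, $n^{1/2}\int\psi(x,F_0)\,d(F_n^\ast-F_n)(x)$, is the ordinary bootstrapped mean of the fixed, mean-zero, square-integrable function $\psi(\cdot,F_0)$, and the foundational Bickel--Freedman bootstrap CLT for sample means gives conditional (almost sure) convergence to $N(0,\Omega)$, with the same $\Omega$ because resampling reproduces the $F_0$-variance of $\psi$.

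The main obstacle in both halves is the remainder: G\^ateaux differentiability alone does not guarantee that the first-order expansion holds to order $o_p(n^{-1/2})$ along a \emph{random} perturbation, and for the bootstrap it must hold conditionally along $F_n^\ast-F_n$, which is ``further'' from $F_0$ than $F_n-F_0$ is. Converting condition (ii) into uniform negligibility of $R_n$ and (conditionally) of $R_n^\ast$ is precisely the technical content of Section 3 of \citet{bickel:boot}; in practice I would verify that hypotheses (i) and (ii) here coincide with theirs and invoke the relevant theorems rather than reproduce the empirical-process and Mallows-metric machinery.
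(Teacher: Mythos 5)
Your proposal is correct and ultimately takes the same route as the paper, which offers no proof of this theorem at all: it simply states that the result ``collects the results found in Section 3 of \citet{bickel:boot},'' exactly the citation you fall back on after your sketch. Your intermediate steps --- the mean-zero property of $\psi(\cdot,F_0)$ obtained by setting $F=F_0$ in the integral representation, the Lindeberg--L\'evy CLT for the linear term, the conditional variance bound $\int \left\Vert \psi(x,F_n)-\psi(x,F_0)\right\Vert^2 dF_n$ controlling the substitution of the plug-in influence function, and the Bickel--Freedman bootstrap CLT for the resampled mean --- are a faithful reconstruction of the cited argument, and you correctly identify that G\^ateaux differentiability alone does not control the remainders, deferring that (as the paper implicitly does) to the machinery of \citet{bickel:boot}.
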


This result can be extended to dependent data using a standard
\textquotedblleft blocking\textquotedblright\ device (e.g. %
\citet{carlstein:blockboot}, \citet{nordman2007optimal}).

\pagebreak

\section{Additional simulations}
\label{appaddsim}

\begin{table}[!ht]\caption{Simulation Results --- Effect of sieve truncation}
\begin{center}
\begin{tabular}{ | l | l | l | l |}
\hline
$\theta_1 = -3.5$	& {Mean} & {Standard deviation} & {RMSE} \\ \hline
	{Unweighted Spatial; low Sieve} & -4.38 & 0.24 & 0.91 \\ \hline
	{weighted Spatial; low Sieve} & -4.32 & 0.23 & 0.85 \\ \hline
	{Unweighted Spatial; medium Sieve} & -3.58 & 0.19 & 0.20 \\ \hline
	{weighted Spatial; medium Sieve} & -3.58 & 0.19 & 0.20\\ \hline
	{Unweighted Spatial; high Sieve} & -3.60 & 0.11 & 0.15 \\ \hline
	{weighted Spatial; high Sieve} & -3.59 & 0.08 & 0.12 \\ \hline
\end{tabular}

\medskip

\begin{tabular}{ | l | l | l | l |}
\hline
$\theta_2 = 2$ & {Mean} & {Standard deviation} & {RMSE} \\ \hline
	{Unweighted Spatial; low Sieve} & 2.26 & 0.06 & 0.27 \\ \hline
	{weighted Spatial; low Sieve} & 2.25 & 0.06 & 0.26  \\ \hline
	{Unweighted Spatial; medium Sieve} & 2.05 & 0.06 & 0.07 \\ \hline
	{weighted Spatial; medium Sieve} & 2.05 & 0.06 & 0.08\\ \hline
	{Unweighted Spatial; high Sieve} & 2.04 & 0.05 & 0.07 \\ \hline
	{weighted Spatial; high Sieve} & 2.04 & 0.04 & 0.06 \\ \hline
\end{tabular}

\medskip

\begin{tabular}{ | l | l | l | l |}
\hline
$\sigma_u = 1.3$ & {Mean} & {Standard deviation} & {RMSE} \\ \hline
	{Unweighted Spatial; low Sieve} & 1.17 & 0.06 & 0.14 \\ \hline
	{weighted Spatial; low Sieve} & 1.23 & 0.05 & 0.08 \\ \hline
	{Unweighted Spatial; medium Sieve} & 1.34 & 0.03 & 0.05 \\ \hline
	{weighted Spatial; medium Sieve} & 1.34 & 0.03 & 0.05\\ \hline
	{Unweighted Spatial; high Sieve} & 1.24 & 0.05 & 0.08 \\ \hline
	{weighted Spatial; high Sieve} & 1.23 & 0.04 & 0.07 \\ \hline
\end{tabular}
\end{center}
Unweighted Spatial: unweighted average spatial estimator; Weighted Spatial: optimally weighted estimator. With Section 3's notations, low Sieve means $i_n = j_n = 2$; medium Sieve means $i_n = j_n = 4$; high Sieve means $i_n = j_n = 6$
\end{table}

\pagebreak

\begin{table}[!ht]\caption{Simulations with median centering and probit. }
\label{mode_nonlin_results}
\[\begin{array}{ | l | l | l | l | l |}
\hline
\mbox{Specification}, \theta_1 & \mbox{Estimator} & \mbox{Mean} & \mbox{Standard deviation} & \mbox{RMSE} \\ \hline
	\multirow{4}{*}{Median centering, -3.5} & \mbox{Infeasible OLS} & -3.52 & 0.11 & 0.11\\ \cline{2-5}
    & \mbox{OLS} & -1.04 & 0.15 & 2.46 \\ \cline{2-5}
	& \mbox{Unweighted Spatial} & -3.72 & 0.19 & 0.30 \\ \cline{2-5}
	& \mbox{Weighted Spatial} & -3.75 & 0.30 & 0.39 \\ \hline
    \multirow{4}{*}{Probit, 0} & \mbox{Infeasible MLE} & -0.01 & 0.14 & 0.14 \\ \cline{2-5}
	& \mbox{MLE} & 0.44 & 0.12 & 0.46\\ \cline{2-5}
	& \mbox{Unweighted Spatial} & 0.10 & 0.04 & 0.11 \\ \cline{2-5}
	& \mbox{Weighted Spatial} & 0.09 & 0.04 & 0.10 \\ \hline
\end{array}
\]
\[\begin{array}{ | l | l | l | l | l |}
\hline
\mbox{Specification}, \theta_2 & \mbox{Estimator} & \mbox{Mean} & \mbox{Standard deviation} & \mbox{RMSE} \\ \hline
    \multirow{4}{*}{Median centering, 2} & \mbox{Infeasible OLS} & 2.01 & 0.03 & 0.03 \\ \cline{2-5}
    & \mbox{OLS} & 1.28	& 0.04 & 0.73 \\ \cline{2-5}
    & \mbox{Unweighted Spatial} & 2.04 & 0.06 & 0.07 \\ \cline{2-5}
	& \mbox{Weighted Spatial} & 2.05 & 0.08 & 0.09 \\ \hline
     \multirow{4}{*}{Probit, 1/3} & \mbox{Infeasible MLE} & 0.33 & 0.04 & 0.04\\ \cline{2-5}
	& \mbox{MLE} & 0.2 & 0.03 & 0.14\\ \cline{2-5}
	& \mbox{Unweighted Spatial} & 0.30 & 0.03 & 0.04 \\ \cline{2-5}
	& \mbox{Weighted Spatial} & 0.31 & 0.03	& 0.04 \\ \hline
\end{array}
\]
\[\begin{array}{ | l | l | l | l | l |}
\hline
\mbox{Specification}, \sigma_u & \mbox{Estimator} & \mbox{Mean} & \mbox{Standard deviation} & \mbox{RMSE} \\ \hline
	\multirow{4}{*}{Median centering, 1.3} & \mbox{Infeasible OLS} & 1.30 & 0.02 & 0.02\\ \cline{2-5}
    & \mbox{OLS} & 1.76 & 0.04 & 0.46\\ \cline{2-5}
    & \mbox{Unweighted Spatial} & 1.12 & 0.09 & 0.20 \\ \cline{2-5}
	& \mbox{Weighted Spatial} & 1.12 & 0.09 & 0.20 \\ \hline
\end{array}\]
The sample size is 1500; Unweighted Spatial refers to our unweighted average spatial estimator; Weighted Spatial is the optimally weighted average spatial estimator. 
\end{table}

\begin{table}[!ht]\caption{Simulations with semiparametric handling of a covariate.}\label{covariates_results}
\[\begin{array}{ | l | l | l | l |}
\hline
\theta_1 = -3.5 & \mbox{Mean} & \mbox{Standard deviation} & \mbox{RMSE} \\ \hline
	\mbox{Infeasible OLS} & -3.50 & 0.13 & 0.12 \\ \hline
	\mbox{OLS} & -0.72 & 0.14 & 2.71 \\ \hline
	\mbox{Unweighted Spatial} & -3.75 & 0.05 & 0.25 \\ \hline
	\mbox{Weighted Spatial}  & -3.75 & 0.04 & 0.24 \\ \hline
\end{array}\]
\[\begin{array}{ | l | l | l | l |}
\hline
\theta_2 = 2 & \mbox{Mean} & \mbox{Standard deviation} & \mbox{RMSE} \\ \hline
    \mbox{Infeasible OLS} & 2.00 & 0.04 & 0.04 \\ \hline
	\mbox{OLS} & 0.91 & 0.04 & 1.06 \\ \hline
	\mbox{Unweighted Spatial} & 2.09 & 0.02 & 0.09 \\ \hline
	\mbox{Weighted Spatial} & 2.09 &0.03 & 0.09 \\ \hline
\end{array}\]
\[\begin{array}{ | l | l | l | l |}
\hline
\sigma_u = 1.3& \mbox{Mean} & \mbox{Standard deviation} & \mbox{RMSE} \\ \hline
	\mbox{Infeasible OLS} & 1.30 & 0.02 & 0.02 \\ \hline
	\mbox{OLS} & 1.69 & 0.03 & 0.38 \\ \hline
	\mbox{Unweighted Spatial} & 1.35 & 0.06 & 0.07 \\ \hline
	\mbox{Weighted Spatial} & 1.35 & 0.06 & 0.08 \\ \hline
\end{array}\]
Sample size is 1500; Unweighted Spatial refers to our unweighted average spatial estimator; Weighted Spatial is the optimally weighted average spatial estimator.
\end{table}

\begin{table}[!ht]\caption{Coverage}\label{coverage_results}
\[\begin{array}{ | l | l | l | l |}
\hline
\mbox{Specification} & \mbox{Estimator} & \theta_1 & \theta_2 \\ \hline
\multirow{4}{*}{Median centering} & \mbox{Infeasible OLS} & 0.97 & 0.96\\ \cline{2-4}
    & \mbox{OLS} & 0 & 0\\ \cline{2-4}
    & \mbox{Unweighted Spatial} & 0.99 & 0.99 \\ \cline{2-4}
	& \mbox{Weighted Spatial} & 0.97 & 0.98 \\ \hline

\multirow{4}{*}{Probit} & \mbox{Infeasible MLE} & 0.96 & 0.95\\ \cline{2-4}
    & \mbox{MLE} & 0.03 & 0.03\\ \cline{2-4}
    & \mbox{Unweighted Spatial} & 0.69 & 0.72 \\ \cline{2-4}
	& \mbox{Weighted Spatial} & 0.68 & 0.74 \\ \hline

\multirow{4}{*}{Covariate} & \mbox{Infeasible OLS} & 0.96 & 0.98\\ \cline{2-4}
    & \mbox{OLS} & 0 & 0\\ \cline{2-4}
    & \mbox{Unweighted Spatial} & 0.52 & 0.89 \\ \cline{2-4}
	& \mbox{Weighted Spatial} & 0.50 & 0.90 \\ \hline
\end{array}\]
\caption*{Coverage performance of 95\% confidence intervals.}
\end{table}

\pagebreak 

\section{Block-bootstrap description}

We sample blocks of observations to preserve spatial correlations using the following algorithm.

\subsection{Algorithm}

For each bootstrap sample $b=1, \ldots, B$, \\
1. Sample a location among those in the sample, i.e., uniformly sample from $\{s_i, i=1, \ldots, n\}$ and center a block of size $l_1 \times l_2$ around it. Record the block structure and the observations within it. Repeat the step until the resulting collection of blocks contains $n$ observations in total. \\
2. Arrange the blocks in a disposition similar to the original spatial structure, i.e., (i) start laying out blocks horizontally until their combined length reaches the horizontal length of the (first $l_2$ units of the) original spatial structure; (ii) repeat with a second row of blocks, etc., until blocks are exhausted. \\
3. Compute the estimator $\hat{\theta}_b$ using the bootstrapped sample constructed from step 1-2. \medskip

The procedure provides a collection of estimates $\{\hat{\theta}_1, \ldots, \hat{\theta}_{B}\}$. One can then compute bootstrap standard errors and use these to form a 95\% confidence interval.


\end{document}